\DeclareMathAlphabet{\mathpzc}{OT1}{pzc}{m}{it}
\definecolor{Darkgreen}{rgb}{0,0.4,0}
\definecolor{wine-stain}{rgb}{0.5,0,0}
\def\R{\mathbb{R}}
\def\eps{\varepsilon}
\def\D{{\mathcal D}}
\def\E{{\mathbb E}}
\def\I{{\mathcal I}}
\def\C{{\mathcal C}}
\def\Q{{\mathcal Q}}
\def\X{{\mathcal X}}
\def\Y{{\mathcal Y}}
\def\M{{\mathcal M}}
\def\N{{\mathcal N}}
\def\L{{\mathcal L}}
\def\Z{{\mathcal Z}}
\def\U{{\mathcal U}}
\def\V{{\mathcal V}}
\def\indep{{\perp\!\!\!\perp}}
\def\sBer{{\mathsf{Bernoulli}}}
\def\sW{{\mathsf W}}
\def \var {{\mathsf {var}   }}
\def \mmse {{\mathsf {mmse}   }}
\newcounter{example}
\newenvironment{example}[1][]{\refstepcounter{example}\par\medskip
   \noindent \textit{Example~\theexample. #1} \rmfamily}{\medskip}
\newtheorem{definition}{Definition}
\newtheorem{theorem}{Theorem}
\newtheorem{corollary}{Corollary}
\newtheorem{lemma}{Lemma}
\theoremstyle{remark}
\newtheorem{remark}{Remark}
\newcommand{\markov}{\mathrel\multimap\joinrel\mathrel-%
\mspace{-9mu}\joinrel\mathrel-}
\tikzstyle{RectObject}=[rectangle,fill=white,draw,line width=0.2mm]
\tikzstyle{line}=[draw]
\tikzstyle{arrow}=[draw, -latex]
\begin{document}

\title{\vspace{5.5mm}{Information Extraction Under Privacy Constraints\thanks{Parts of the results in this paper were presented at the 52nd Allerton Conference on Communications,
Control and Computing \cite{Asoodeh_Allerton} and the 14th Canadian Workshop on Information Theory \cite{Asoodeh_CWIT}.}}}
\author{\IEEEauthorblockN{Shahab Asoodeh, Mario Diaz, Fady Alajaji, and Tam\'{a}s Linder}\\
    \IEEEauthorblockA{\normalsize{Department of Mathematics and Statistics, Queen's University}
    \\\{asoodehshahab, 13madt, fady, linder\}@mast.queensu.ca}}
\restoregeometry
\maketitle

\begin{abstract}
A privacy-constrained information extraction problem is considered
where for a pair of correlated discrete random variables $(X,Y)$
governed by a given joint distribution, an agent observes $Y$ and
wants to convey to a potentially public user as much information
about $Y$ as possible without compromising the amount of information
revealed about $X$. To this end, the so-called {\em rate-privacy function}
is investigated to quantify the maximal amount of information (measured in
terms of mutual information) that can be extracted from $Y$ under a
privacy constraint between $X$ and the extracted information, where
privacy is measured using either mutual information or maximal
correlation. Properties of the rate-privacy function are analyzed
and information-theoretic and estimation-theoretic interpretations
of it are presented for both the mutual information and maximal
correlation privacy measures. It is also shown that the rate-privacy
function admits a closed-form expression for a large family of joint
distributions of $(X,Y)$. Finally, the rate-privacy function under the
mutual information privacy measure is considered for the case where
$(X,Y)$ has a joint probability density function by studying the problem
where the extracted information is a uniform quantization of $Y$ corrupted
by additive Gaussian noise. The asymptotic behavior of the
rate-privacy function is studied as the quantization resolution grows
without bound and it is observed that not all of the properties of the
rate-privacy function carry over from the discrete to the continuous
case.
\end{abstract}
\begin{IEEEkeywords}
Data privacy, equivocation, rate-privacy function, information theory, MMSE and additive channels, mutual information, maximal correlation.
\end{IEEEkeywords}

\section{Introduction}

With the emergence of user-customized services, there is an increasing desire to balance between the need to share data and the need to protect sensitive and private information. For example, individuals who join a social network are asked to provide information about themselves which might compromise their privacy. However, they agree to do so, to some extent, in order to benefit from the customized
services such as recommendations and personalized searches. As another example, a participatory technology for estimating road traffic requires each individual to provide her start and destination points as well as the travel time. However, most participating individuals prefer to provide somewhat distorted or false information to protect their privacy. Furthermore, suppose a software company wants to gather statistical information on how people use its software. Since many users might have used the software to handle some personal or sensitive information -for example, a browser for anonymous web
surfing or a financial management software- they may not want to share their data with the company. On the other hand, the company cannot legally collect the raw data either, so it needs to entice its users. In all these situations, a tradeoff in a conflict between utility advantage and privacy breach is required and the question is how to achieve this tradeoff. For example, how can a company collect high-quality aggregate information about users while strongly guaranteeing to its
users that it is not storing user-specific information?

To deal with such privacy considerations, Warner \cite{warner} proposed the \emph{randomized response model} in which each individual user randomizes her own data using a local randomizer (i.e., a noisy channel) before sharing the data to an untrusted data collector to be aggregated. As opposed to \emph{conditional security}, see e.g. \cite{computational_privacy, computational_privacy2, computational_privacy3}, the randomized response model assumes that the adversary can have unlimited computational power and thus it provides \emph{unconditional} privacy. This model, in which the control of private data remains in the users' hands, has been extensively studied since Warner. As a special case of the randomized response model, Duchi et al.\ \cite{privacyaware}, inspired by the well-known privacy guarantee called differential privacy introduced by Dwork et al.\ \cite{dwork1, dwork2, dwork3},  introduced locally differential privacy (LDP).  Given a random variable $X\in\X$, another random variable $Z\in\Z$ is said to be the $\eps$-LDP version of $X$ if there exists a channel $Q:X\to Z$ such that $\frac{Q(B|x)}{Q(B|x')}\leq \exp(\eps)$ for all measurable $B\subset \Z$ and all $x,x'\in\X$. The channel $Q$ is then called as the $\eps$-LDP mechanism. Using Jensen's inequality, it is straightforward to see that any $\eps$-LDP mechanism leaks at most $\eps$ bits of private information, i.e., the mutual information between $X$ and $Z$ satisfies $I(X,Z)\leq \eps$.

There have been numerous studies on the tradeoff between privacy and utility for different examples of randomized response models with different choices of utility and privacy measures. For instance, Duchi et al.\ \cite{privacyaware} studied the optimal  $\eps$-LDP mechanism $\M:X\to Z$ which minimizes the risk of estimation of a parameter $\theta$ related to $P_X$. Kairouz et al.\ \cite{Kairouz_PHD} studied an optimal $\eps$-LDP mechanism in the sense of mutual information, where an individual would like to release an $\eps$-LDP version $Z$ of $X$ that preserves as much information about $X$ as possible. Calmon et al.\ \cite{Calmon_bounds_Inference} proposed a novel privacy measure (which includes maximal correlation and chi-square correlation) between $X$ and $Z$ and studied the optimal privacy mechanism (according to their privacy measure) which minimizes the error probability $\Pr(\hat{X}(Z)\neq X)$ for any estimator $\hat{X}:Z\to X$.

In all above examples of randomized response models, given a private source, denoted by $X$, the mechanism generates $Z$ which can be publicly displayed without breaching the desired privacy level. However, in a more realistic model of privacy, we can assume that for any given private data $X$, nature generates $Y$, via a fixed channel $P_{Y|X}$. Now we aim to release a public display $Z$ of $Y$ such that the amount of information in $Y$ is preserved as much as possible while $Z$ satisfies a privacy constraint with respect to $X$. Consider two communicating agents Alice and Bob. Alice collects all her measurements from an observation into a random variable $Y$ and ultimately wants to reveal this information to Bob in order to receive a payoff. However, she is worried about her private data, represented by $X$, which is correlated with $Y$. For instance, $X$ might represent her precise location and $Y$ represents  measurement of traffic load of a route she has taken. She wants to reveal these measurements to an online road monitoring system to received some utility. However, she does not want to reveal too much information about her exact location. In such situations, the utility is measured  with respect to $Y$ and privacy is measured with respect to $X$. The question raised in this situation then concerns the maximum payoff Alice can get from Bob (by revealing $Z$ to him) without compromising her privacy. Hence, it is of interest to characterize such competing objectives in the form of a quantitative tradeoff. Such a characterization provides a controllable balance between utility and privacy.

This model of privacy first appears in Yamamoto's work \cite{yamamotoequivocationdistortion} in which the rate-distortion-equivocation function is defined as the tradeoff between a distortion-based utility and privacy. Recently, Sankar et al.\ \cite{Lalitha_Forensics}, using the quantize-and-bin scheme \cite{Tandon_Quantize_bin}, generalized Yamamoto's model to study privacy in databases from an information-theoretic point of view. Calmon and Fawaz \cite{Calmon_privacy_Aganist} and Monedero et al.\ \cite{t_closeness} also independently used distortion and mutual information for utility and privacy, respectively, to define a privacy-distortion function which resembles the classical rate-distortion function. More recently, Makhdoumi et al.\ \cite{Funnel} proposed to use mutual information for both utility and privacy measures and defined the \emph{privacy funnel} as the corresponding privacy-utility tradeoff, given by
\begin{equation}\label{Dual_gEpsilon}
  t_{R}(X;Y):=\min_{\substack{P_{Z|Y}:X\markov Y\markov Z\\I(Y; Z)\geq R}} I(X;Z),
\end{equation}
where $X\markov Y\markov Z$ denotes that $X, Y$ and $Z$ form a Markov chain in this order. Leveraging well-known algorithms for the information bottleneck problem \cite{information_bottleneck}, they provided a locally optimal greedy algorithm to evaluate $t_R(X;Y)$. Asoodeh et al.\ \cite{Asoodeh_Allerton}, independently, defined the \emph{rate-privacy function}, $g_{\eps}(X;Y)$, as the maximum achievable $I(Y; Z)$ such that $Z$ satisfies $I(X;Z)\leq\eps$, which is a dual representation of the privacy funnel \eqref{Dual_gEpsilon}, and showed that for discrete $X$ and $Y$, $g_0(X;Y)>0$ if and only if $X$ is \emph{weakly independent} of $Y$ (cf, Definition~\ref{definition_weakly_ind}). Recently, Calmon et al.\ \cite{Calmon_fundamental-Limit} proved an equivalent result for $t_R(X;Y)$ using a different approach. They also obtained lower and upper bounds for $t_{R}(X;Y)$ which can be easily translated to bounds for $g_{\eps}(X;Y)$ (cf. Lemma\ref{non-increasing-Lemma}). In this paper, we develop further properties of $g_{\eps}(X;Y)$ and also determine necessary and sufficient conditions on $P_{XY}$, satisfying some symmetry conditions, for $g_{\eps}(X;Y)$ to achieve its upper and lower bounds.

The problem treated in this paper can also be contrasted with the better-studied concept of \emph{secrecy} following the pioneering work of Wyner \cite{Wyner_wiretap}. While in secrecy problems the aim is  to keep information secret only from wiretappers, in privacy problems the aim is to keep the private information (not necessarily all the information) secret from everyone including the intended receiver.

\subsection{Our Model and Main Contributions}

 Using mutual information as measure of both utility and privacy, we formulate the corresponding privacy-utility tradeoff for discrete random variables $X$ and $Y$ via the rate-privacy function, $g_{\eps}(X;Y)$, in which the mutual information between $Y$ and displayed data (i.e., the mechanism's output), $Z$, is maximized over all channels $P_{Z|Y}$ such that the mutual information between $Z$ and $X$ is no larger than a given $\eps$. We also formulate a similar rate-privacy function $\hat{g}_{\eps}(X;Y)$ where the privacy is measured in terms of the squared maximal correlation, $\rho_m^2$, between,  $X$ and $Z$. In studying $g_{\eps}(X;Y)$ and $\hat{g}_{\eps}(X;Y)$, any channel $Q:Y\to Z$ that satisfies $I(X;Z)\leq \eps$ and $\rho_m^2(X;Z)\leq \eps$, preserves the desired level of privacy and is hence called a \emph{privacy filter}.
  Interpreting $I(Y;Z)$ as the number of bits that a privacy filter can reveal about $Y$ without compromising privacy, we present the rate-privacy function as a formulation of the problem of maximal \emph{privacy-constrained information extraction} from $Y$.

  We remark that using maximal correlation as a privacy measure is by no means new as it appears in other works, see e.g., \cite{Fawaz_Makhdoumi}, \cite{MaximalCorr_Secrecy} and \cite{Calmon_bounds_Inference} for different utility functions.  We do not put any likelihood constraints on the privacy filters as opposed to the definition of LDP. In fact, the optimal privacy filters that we obtain in this work induce channels $P_{Z|X}$ that do not satisfy the LDP property.

 The quantity $g_{\eps}(X;Y)$ is related  to a notion of the \emph{reverse} strong data processing inequality as follows. Given a joint distribution $P_{XY}$, the strong data processing coefficient was introduced in \cite{Ahlswede_Gacs} and \cite{Anantharam}, as the smallest $s(X;Y)\leq 1$ such that $I(X;Z)\leq s(X;Y) I(Y;Z)$ for all $P_{Z|Y}$ satisfying the Markov condition $X\markov Y\markov Z$. In the rate-privacy function, we instead seek an upper bound on the maximum achievable rate at which $Y$ can display information, $I(Y;Z)$, while meeting the privacy constraint $I(X;Z)\leq \eps$.  The connection between the rate-privacy function and the strong data processing inequality is further studied in \cite{Calmon_fundamental-Limit} to mirror all the results of \cite{Anantharam} in the context of privacy.

 The contributions of this work are as follows:
 \begin{itemize}
  \item We study lower and upper bounds of $g_{\eps}(X;Y)$.  The lower bound, in particular, establishes a multiplicative bound on $I(Y;Z)$ for any optimal privacy filter. Specifically,  we show that for a given $(X,Y)$ and $\eps>0$ there exists a channel $Q:Y\to Z$ such that $I(X;Z)\leq \eps$ and
         \begin{equation}\label{intuitive-Definition_Gepsilon}
       I(Y;Z)\geq \lambda(X;Y) \eps,
     \end{equation}
 where $\lambda(X;Y)\geq 1$ is a constant depending on the joint distribution $P_{XY}$.
 We then give conditions on $P_{XY}$ such that the upper and lower bounds are tight. For example, we show that the lower bound is achieved when $Y$ is binary and the channel from $Y$ to $X$ is symmetric. We show that this corresponds to the fact that both $Y=0$ and $Y=1$ induce distributions $P_{X|Y}(\cdot|0)$ and $P_{X|Y}(\cdot|1)$ which are equidistant from $P_X$ in the sense of Kullback-Leibler divergence. We then show that the upper bound is achieved when $Y$ is an erased version of $X$, or equivalently, $P_{Y|X}$ is an erasure channel.

 \item We propose an information-theoretic setting in which $g_{\eps}(X;Y)$ appears as a natural upper-bound for the achievable rate in the so-called "dependence dilution" coding problem.
     Specifically, we examine the joint-encoder version of an \emph{amplification-masking tradeoff}, a setting recently introduced by Courtade \cite{Courtade_Aplification} and we show that the dual of $g_{\eps}(X;Y)$ upper bounds the masking rate. We also present an estimation-theoretic motivation for the privacy measure $\rho_m^2(X;Z)\leq \eps$. In fact, by imposing $\rho_m^2(X;Y)\leq \eps$, we require that an adversary who observes $Z$ cannot efficiently estimate $f(X)$, for any function $f$. This is reminiscent of \emph{semantic security} \cite{Goldwasser1984270} in the cryptography community. An encryption mechanism is said to be semantically secure if the adversary's advantage for correctly guessing \emph{any function} of the privata data given an observation of the mechanism's output (i.e., the ciphertext) is required to be negligible. This, in fact, justifies the use of maximal correlation as a measure of privacy.  The use of mutual information as privacy measure can also be justified using Fano's inequality. Note that $I(X;Z)\leq \eps$ can be shown to imply that $\Pr(\hat{X}(Z)\neq X)\geq \frac{H(X)-1-\eps}{\log(|\X|)}$ and hence the probability of adversary correctly guessing $X$ is lower-bounded.

\item We also study the rate of increase $g'_0(X;Y)$ of $g_{\eps}(X;Y)$ at $\eps=0$ and show that this rate can characterize the behavior of $g_{\eps}(X;Y)$ for any $\eps\geq 0$ provided that $g_0(X;Y)=0$. This again has connections with the results  of \cite{Anantharam}. Letting
    $$\Gamma(R):=\max_{P_{Z|Y}:X\markov Y \markov Z\atop I(Y;Z)\leq R}I(X;Z),$$ one can easily show that $\Gamma'(0)=\lim_{R\to 0}\frac{\Gamma(R)}{R}=s(X;Y),$ and hence the rate of increase of $\Gamma(R)$ at $R=0$ characterizes the strong data processing coefficient. Note that here we have $\Gamma(0)=0$.

\item Finally, we generalize the rate-privacy function to the continuous case where $X$ and $Y$ are both continuous and show that some of the properties of $g_{\eps}(X;Y)$ in the discrete case do not carry over to the continuous case. In particular, we assume that the privacy filter belongs to a family of additive noise channels followed by an $M$-level uniform scalar quantizer and give asymptotic bounds as $M\to \infty$ for the rate-privacy function.

\end{itemize}
\subsection{Organization}

 The rest of the paper is organized as follows. In Section 2, we define and study the rate-privacy function for discrete random variables for two different privacy measures, which, respectively, lead to the information-theoretic and estimation-theoretic interpretations of the rate-privacy function. In Section~3, we provide such interpretations for the rate-privacy function in terms of quantities from information and estimation theory.  Having obtained lower and upper bounds of the rate-privacy function, in Section~4 we determine the conditions on $P_{XY}$ such that these bounds are tight. The rate-privacy function is then generalized and studied  in Section~5 for continuous random variables.

\section{Utility-Privacy Measures: Definitions and Properties}

Consider two random variables $X$ and $Y$, defined over \emph{finite} alphabets $\X$ and $\Y$, respectively, with a fixed joint distribution $P_{XY}$. Let $X$ represent the \emph{private data} and let $Y$ be the \emph{observable data}, correlated with $X$ and generated by the channel $P_{Y|X}$ predefined by nature, which we call the \emph{observation channel}. Suppose there exists a channel $P_{Z|Y}$ such that $Z$, the \emph{displayed data} made available to public users, has limited dependence with $X$. Such a channel is called the \emph{privacy filter}. This setup is shown in Fig.~\ref{fig:privacy}. The objective is then to find a privacy filter which gives rise to the highest dependence between $Y$ and $Z$. To make this goal precise, one needs to specify a measure for both utility (dependence between $Y$ and $Z$) and also privacy (dependence between $X$ and $Z$).

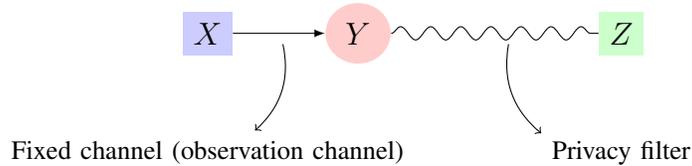
\begin{figure}[!h]
\centering
\begin{tikzpicture}
        \draw (-1,-1) node[fill=blue!20, anchor=base] (private) {$X$};
        \draw (1,-1) node[fill=red!20, ellipse,anchor=base] (public) {$Y$};
        \draw (4.5,-1) node[fill=green!20, anchor=base] (generate) {$Z$};
        \path [arrow] (private) -- (public);
        \draw[decorate, decoration = {snake, segment length = .4cm }] (public) --  (generate);
         \node[below = of private] (note1) {\small{Fixed channel (observation channel)}};
          \coordinate (channel1) at (0,-1);
         \draw [->] (channel1)        to [bend left] (note1);
          \node[below  = of generate ] (note2) {\small{Privacy filter}};
          \coordinate (channel2) at (3,-1);
         \draw [->] (channel2)        to [bend right] (note2);
\end{tikzpicture}
\caption{\small{Information-theoretic privacy.}} \label{fig:privacy}
\end{figure}

 \subsection{Mutual Information as Privacy Measure}

 Adopting mutual information as a measure of  both privacy and utility, we are interested in characterizing the following quantity, which we call the \emph{rate-privacy function}\footnote{Since mutual information is adopted for utility, the privacy-utility tradeoff characterizes the optimal \emph{rate} for a given privacy level, where rate indicates the precision of the displayed data $Z$ with respect to the observable data $Y$ for a privacy filter, which suggests the name.},
\begin{equation}\label{gepsilon}
  g_{\eps}(X;Y):=\sup_{P_{Z|Y}\in \D_{\eps}(P)}I(Y;Z),
\end{equation}
where $(X, Y)$ has fixed distribution $P_{XY}=P$ and $$\D_{\eps}(P):=\{P_{Z|Y}:X\markov Y\markov Z, I(X;Z)\leq \eps\},$$ (here $X\markov Y\markov Z$ means that $X,Y,$ and $Z$ form a Markov chain in this order). Equivalently, we call $g_{\eps}(X;Y)$ the \emph{privacy-constrained information extraction function}, as $Z$ can be thought of as the extracted information from $Y$ under privacy constraint $I(X;Z)\leq \eps$.

Note that since $I(Y;Z)$ is a convex function of $P_{Z|Y}$ and furthermore the constraint set $\D_{\eps}(P)$ is convex, \cite[Theorem 32.2]{rock} implies that we can restrict $\D_{\eps}(P)$ in \eqref{gepsilon} to $\{P_{Z|Y}:X\markov Y\markov Z, I(X;Z)= \eps\}$ whenever $\eps\leq I(X;Y)$ . Note also that since for finite $\X$ and $\Y$, $P_{Z|Y}\to I(Y; Z)$ is a continuous map, therefore $\D_{\eps}(P)$ is  compact and the supremum in \eqref{gepsilon} is indeed a maximum. In this case, using the Support Lemma \cite{csiszarbook}, one can readily show that it suffices that the random variable $Z$ is supported on an alphabet $\Z$ with cardinality $|\Z|\leq |\Y|+1$. Note further that by the Markov condition $X\markov Y\markov Z$, we can always restrict $\eps\geq 0$ to only $0\leq\eps<I(X;Y)$, because $I(X;Z)\leq I(X;Y)$ and hence for $\eps\geq I(X;Y)$ the privacy constraint is removed and thus by setting $Z=Y$, we obtain $g_{\eps}(X;Y)=H(Y)$.

As mentioned earlier, a dual representation of $g_{\eps}(X;Y)$, the so called \emph{privacy funnel}, is introduced in \cite{Funnel} and \cite{Calmon_fundamental-Limit}, defined in \eqref{Dual_gEpsilon}, as the least information leakage about $X$ such that the communication rate is greater than a positive constant; $I(Y; Z)\geq R$ for some $R>0$. Note that if $t_{R}(X;Y)=\eps$ then $g_{\eps}(X;Y)=R$.


Given $\eps_1< \eps_2$ and a joint distribution $P=P_{X}\times P_{Y|X}$, we have $\D_{\eps_1}(P)\subset \D_{\eps_2}(P)$ and hence $\eps\to g_{\eps}(X;Y)$ is non-decreasing, i.e., $g_{\eps_1}(X;Y)\leq g_{\eps_2}(X;Y)$. Using a similar technique as in \cite[Lemma 1]{schulman}, Calmon et al.\ \cite{Calmon_fundamental-Limit} showed that the mapping $R\mapsto \frac{t_{R}(X; Y)}{R}$ is non-decreasing for $R>0$. This, in fact, implies that $\eps\mapsto \frac{g_{\eps}(X; Y)}{\eps}$ is non-increasing for $\eps>0$. This observation leads to a lower bound for the rate privacy function $g_{\eps}(X;Y)$ as described in the following lemma.
\begin{lemma}[\cite{Calmon_fundamental-Limit}]\label{non-increasing-Lemma}
For a given joint distribution $P$ defined over $\X\times \Y$, the mapping $\eps\mapsto \frac{g_{\eps}(X; Y)}{\eps}$ is non-increasing on $\eps\in (0, \infty)$ and $g_{\eps}(X;Y)$ lies between two straight lines as follows:
\begin{equation}\label{UB&LB}
    \eps\frac{H(Y)}{I(X;Y)}\leq g_{\eps}(X;Y)\leq H(Y|X)+\eps,
\end{equation}
for $\eps\in(0, I(X;Y))$.
\end{lemma}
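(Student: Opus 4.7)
The plan is to establish the three assertions in order: the upper bound $g_\eps(X;Y)\le H(Y|X)+\eps$, the monotonicity of $\eps\mapsto g_\eps(X;Y)/\eps$ on $(0,\infty)$, and then the lower bound $g_\eps(X;Y)\ge \eps H(Y)/I(X;Y)$. The upper bound will follow from the Markov chain $X\markov Y\markov Z$ alone; the monotonicity will be obtained via a standard time-sharing construction; and the lower bound will then fall out as a direct corollary of the monotonicity by comparing $\eps$ with the boundary value $I(X;Y)$.

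For the upper bound, I would fix any $P_{Z|Y}\in\D_\eps(P)$ and use the Markov property $I(X;Z\,|\,Y)=0$ to write
\begin{equation*}
I(Y;Z)\;=\;I(X,Y;Z)\;=\;I(X;Z)+I(Y;Z\,|\,X)\;\le\;\eps+H(Y|X),
\end{equation*}
and then pass to the supremum over $\D_\eps(P)$. This is essentially a one-line chain-rule computation.

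For the monotonicity, I would fix $0<\eps_1<\eps_2\le I(X;Y)$ and let $P^\star_{Z|Y}$ attain $g_{\eps_2}(X;Y)$ with $I(X;Z)=\eps_2$ (permitted by the remark just before the lemma). I would then introduce a time-sharing variable $U\sim\sBer(\eps_1/\eps_2)$ independent of $(X,Y,Z)$ and define a new filter outputting $\tilde Z=(Z',U)$, where $Z'=Z$ on $\{U=1\}$ and $Z'$ is a fixed constant on $\{U=0\}$. This is a legitimate channel from $Y$ to $\tilde Z$, and the independence of $U$ automatically preserves $X\markov Y\markov\tilde Z$. The chain rule then yields
\begin{equation*}
I(X;\tilde Z)=(\eps_1/\eps_2)\,I(X;Z)=\eps_1,\qquad I(Y;\tilde Z)=(\eps_1/\eps_2)\,g_{\eps_2}(X;Y),
\end{equation*}
so $g_{\eps_1}(X;Y)\ge(\eps_1/\eps_2)\,g_{\eps_2}(X;Y)$, which is exactly the monotonicity of $\eps\mapsto g_\eps(X;Y)/\eps$. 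The range $\eps_2>I(X;Y)$ is trivial, as $g_\eps$ saturates at $H(Y)$ there.

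For the lower bound, I would observe that $Z=Y$ lies in $\D_{I(X;Y)}(P)$ and gives $I(Y;Z)=H(Y)$, so $g_{I(X;Y)}(X;Y)=H(Y)$; applying the monotonicity with $\eps_2=I(X;Y)$ then yields $g_\eps(X;Y)/\eps\ge H(Y)/I(X;Y)$ for all $\eps\in(0,I(X;Y))$. The only subtlety—and hence the main (if minor) obstacle—is verifying that the time-shared construction is a genuine privacy filter respecting $X\markov Y\markov\tilde Z$; this follows automatically from drawing $U$ independently of $(X,Y,Z)$, after which all remaining manipulations are immediate applications of the chain rule for mutual information.
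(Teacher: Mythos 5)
Your proof is correct and uses essentially the same mechanism as the paper (which cites Calmon et al.\ and Schulman): the time-sharing filter with explicit flag $\tilde Z=(Z',U)$, $U\sim\sBer(\eps_1/\eps_2)$, is equivalent to the erasure-extension channel the paper employs in the proof of the analogous Lemma~\ref{non-increasing-Lemma_g_hat}, just with the erasure symbol replaced by an explicit side-information bit. The chain-rule upper bound and the derivation of the lower bound from monotonicity plus $g_{I(X;Y)}(X;Y)=H(Y)$ are likewise the standard route.
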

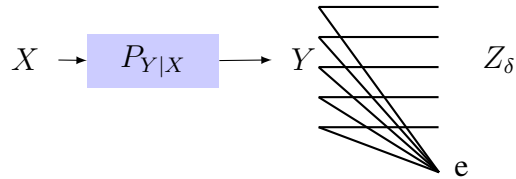
\begin{figure}[H]
\centering
\begin{tikzpicture}
        \node (x) [circle] at (-3.7,-0.9) {$X$};
        \draw (-2,-1) node[fill=blue!20, anchor=base] (channel) {$~~P_{Y|X}~~$};
        \path [arrow] (x) -- (channel);
        \node (y) [circle] at (0,-0.9) {$Y$};
        \path [arrow] (channel) -- (y);
        \node (y1) [circle] at (0.15,-0.2) {};
        \node (y2) [circle] at (0.15,-0.6) {};
        \node (y3) [circle] at (0.15,-1) {};
        \node (y4) [circle] at (0.15,-1.4) {};
        \node (y5) [circle] at (0.15,-1.8) {};

        \node (z1) [circle] at (2,-0.2) {};
        \node (z2) [circle] at (2,-0.6) {};
        \node (z3) [circle] at (2,-1) {};
        \node (z4) [circle] at (2,-1.4) {};
        \node (z5) [circle] at (2,-1.8) {};
        \node (z6) [circle] at (2,-2.4) {};
        \node (z8) [circle] at (2.1,-2.35) {e};
        \node (z7) [circle] at (2.6,-0.9) {$Z_{\delta}$};
        \draw[thick]  (0.2,-0.2) -- (z1);
        \draw[thick]  (0.2,-0.2) -- (1.8,-2.4);
        \draw[thick]  (0.2,-0.6) -- (z2);
        \draw[thick]  (0.2,-0.6) -- (1.8,-2.4);
        \draw[thick]  (0.2,-1) -- (z3);
        \draw[thick]  (0.2,-1) -- (1.8,-2.4);
        \draw[thick]  (0.2,-1.4) -- (z4);
        \draw[thick]  (0.2,-1.4) -- (1.8,-2.4);
        \draw[thick]  (0.2,-1.8) -- (z5);
        \draw[thick]  (0.2,-1.8) -- (1.8,-2.4);
\end{tikzpicture}
\caption{\small{Privacy filter that achieves the lower bound in \eqref{UB&LB} where $Z_{\delta}$ is the output of an erasure privacy filter with erasure probability specified in \eqref{Delta}.}} \label{fig:LowerBound_Filter}
\end{figure}
Using a simple calculation, the lower bound in \eqref{UB&LB} can be shown to be achieved by the privacy filter depicted in Fig.~\ref{fig:LowerBound_Filter} with the erasure probability
\begin{equation}\label{Delta}
  \delta=1-\frac{\eps}{I(X;Y)}.
\end{equation}
In light of Lemma~\ref{non-increasing-Lemma}, the possible range of the map $\eps\mapsto g_{\eps}(X;Y)$ is as depicted in Fig.~\ref{Fig: UB&LB}.
\begin{figure}
  \centering
  \includegraphics[width=12cm, height=8cm]{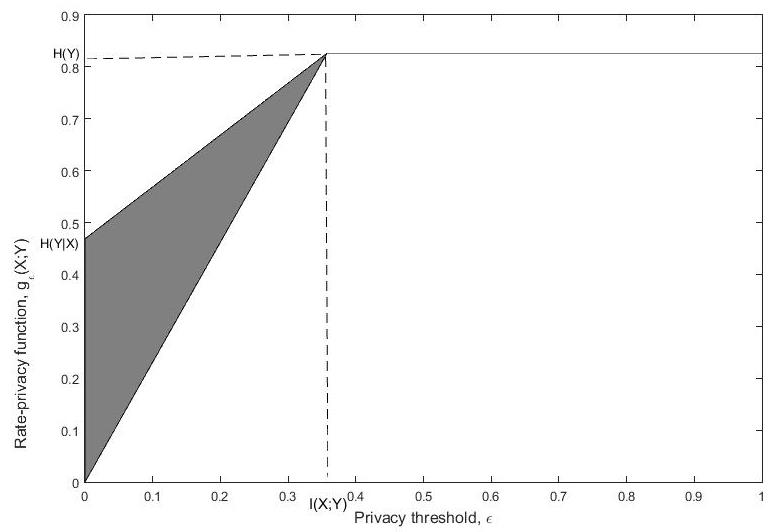}\\
  \caption{\small{The region of $g_{\eps}(X;Y)$ in terms of $\eps$ specified by \eqref{UB&LB}.}}\label{Fig: UB&LB}
\end{figure}
We next show that $\eps\mapsto g_{\eps}(X;Y)$ is concave and continuous.
\begin{lemma}\label{Lemma_Concavity}
For any given pair of random variables $(X,Y)$ over $\X\times \Y$, the mapping $\eps\mapsto g_{\eps}(X;Y)$ is concave for $\eps\geq 0$.
\end{lemma}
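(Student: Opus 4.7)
The plan is to prove concavity by a standard time-sharing construction, which combines two optimal privacy filters into a single feasible one whose utility and privacy are both affine in the mixing parameter.

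Fix $\eps_1, \eps_2 \geq 0$ and $\lambda \in [0,1]$, and set $\eps := \lambda \eps_1 + (1-\lambda)\eps_2$. Since the supremum in \eqref{gepsilon} is attained (as noted after Lemma~\ref{non-increasing-Lemma}), pick privacy filters $P_{Z_1|Y} \in \D_{\eps_1}(P)$ and $P_{Z_2|Y} \in \D_{\eps_2}(P)$ achieving $g_{\eps_1}(X;Y)$ and $g_{\eps_2}(X;Y)$ respectively, with $I(X;Z_i)\leq \eps_i$ and $X\markov Y\markov Z_i$ for $i=1,2$. Introduce a time-sharing random variable $U$ with $\Pr(U=1)=\lambda$ and $\Pr(U=2)=1-\lambda$, independent of $(X,Y)$, and conditionally on $U=i$ and $Y$, draw $Z_U$ according to $P_{Z_i|Y}$. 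Finally let $Z := (Z_U, U)$, which takes values in a finite alphabet.

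The key point is that $X\markov Y\markov Z$ still holds (since $U$ is independent of $(X,Y)$ and each $Z_i$ satisfies the Markov condition), and both the privacy and utility quantities decompose additively through $U$. Indeed, using $I(X;U)=0$ and $I(Y;U)=0$,
\begin{equation*}
I(X;Z) = I(X;U) + I(X;Z_U \mid U) = \lambda I(X;Z_1) + (1-\lambda) I(X;Z_2) \leq \lambda \eps_1 + (1-\lambda) \eps_2 = \eps,
\end{equation*}
so $P_{Z|Y}\in \D_\eps(P)$, and similarly
\begin{equation*}
I(Y;Z) = \lambda I(Y;Z_1) + (1-\lambda) I(Y;Z_2) = \lambda g_{\eps_1}(X;Y) + (1-\lambda) g_{\eps_2}(X;Y).
\end{equation*}
Taking the supremum over $\D_\eps(P)$ gives $g_\eps(X;Y) \geq \lambda g_{\eps_1}(X;Y)+(1-\lambda)g_{\eps_2}(X;Y)$, which is the desired concavity.

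I do not anticipate a serious obstacle: the only subtlety is ensuring that the composite channel $P_{Z|Y}$ with $Z=(Z_U,U)$ genuinely respects $X\markov Y\markov Z$, but this is immediate from the independence of $U$ from $(X,Y)$ combined with the Markov property of each component filter. Concavity on $[0,\infty)$ together with finiteness of $g_\eps(X;Y)$ (bounded above by $H(Y)$) also yields continuity on $(0,\infty)$, which is the companion claim advertised just before the lemma.
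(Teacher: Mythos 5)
Your proof is correct and follows essentially the same time-sharing argument as the paper: both construct a randomized filter by flipping a coin $U$ independent of $(X,Y)$ between two optimal filters, so that $I(X;Z)$ and $I(Y;Z)$ decompose affinely in the mixing parameter. The only cosmetic difference is that you append $U$ to the output $Z=(Z_U,U)$, whereas the paper achieves the same effect by taking the two filters' output alphabets disjoint so that $U$ is recoverable from $Z_\lambda$.
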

\begin{proof}
It suffices to show that for any $0\leq \eps_1<\eps_2<\eps_3\leq I(X;Y)$, we have
\begin{equation}\label{concavity_requirement}
    \frac{g_{\eps_3}(X;Y)-g_{\eps_1}(X;Y)}{\eps_3-\eps_1}\leq \frac{g_{\eps_2}(X;Y)-g_{\eps_1}(X;Y)}{\eps_2-\eps_1},
\end{equation}
which, in turn, is equivalent to
\begin{equation}\label{concavity_requirement_2}
\left(\frac{\eps_2-\eps_1}{\eps_3-\eps_1}\right)g_{\eps_3}(X;Y)+\left(\frac{\eps_3-\eps_2}{\eps_3-\eps_1}\right)g_{\eps_1}(X;Y)\leq g_{\eps_2}(X;Y).\end{equation}
Let $P_{Z_1|Y}: Y\to Z_1$   and $P_{Z_3|Y}:Y\to Z_3$ be two optimal privacy filters in $\D_{\eps_1}(P)$ and $\D_{\eps_3}(P)$ with disjoint output alphabets $\Z_1$ and $\Z_3$, respectively.

We introduce an auxiliary binary random variable $U\sim\sBer(\lambda)$, independent of $(X,Y)$, where $\lambda:=\frac{\eps_2-\eps_1}{\eps_3-\eps_1}$ and define the following random privacy filter $P_{Z_{\lambda}|Y}$: We pick $P_{Z_3|Y}$ if $U=1$ and $P_{Z_1|Y}$ if $U=0$, and let $Z_{\lambda}$ be the output of this random channel which takes values in $\Z_1\cup \Z_3$. Note that $(X,Y)~\markov Z\markov~ U$. Then we have
\begin{eqnarray*}
  I(X;Z_{\lambda}) &=& I(X;Z_{\lambda},U)= I(X;Z_{\lambda}|U)=\lambda I(X;Z_3)+(1-\lambda)I(X;Z_1), \\
   &\leq& \eps_2,
\end{eqnarray*}
which implies that $P_{Z_{\lambda}|Y}\in\D_{\eps_2}(P)$. On the other hand,  we have
\begin{eqnarray*}
  g_{\eps_2}(X;Y)\geq I(Y;Z_{\lambda}) &=& I(Y;Z_{\lambda},U)= I(Y;Z_{\lambda}|U)=\lambda I(Y;Z_3)+(1-\lambda)I(Y;Z_1), \\
   &=&\left(\frac{\eps_2-\eps_1}{\eps_3-\eps_1}\right)g_{\eps_3}(X;Y)+\left(\frac{\eps_3-\eps_2}{\eps_3-\eps_1}\right)g_{\eps_1}(X;Y)
\end{eqnarray*}
which, according to \eqref{concavity_requirement_2}, completes the proof.
\end{proof}
\begin{remark}
By the concavity of $\eps\mapsto g_{\eps}(X;Y)$, we can show that $g_{\eps}(X;Y)$ is a \emph{strictly} increasing function of $\eps\leq I(X;Y)$. To see this, assume there exists $\eps_1< \eps_2\leq I(X;Y)$ such that $g_{\eps_1}(X;Y)=g_{\eps_2}(X;Y)$. Since $\eps\mapsto g_{\eps}(X;Y)$ is concave, then it follows that for all $\eps\geq \eps_2$, $g_{\eps}(X;Y)=g_{\eps_2}(X;Y)$ and since for $\eps=I(X;Y)$, $g_{I(X;Y)}(X;Y)=H(Y)$, implying that for any $\eps\geq \eps_2$, we must have $g_{\eps}(X;Y)=H(Y)$ which contradicts the upper bound shown in \eqref{UB&LB}.
\end{remark}

\begin{corollary}\label{Corollary_Continuity}
For any given pair of random variables $(X,Y)$ over $\X\times \Y$, the mapping $\eps\mapsto g_{\eps}(X;Y)$ is continuous for $\eps\geq 0$.
\end{corollary}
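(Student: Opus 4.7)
The plan is to split the argument into two cases: continuity on the open ray $(0,\infty)$, which is immediate from Lemma~\ref{Lemma_Concavity}, and continuity at the boundary point $\eps=0$, which requires a separate compactness argument.

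First I would recall the standard fact that any concave function on an interval is continuous on its interior. Since Lemma~\ref{Lemma_Concavity} tells us $\eps\mapsto g_{\eps}(X;Y)$ is concave on $[0,\infty)$, this immediately yields continuity on $(0,\infty)$. (The behavior for $\eps\geq I(X;Y)$ is harmless since $g_{\eps}(X;Y)=H(Y)$ is constant there, and this constant value is already compatible with concavity and monotonicity at the transition point.)

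The only remaining issue is right-continuity at $\eps=0$, i.e., showing $\lim_{\eps\downarrow 0} g_{\eps}(X;Y)=g_0(X;Y)$. One direction, $g_0(X;Y)\le \lim_{\eps\downarrow 0}g_{\eps}(X;Y)$, is trivial from the monotonicity observed before Lemma~\ref{non-increasing-Lemma}. For the reverse inequality I would invoke compactness: thanks to the Support Lemma argument noted in the paragraph after \eqref{gepsilon}, we may restrict attention to output alphabets of size $|\Z|\le|\Y|+1$, so the set of admissible channels $P_{Z|Y}$ sits inside a compact subset of Euclidean space, and both $P_{Z|Y}\mapsto I(X;Z)$ and $P_{Z|Y}\mapsto I(Y;Z)$ are continuous there. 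Pick a sequence $\eps_n\downarrow 0$ together with optimal filters $P_{Z_n|Y}\in\D_{\eps_n}(P)$ achieving $g_{\eps_n}(X;Y)$, extract a convergent subsequence $P_{Z_n|Y}\to P_{Z^*|Y}$, and note that $I(X;Z^*)=\lim_n I(X;Z_n)\le \lim_n \eps_n=0$, so $P_{Z^*|Y}\in\D_0(P)$. Continuity of mutual information then gives $I(Y;Z^*)=\lim_n g_{\eps_n}(X;Y)$, hence $g_0(X;Y)\ge I(Y;Z^*)=\lim_{\eps\downarrow 0}g_{\eps}(X;Y)$, which is exactly the direction we needed.

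The main (and only real) obstacle is justifying the compactness/continuity step at $\eps=0$; once the bounded-cardinality reduction of $\D_\eps(P)$ and the continuity of the mutual information functionals are invoked, everything else is routine, so I expect this corollary to follow in just a few lines in the actual write-up.
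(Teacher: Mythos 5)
Your proposal is correct and follows essentially the same route as the paper: continuity on $(0,\infty)$ from concavity (Lemma~\ref{Lemma_Concavity}), and continuity at $\eps=0$ from the continuity of mutual information, which the paper states in one line and you simply flesh out via the bounded-cardinality/compactness and convergent-subsequence argument. No gaps; your detailed treatment of the $\eps=0$ endpoint is a valid expansion of the paper's terse justification.
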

\begin{proof}
Concavity directly implies that the mapping $\eps\mapsto g_{\eps}(X;Y)$ is continuous on $(0, \infty)$ (see for example \cite[Theorem 3.2]{Rudin}). Continuity at zero follows from the continuity of mutual information. 
\end{proof}
\begin{remark}\label{Remark_General_LoerBound}
Using the concavity of the map $\eps\mapsto g_{\eps}(X;Y)$, we can provide an alternative proof for the lower bound in \eqref{UB&LB}. Note that point $(I(X;Y), H(Y))$ is always on the curve $g_{\eps}(X;Y)$, and hence by concavity, the straight line $\eps\mapsto \eps\frac{H(Y)}{I(X;Y)}$ is always below the lower convex envelop of $g_{\eps}(X;Y)$, i.e., the chord connecting $(0, g_0(X;Y))$ to $(I(X;Y), H(Y))$,  and hence $g_{\eps}(X;Y)\geq\eps\frac{H(Y)}{I(X;Y)}$. In fact, this chord yields a better lower bound for $g_{\eps}(X;Y)$ on $\eps\in [0, I(X;Y]$ as
\begin{equation}\label{General_Lower_Bound}
  g_{\eps}(X;Y)\geq \eps\frac{H(Y)}{I(X;Y)}+g_0(X;Y)\left[1-\frac{\eps}{I(X;Y)}\right],
\end{equation}
which reduces to the lower bound in \eqref{UB&LB} only if $g_0(X;Y)=0$.
\end{remark}
\subsection{Maximal Correlation as Privacy Measure}

By adopting the mutual information as the privacy measure between the private and the displayed data, we make sure that only limited bits of private information is revealed during the process of transferring $Y$. In order to have an estimation theoretic guarantee of privacy, we propose alternatively to measure  privacy using a \emph{measure of correlation}, the so-called maximal correlation.

Given the collection $\C$of all pairs of random variables $(U,V)\in\U\times \V$ where $\U$ and $\V$ are general alphabets, a mapping $T:\C\to [0,1]$ defines a measure of correlation \cite{gebelien} if $T(U,V)=0$ if and only if $U$ and $V$ are independent (in short, $U\indep V$) and $T(U,V)$ attains its maximum value if $X=f(Y)$ or $Y=g(X)$ almost surely for some measurable real-valued functions $f$ and $g$. There are many different examples of measures of correlation including the Hirschfeld-Gebelein-R\'{e}nyi  maximal correlation \cite{hirschfild, gebelien, Renyi-dependence-measure}, the information measure \cite{Linfoot_information}, mutual information and $f$-divergence \cite{Csiszar_f_divergence}.
\begin{definition}[\cite{Renyi-dependence-measure}]
Given  random  variables $X$ and $Y$,  the  maximal  correlation\footnote{Recall that the correlation coefficient between $U$ and $V$, is defined as $\rho(U;V):=\frac{\text{cov}(U;V)}{\sigma_U\sigma_V}$, where $\text{cov}(U;V), \sigma_U$ and $\sigma_V$ are the covariance between $U$ and $V$, the standard deviations of $U$ and $V$, respectively.}  $\rho_m(X;Y)$ is defined as follows:
$$\rho_m(X;Y):= \sup_{f, g} \rho(f(X), g(Y))=\sup_{(f(X),g(Y))\in \mathcal{S}}\E[f(X)g(Y)],$$ where $\mathcal{S}$ is the collection of pairs of real-valued random variables $f(X)$ and $g(Y)$ such that $\E f(X)=\E g(Y)=0$ and $\E f^2(X)=\E g^2(Y)=1$. If $\mathcal{S}$ is empty (which happens precisely when at least one of $X$ and $Y$ is constant almost surely) then one defines $\rho_m(X;Y)$ to be 0. R\'{e}nyi \cite{Renyi-dependence-measure} derived an equivalent characterization of maximal correlation as follows:
\begin{equation}\label{Maximal_correlation_Equivalent}
  \rho^2_m(X;Y)=\sup_{f:\E f(X)=0, \E f^2(X)=1}\E\left[\E^2[f(X)|Y]\right].
\end{equation}
\end{definition}

Measuring privacy in terms of maximal correlation, we propose
$$\hat{g}_{\eps}(X;Y):=\sup_{P_{Z|Y}\in \hat{\D}_{\eps}(P)} I(Y; Z),$$ as the corresponding rate-privacy tradeoff, where $$\hat{\D}_{\eps}(P):=\{P_{Z|Y}:~X\markov Y\markov Z, ~\rho_m^2(X;Z)\leq \eps, P_{XY}=P\}.$$ Again, we equivalently call $\hat{g}_{\eps}(X;Y)$ as the privacy-constrained information extraction function, where here the privacy is guaranteed by $\rho_m^2(X;Z)\leq \eps$.

Setting $\eps=0$ corresponds to the case where $X$ and $Z$ are required to be statistically independent, i.e., absolutely no information leakage about the private source $X$ is allowed. This is called \emph{perfect privacy.}
Since the independence of $X$ and $Z$ is equivalent to $I(X;Z)=\rho_m(X;Z)=0$, we  have $\hat{g}_{0}(X;Y)=g_{0}(X;Y)$. However, for $\eps>0$, both $g_{\eps}(X;Y)\leq \hat{g}_{\eps}(X;Y)$ and $g_{\eps}(X;Y)\geq \hat{g}_{\eps}(X;Y)$ might happen in general. For general $\eps\geq 0$, it directly follows using \cite[Proposition 1]{MaximalCorr_Secrecy} that
$$\hat{g}_{\eps}(X;Y)\leq g_{\eps'}(X;Y),$$ where $\eps':=\log(k\eps+1)$ and $k:=|\X|-1$.

Similar to $g_{\eps}(X;Y)$, we see that for $\eps_1\leq \eps_2$,  $\hat{\D}_{\eps_1}(P)\subset \hat{\D}_{\eps_2}(P)$ and hence $\eps\to \hat{g}_{\eps}(X;Y)$ is non-decreasing. The following lemma is a counterpart of Lemma~\ref{non-increasing-Lemma} for $\hat{g}_{\eps}(X;Y)$.
\begin{lemma}\label{non-increasing-Lemma_g_hat}
For a given joint distribution $P_{XY}$ defined over $\X\times \Y$, $\eps\mapsto \frac{\hat{g}_{\eps}(X;Y)}{\eps}$ is non-increasing on $(0, \infty)$.
\end{lemma}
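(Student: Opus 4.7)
The plan is to imitate the standard ``time-sharing with erasure'' argument that proves monotonicity of $\eps \mapsto g_{\eps}(X;Y)/\eps$ in the mutual-information case, but adapted to the maximal-correlation privacy constraint. Concretely, I would fix $0 < \eps' \le \eps$, set $\alpha := \eps'/\eps \in (0,1]$, and show that every filter in $\hat{\D}_{\eps}(P)$ can be converted into a filter in $\hat{\D}_{\eps'}(P)$ whose utility $I(Y;\cdot)$ is scaled by exactly $\alpha$; this immediately gives $\hat{g}_{\eps'}(X;Y) \ge \alpha\, \hat{g}_{\eps}(X;Y)$, i.e., $\hat{g}_{\eps'}(X;Y)/\eps' \ge \hat{g}_{\eps}(X;Y)/\eps$.

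Given any $P_{Z|Y} \in \hat{\D}_{\eps}(P)$, the construction I have in mind is to pass $Z$ through an independent erasure channel with erasure probability $1-\alpha$. Namely, let $E \sim \sBer(\alpha)$ be independent of $(X,Y,Z)$, and set $Z' := Z$ when $E=1$ and $Z' := \mathsf{e}$ (a fresh erasure symbol outside the range of $Z$) when $E=0$. Because the erasure is independent of $(X,Y)$, the Markov chain $X \markov Y \markov Z'$ is preserved, so $P_{Z'|Y}$ is a legitimate privacy filter. Note also that $E$ is a deterministic function of $Z'$.

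The crux is to verify two scaling identities. For utility, using that $E$ is a function of $Z'$ and that $E \indep Y$, a short chain-rule calculation gives $I(Y;Z') = I(Y;Z'|E) = \alpha\, I(Y;Z)$. For privacy, I would invoke R\'enyi's characterization \eqref{Maximal_correlation_Equivalent}: for any $f$ with $\E f(X) = 0$ and $\E f^2(X) = 1$, the independence of $E$ from $(X,Z)$ yields $\E[f(X)\mid Z',E=1] = \E[f(X)\mid Z]$ and $\E[f(X)\mid Z',E=0] = \E f(X) = 0$, so $\E[f(X)\mid Z'] = \E[f(X)\mid Z]\,\mathbf{1}_{\{E=1\}}$ almost surely. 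Squaring and taking expectation, together with $E \indep (X,Z)$, gives $\E\bigl[\E^{2}[f(X)\mid Z']\bigr] = \alpha\,\E\bigl[\E^{2}[f(X)\mid Z]\bigr]$, and supremizing over $f$ yields $\rho_{m}^{2}(X;Z') = \alpha\, \rho_{m}^{2}(X;Z) \le \alpha\eps = \eps'$. Hence $P_{Z'|Y} \in \hat{\D}_{\eps'}(P)$ and $\hat{g}_{\eps'}(X;Y) \ge I(Y;Z') = \alpha\, I(Y;Z)$; taking the supremum over $P_{Z|Y} \in \hat{\D}_{\eps}(P)$ delivers the claimed monotonicity.

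The only genuinely nontrivial step is the clean linearization $\rho_{m}^{2}(X;Z') = \alpha\, \rho_{m}^{2}(X;Z)$ under an independent erasure, and this is precisely where R\'enyi's variational formula is convenient; attacking it through the original $\sup_{f,g}\rho(f(X),g(Y))$ definition would also work but is messier because one must simultaneously renormalize the test function on the $Z'$ side. I do not anticipate any further obstacle; in particular the argument is insensitive to whether the supremum defining $\hat{g}_{\eps}(X;Y)$ is attained, since the inequality is established filter-by-filter before taking the supremum.
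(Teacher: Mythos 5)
Your proof is correct and follows essentially the same route as the paper's: compose the given filter with an independent erasure channel so that both $I(Y;Z)$ and $\rho_m^2(X;Z)$ scale linearly with the survival probability $\alpha=\eps'/\eps$, which immediately yields the monotonicity of $\eps\mapsto \hat{g}_{\eps}(X;Y)/\eps$. The only differences are minor refinements: you prove the scaling identity $\rho_m^2(X;Z')=\alpha\,\rho_m^2(X;Z)$ directly from R\'enyi's characterization \eqref{Maximal_correlation_Equivalent} (the paper cites it from a reference), and you argue filter-by-filter before taking the supremum, thereby avoiding the paper's assumption that an optimal filter attaining $\hat{g}_{\eps}(X;Y)$ with $\rho_m^2(X;Z)=\eps$ exists.
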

\begin{proof}
Like Lemma~\ref{non-increasing-Lemma}, the proof is similar to the proof of \cite[Lemma 1]{schulman}. We, however, give a brief proof for the sake of completeness.

For a given channel $P_{Z|Y}\in\hat{\D}_{\eps}(P)$ and $\delta\geq 0$, we can define a new channel with an additional symbol $e$ as follows
\begin{equation} \label{non-increasing-Lemma_Cases}
P_{Z'|Y}(z'|y) = \begin{cases}
(1-\delta)P_{Z|Y}(z'|y) &\text{if  $z'\neq e$}\\
\delta &\text{if $z'=e$ }
\end{cases}
\end{equation}
It is easy to check that $I(Y; Z')=(1-\delta)I(Y; Z)$ and also $\rho_m^2(X; Z')=(1-\delta)\rho_m^2(X; Z)$; see \cite[Page 8]{Lei_Zhai_PhD_Thesis}, which implies that $P_{Z'|Y}\in\hat{\D}_{\eps'}(P)$
where $\eps'=(1-\delta)\eps$. Now suppose that $P_{Z|Y}$ achieves $\hat{g}_{\eps}(X;Y)$, that is, $\hat{g}_{\eps}(X;Y)=I(Y; Z)$ and $\rho_m^2(X;Z)=\eps$. We can then write
$$\frac{\hat{g}_{\eps}(X;Y)}{\eps}=\frac{I(Y;Z)}{\eps}=\frac{I(Y; Z')}{\eps'}\leq \frac{g_{\eps'}(X;Y)}{\eps'}.$$ Therefore, for $\eps'\leq  \eps$ we have $\frac{g_{\eps'}(X;Y)}{\eps'}\geq \frac{g_{\eps}(X;Y)}{\eps}$.
\end{proof}

Similar to the lower bound for $g_{\eps}(X;Y)$ obtained from Lemma~\ref{non-increasing-Lemma}, we can obtain a lower bound for $\hat{g}_{\eps}(X;Y)$ using Lemma~\ref{non-increasing-Lemma_g_hat}. Before we get to the lower bound, we need a data processing lemma for maximal correlation. The following lemma proves a version of \emph{strong} data processing inequality for maximal correlation from which the typical data processing inequality follows, namely, $\rho_m(X;Z)\leq \min \{\rho_m(Y;Z),\rho_m(X;Y)\}$ for $X,Y$ and $Z$ satisfying $X\markov Y\markov Z$.
\begin{lemma}\label{Lemma_Data_Processing_Maximal_corr}
For random variables $X$ and $Y$ with a joint distribution $P_{XY}$, we have
$$\sup_{\substack{X\markov Y\markov Z\\ \rho_m(Y;Z)\neq 0}}\frac{\rho_m(X;Z)}{\rho_m(Y;Z)}=\rho_m(X;Y).$$
\end{lemma}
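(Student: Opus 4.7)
The plan is to show both directions: first the upper bound $\rho_m(X;Z) \le \rho_m(X;Y)\rho_m(Y;Z)$ for every Markov triple $X\markov Y\markov Z$ (this is the "strong" part of the data processing inequality), and then achievability by exhibiting a $Z$ for which the ratio equals $\rho_m(X;Y)$.

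For the upper bound I would work entirely through R\'enyi's variational formula \eqref{Maximal_correlation_Equivalent}. Fix any real-valued $f$ with $\E f(X)=0$ and $\E f^2(X)=1$, and define $g(Y):=\E[f(X)\mid Y]$. By the Markov condition $X\markov Y\markov Z$, conditioning on $Z$ passes through $Y$, so $\E[f(X)\mid Z]=\E[g(Y)\mid Z]$. Now $\E g(Y)=0$ and, by \eqref{Maximal_correlation_Equivalent} applied to the pair $(X,Y)$, $\sigma^2:=\E g^2(Y)\le \rho_m^2(X;Y)$. If $\sigma=0$ then $g(Y)=0$ a.s.\ and the contribution $\E[\E^2[f(X)\mid Z]]$ is zero. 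Otherwise the normalized function $g(Y)/\sigma$ is admissible in the variational problem for $\rho_m^2(Y;Z)$, so
\begin{equation*}
\E\bigl[\E^2[f(X)\mid Z]\bigr]=\sigma^2\,\E\!\left[\E^2\!\left[\tfrac{g(Y)}{\sigma}\,\Big|\,Z\right]\right]\le \sigma^2\rho_m^2(Y;Z)\le \rho_m^2(X;Y)\,\rho_m^2(Y;Z).
\end{equation*}
Taking the supremum over admissible $f$ and using \eqref{Maximal_correlation_Equivalent} once more yields $\rho_m^2(X;Z)\le \rho_m^2(X;Y)\rho_m^2(Y;Z)$, hence
\begin{equation*}
\frac{\rho_m(X;Z)}{\rho_m(Y;Z)}\le \rho_m(X;Y)\qquad\text{whenever }\rho_m(Y;Z)\neq 0.
\end{equation*}

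For achievability I would simply take $Z=Y$: the Markov chain $X\markov Y\markov Z$ trivially holds, $\rho_m(Y;Z)=\rho_m(Y;Y)=1$ (since the statement is interesting only when $Y$ is non-degenerate, which is automatic when $\rho_m(X;Y)>0$), and $\rho_m(X;Z)=\rho_m(X;Y)$, so the ratio attains $\rho_m(X;Y)$. Combining the two parts gives the claimed identity.

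The only step that needs care is the change-of-variable argument in the upper bound: one must justify that $\E[f(X)\mid Z]$ really equals $\E[g(Y)\mid Z]$ under the Markov condition (this is the tower property together with $X\indep Z\mid Y$) and that $g(Y)/\sigma$ is a legitimate test function in R\'enyi's formula for $\rho_m(Y;Z)$. Everything else is routine. The degenerate edge case (one of $X,Y$ constant) can be handled separately by noting both sides of the identity vanish or the supremum is over an empty set, consistent with the convention in the definition of $\rho_m$.
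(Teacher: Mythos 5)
Your proof is correct, and the upper-bound part is essentially the same argument as the paper's: both reduce to R\'enyi's characterization \eqref{Maximal_correlation_Equivalent} plus Cauchy--Schwarz. The paper phrases it as bounding $\E[f(X)g(Z)]$ for arbitrary normalized $f$ and $g$ by conditioning on $Y$ and applying Cauchy--Schwarz to $\E[\E[f(X)|Y]\E[g(Z)|Y]]$; you instead go directly through the variational form $\rho_m^2(X;Z)=\sup_f\E[\E^2[f(X)|Z]]$, collapse $\E[f(X)|Z]$ to $\E[g(Y)|Z]$ via the Markov tower property, and normalize. These are interchangeable, and you correctly identify the two justifications that need care (the tower-property step and the admissibility of $g(Y)/\sigma$).

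Where you genuinely diverge from the paper is in achievability. The paper exhibits tightness through the backward channel $X\markov Y\markov X'$ with $P_{X'|Y}$ the posterior of $X$ given $Y$, and then proves the non-obvious identity $\rho_m^2(X;Y)=\rho_m(X;X')$, which takes a few extra lines. You simply take $Z=Y$: then $\rho_m(Y;Z)=1$ (when $Y$ is non-degenerate, which is forced once $\rho_m(X;Y)>0$) and $\rho_m(X;Z)=\rho_m(X;Y)$, so the ratio is already $\rho_m(X;Y)$. This is strictly simpler and entirely sufficient for the lemma as stated, since only the value of the supremum is asserted. What the paper's longer route buys is a \emph{nontrivial} tight example and the side identity $\rho_m^2(X;Y)=\rho_m(X;X')$, which has independent interest; but neither is needed to prove the lemma. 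Your handling of the degenerate edge case (empty feasible set versus both sides vanishing) is also consistent with R\'enyi's convention.
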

\begin{proof}
For arbitrary zero-mean and unit variance  measurable functions $f\in \L^2(\X)$ and $g\in\L^2(\Z)$ and $X\markov Y\markov Z$, we have
$$\E[f(X)g(Z)]=\E\left[\E[f(X)|Y]\E[g(Z)|Y]\right]\leq \rho_m(X;Y)\rho_m(Y;Z),$$
where the inequality follows from the Cauchy-Schwartz inequality and \eqref{Maximal_correlation_Equivalent}. Thus we obtain $\rho_m(X;Z)\leq\rho_m(X;Y)\rho_m(Y;Z)$.

This bound is tight for the special case of $X\to Y\to X'$, where $P_{X'|Y}$ is the backward channel associated with $P_{Y|X}$. In the following, we shall show that $\rho_m(X;Y)\rho_m(Y; X')=\rho_m(X;X')$.

To this end, first note that the above implies that $\rho_m(X;Y)\rho_m(Y;X')\geq \rho_m(X;X')$. Since $P_{XY}=P_{X'Y}$, it follows that $\rho_m(X;Y)=\rho_m(Y;X')$ and hence the above implies that $\rho_m^2(X;Y)\geq \rho_m(X;X')$. One the other hand, we have
$$\E[[\E[f(X)|Y]]^2]=\E[\E[f(X)|Y]\E[f(X')|Y]]=\E[\E[f(X)f(X')|Y]]=\E[f(X)f(X')],$$ which together with \eqref{Maximal_correlation_Equivalent} implies that
$$\rho^2_m(X;Y)\leq \sup_{f:\E f(X)=0, \E f^2(X)=1} \E[f(X)f(X')]\leq \rho_m(X;X').$$
Thus, $\rho^2_m(X;Y)=\rho_m(X;X')$ which completes the proof.
\end{proof}
Now a lower bound of $\hat{g}_{\eps}(X;Y)$ can be readily obtained.
\begin{corollary}\label{corollary_lowerbound_g_hat}
For a given joint distribution $P_{XY}$ defined over  $\X\times \Y$, we have for any $\eps>0$
$$\hat{g}_{\eps}(X;Y)\geq \frac{H(Y)}{\rho_m^2(X;Y)}\min\{\eps, \rho_m^2(X;Y)\}.$$
\end{corollary}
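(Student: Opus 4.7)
The plan is to mimic the concavity/monotonicity argument that gave the lower bound in Lemma~\ref{non-increasing-Lemma}, but using the maximal-correlation monotonicity from Lemma~\ref{non-increasing-Lemma_g_hat} in place of the mutual-information version. The key is to identify an ``anchor point'' on the curve $\eps\mapsto\hat g_\eps(X;Y)$ whose value is known in closed form, and then use the non-increasing property of $\eps\mapsto\hat g_\eps(X;Y)/\eps$ to extend a linear lower bound from that anchor to smaller $\eps$.

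First I would identify the natural anchor at $\eps=\rho_m^2(X;Y)$. The data processing inequality for maximal correlation (a direct consequence of Lemma~\ref{Lemma_Data_Processing_Maximal_corr}, since $\rho_m(Y;Z)\leq 1$ implies $\rho_m(X;Z)\leq \rho_m(X;Y)$ whenever $X\markov Y\markov Z$) ensures that the trivial filter $Z=Y$ is admissible in $\hat\D_{\rho_m^2(X;Y)}(P)$, with $\rho_m^2(X;Z)=\rho_m^2(X;Y)$ and $I(Y;Z)=H(Y)$. Since $\hat g_\eps(X;Y)\leq H(Y)$ always, this yields $\hat g_\eps(X;Y)=H(Y)$ for every $\eps\geq\rho_m^2(X;Y)$, which handles the regime where $\min\{\eps,\rho_m^2(X;Y)\}=\rho_m^2(X;Y)$, since then the right-hand side of the claimed inequality reduces to $H(Y)$.

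For the remaining range $0<\eps<\rho_m^2(X;Y)$ I would invoke Lemma~\ref{non-increasing-Lemma_g_hat}: the map $\eps\mapsto \hat g_\eps(X;Y)/\eps$ is non-increasing on $(0,\infty)$, so
\begin{equation*}
\frac{\hat g_\eps(X;Y)}{\eps}\;\geq\;\frac{\hat g_{\rho_m^2(X;Y)}(X;Y)}{\rho_m^2(X;Y)}\;=\;\frac{H(Y)}{\rho_m^2(X;Y)},
\end{equation*}
which rearranges to $\hat g_\eps(X;Y)\geq \eps\,H(Y)/\rho_m^2(X;Y)$, matching the claim in this range since here $\min\{\eps,\rho_m^2(X;Y)\}=\eps$. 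Combining the two cases gives the stated inequality.

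There is really no hard step here: the entire argument hinges on (i) knowing the value of $\hat g$ at the saturation point, which needs only DPI for $\rho_m$, and (ii) the scaling monotonicity already proved in Lemma~\ref{non-increasing-Lemma_g_hat}. The only point that requires a moment of care is verifying admissibility of $Z=Y$ in $\hat\D_{\rho_m^2(X;Y)}(P)$ and observing that $\rho_m^2(X;Z)$ cannot exceed $\rho_m^2(X;Y)$ for any filter in the Markov chain, which is exactly the standard data processing inequality recovered inside the proof of Lemma~\ref{Lemma_Data_Processing_Maximal_corr}.
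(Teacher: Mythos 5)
Your proposal is correct and follows essentially the same route as the paper: it first notes via the maximal-correlation data processing inequality (Lemma~\ref{Lemma_Data_Processing_Maximal_corr}) that the constraint is inactive for $\eps\geq\rho_m^2(X;Y)$, so $\hat{g}_{\eps}(X;Y)=H(Y)$ there with $Z=Y$, and then applies the non-increasing property of $\eps\mapsto \hat{g}_{\eps}(X;Y)/\eps$ from Lemma~\ref{non-increasing-Lemma_g_hat} anchored at $\eps=\rho_m^2(X;Y)$ to cover $0<\eps<\rho_m^2(X;Y)$. This matches the paper's argument step for step, with only slightly more explicit bookkeeping of the two cases.
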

\begin{proof}
By Lemma~\ref{Lemma_Data_Processing_Maximal_corr}, we know that for any Markov chain $X\markov Y\markov Z$, we have $\rho_m(X;Z)\leq \rho_m(X;Y)$ and hence for $\eps\geq \rho^2_m(X;Y)$, the privacy constraint $\rho^2_m(X;Z)\leq \eps$ is not restrictive and hence $\hat{g}_{\eps}(X;Y)=H(Y)$ by setting $Y=Z$. For $0<\eps\leq\rho^2_m(X;Y)$, Lemma~\ref{non-increasing-Lemma_g_hat} implies that
$$\frac{\hat{g}_{\eps}(X;Y)}{\eps}\geq \frac{H(Y)}{\rho_m^2(X;Y)},$$
from which the result follows.
\end{proof}

A loose upper bound of $\hat{g}_{\eps}(X;Y)$ can be obtained using an argument similar to the one used for $g_{\eps}(X;Y)$. For the Markov chain $X\markov Y\markov Z$, we have
\begin{eqnarray}
  I(Y;Z) &=& I(X;Z)+I(Y;Z|X)\leq I(X;Z)+H(Y|X),\nonumber \\
   &\stackrel{(a)}{\leq} & \log\left(k\rho^2_m(X;Z)+1\right)+H(Y|X),  \label{upperbound_g_hat}
\end{eqnarray}
where  $k:=|\X|-1$ and $(a)$ comes from \cite[Proposition 1]{MaximalCorr_Secrecy}. We can, therefore, conclude from \eqref{upperbound_g_hat} and Corollary~\ref{corollary_lowerbound_g_hat} that
\begin{equation}\label{g_hat_LB_UN}
  \eps\frac{H(Y)}{\rho_m^2(X;Y)}\leq \hat{g}_{\eps}(X;Y)\leq \log\left(k\eps+1\right) + H(Y|X).
\end{equation}

Similar to Lemma~\ref{Lemma_Concavity}, the following lemma shows that the $\hat{g}_{\eps}(X;Y)$ is a concave function of $\eps$.
\begin{lemma}\label{Lemma_concavity_gHat}
For any given pair of random variables $(X,Y)$ with distribution $P$ over $\X\times \Y$, the mapping $\eps\mapsto \hat{g}_{\eps}(X;Y)$ is concave for $\eps\geq 0$.
\end{lemma}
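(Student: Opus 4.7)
The plan is to imitate the proof of Lemma~\ref{Lemma_Concavity} almost verbatim, using the same Bernoulli time-sharing construction. Fix $0\leq \eps_1<\eps_2<\eps_3\leq \rho_m^2(X;Y)$ and set $\lambda := (\eps_2-\eps_1)/(\eps_3-\eps_1)$. I would pick optimal filters $P_{Z_1|Y}\in\hat{\D}_{\eps_1}(P)$ and $P_{Z_3|Y}\in\hat{\D}_{\eps_3}(P)$ with disjoint output alphabets $\Z_1,\Z_3$ (relabel if necessary), introduce $U\sim\sBer(\lambda)$ independent of $(X,Y)$, and let $Z_\lambda$ be drawn from $P_{Z_3|Y}$ when $U=1$ and from $P_{Z_1|Y}$ when $U=0$. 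This preserves the Markov chain $X\markov Y\markov Z_\lambda$, and since $\Z_1\cap\Z_3=\emptyset$, the random variable $U$ is a deterministic function of $Z_\lambda$.

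The utility side transfers immediately: because $U$ is a function of $Z_\lambda$ and independent of $Y$,
\[
I(Y;Z_\lambda)=I(Y;Z_\lambda,U)=I(Y;Z_\lambda\mid U)=\lambda I(Y;Z_3)+(1-\lambda)I(Y;Z_1)=\lambda\hat{g}_{\eps_3}(X;Y)+(1-\lambda)\hat{g}_{\eps_1}(X;Y).
\]
So the whole argument reduces to showing that the privacy budget also mixes linearly, i.e.\ $\rho_m^2(X;Z_\lambda)\leq \lambda\rho_m^2(X;Z_3)+(1-\lambda)\rho_m^2(X;Z_1)\leq \eps_2$, which would place $P_{Z_\lambda|Y}$ in $\hat{\D}_{\eps_2}(P)$ and yield $\hat{g}_{\eps_2}(X;Y)\geq\lambda\hat{g}_{\eps_3}(X;Y)+(1-\lambda)\hat{g}_{\eps_1}(X;Y)$, which is exactly~\eqref{concavity_requirement_2} for $\hat{g}_{\eps}$.

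The only nontrivial step is this linear-mixing bound for $\rho_m^2$. My plan is to invoke the R\'enyi variational formula~\eqref{Maximal_correlation_Equivalent}: for any $f$ with $\E f(X)=0$ and $\E f^2(X)=1$, use that $U$ is recoverable from $Z_\lambda$ to write $\E[f(X)\mid Z_\lambda]=\E[f(X)\mid Z_\lambda,U]$, then split the outer expectation by conditioning on $U$, and finally use that $U\indep(X,Y)$ to identify the conditional distribution of $(X,Z_\lambda)$ given $U=j$ with that of $(X,Z_{2j+1})$. This gives $\E[\E^2[f(X)\mid Z_\lambda]]=\lambda\E[\E^2[f(X)\mid Z_3]]+(1-\lambda)\E[\E^2[f(X)\mid Z_1]]$, and taking the supremum over admissible $f$ yields the desired convex-combination bound.

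The main (really the only) obstacle is this mixing inequality for $\rho_m^2$: for mutual information it is a one-line identity via the chain rule, whereas for maximal correlation it genuinely needs the variational representation plus the fact that $U$ is deterministically recoverable from $Z_\lambda$, which is why the disjoint-alphabet relabeling is essential. A minor technical point I would address in passing is existence of the optimizers; on finite alphabets with $|\Z|\leq|\Y|+1$ the map $P_{Z|Y}\mapsto\rho_m^2(X;Z)$ is continuous (it equals a squared singular value of the conditional expectation operator), so $\hat{\D}_{\eps}(P)$ is compact and the supremum is attained. If one prefers not to rely on this, taking $\eta$-optimal filters and sending $\eta\to 0$ gives the same conclusion.
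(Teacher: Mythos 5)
Your proposal matches the paper's proof essentially verbatim: the same Bernoulli time-sharing construction, the same observation that the utility term $I(Y;Z_\lambda)$ splits linearly, and the same use of R\'enyi's variational formula~\eqref{Maximal_correlation_Equivalent} together with the fact that $U$ is a deterministic function of $Z_\lambda$ (thanks to the disjoint alphabets) to obtain $\E[\E^2[f(X)\mid Z_\lambda]]=\lambda\E[\E^2[f(X)\mid Z_3]]+(1-\lambda)\E[\E^2[f(X)\mid Z_1]]$ and hence $\rho_m^2(X;Z_\lambda)\leq\lambda\rho_m^2(X;Z_3)+(1-\lambda)\rho_m^2(X;Z_1)$ by subadditivity of the supremum. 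Your parenthetical remark about attainment of the optimal filters (via compactness, or alternatively via $\eta$-optimal filters) is a small technical detail the paper leaves implicit, but it does not change the argument.
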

\begin{proof}
The proof is similar to that of Lemma~\ref{Lemma_Concavity} except that here for two optimal filters $P_{Z_1|Y}:Y\to Z_1$ and $P_{Z_3|Y}:Y\to Z_3$ in $\hat{\D}_{\eps_1}(P)$ and $\hat{\D}_{\eps_3}(P)$, respectively, and  the random channel $P_{Z_{\lambda}|Y}:Y\to Z$ with output alphabet $\Z_1\cup \Z_3$ constructed using a coin flip with probability $\gamma$, we need to show that $P_{Z_{\lambda}|Y}\in\hat{\D}_{\eps_2}(P)$, where $0\leq \eps_1<\eps_2<\eps_3\leq \rho^2_m(X;Y)$. To show this, consider $f:\X\to \R$ such that $\E[f(X)]=0$ and $\E[f^2(X)]=1$ and let $U$ be a binary random variable as in the proof of Lemma~\ref{Lemma_Concavity}. We then have
\begin{eqnarray}
\E[\E^2[f(X)|Z_{\lambda}]] &=&\E\left[\E[\E^2[f(X)|Z_{\lambda}]|U]\right]\nonumber\\
&\stackrel{(a)}{=}& \gamma \E[\E^2[f(X)|Z_3]]+(1-\gamma) \E[\E^2[f(X)|Z_1]],\label{proof_Concavity_gHat}
\end{eqnarray}
where $(a)$ comes from the fact that $U$ is  independent of $X$. We can then conclude from \eqref{proof_Concavity_gHat} and the alternative characterization of maximal correlation  \eqref{Maximal_correlation_Equivalent} that
\begin{eqnarray*}
  \rho^2_m(X;Z_{\lambda})&=&\sup_{f:\E[f(X)]=0, \E[f^2(X)]=1}\E[\E^2[f(X)|Z_{\lambda}]]  \\
   &=& \sup_{f:\E[f(X)]=0, \E[f^2(X)]=1}\left[\gamma \E[\E^2[f(X)|Z_3]]+(1-\gamma) \E[\E^2[f(X)|Z_1]]\right]\\
   &\leq& \gamma \rho^2_m(X;Z_3)+(1-\gamma) \rho^2_m(X;Z_1)\leq \gamma \eps_3+(1-\gamma)\eps_1,
\end{eqnarray*}
from which we can conclude that $P_{Z_{\lambda}|Y}\in\hat{\D}_{\eps_2}(P)$.
\end{proof}

\subsection{Non-Trivial Filters For Perfect Privacy}

As it becomes clear later, requiring that $g_0(X;Y)=0$ is a useful assumption for the analysis of $g_{\eps}(X;Y)$. Thus, it is interesting to find a necessary and sufficient condition on the joint distribution $P_{XY}$ which results in $g_0(X;Y)=0$.
\begin{definition}[\cite{berger}] \label{definition_weakly_ind}
 The random variable $X$ is said to be \emph{weakly independent} of $Y$ if the rows of the transition matrix $P_{X|Y}$, i.e., the set of vectors $\{P_{X|Y}(\cdot|y), ~y\in\mathcal{Y}\}$, are linearly dependent.
 \end{definition}
The following lemma provides a necessary and sufficient condition for $g_0(X;Y)>0$.
\begin{lemma}\label{Lemma_nece_suff}
For a given $(X,Y)$ with a given joint distribution $P_{XY}=P_Y\times P_{X|Y}$, $g_0(X;Y)>0$ (and equivalently $\hat{g}_0(X;Y)>0$) if and only if $X$ is weakly independent of $Y$.
\end{lemma}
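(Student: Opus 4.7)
The plan is to translate the condition $g_0(X;Y)>0$ into a concrete linear-algebraic statement about the rows of $P_{X|Y}$, and then establish each direction by a direct construction/deduction. Observe first that $g_0(X;Y)>0$ if and only if there exists a channel $P_{Z|Y}$ with $X\markov Y\markov Z$ such that $X\indep Z$ but $Y\not\indep Z$; the same reformulation works for $\hat g_0$ because $\rho_m(X;Z)=0$ is equivalent to $X\indep Z$. Thus the two claims coincide, and I only need to treat $g_0$.

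For the sufficiency direction (weak independence $\Rightarrow g_0>0$), I would construct an explicit binary privacy filter. Fix coefficients $(a_y)_{y\in\Y}$, not all zero, with $\sum_y a_y P_{X|Y}(x|y)=0$ for every $x\in\X$. Summing the relation over $x$ and using $\sum_x P_{X|Y}(x|y)=1$ gives $\sum_y a_y=0$, so $a$ is a nonzero signed measure orthogonal to the constants. Take $Z\in\{0,1\}$ and parametrize a candidate filter by $\alpha_y:=P_{Z|Y}(1|y)$. Writing $\beta_y:=P_Y(y)\alpha_y$, the independence requirement $P_{XZ}(x,1)=P_X(x)P_Z(1)$ becomes $\sum_y \beta_y[P_{X|Y}(x|y)-P_X(x)]=0$ for all $x$. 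Any vector $\beta$ of the form $\beta_y=\epsilon a_y + tP_Y(y)$ with $t\in\R$ satisfies this (because $a$ kills the bracket and the $P_Y$ part cancels by the definition of $P_X$). Choosing $t=\tfrac12$ and $\epsilon>0$ small enough guarantees $\alpha_y=\beta_y/P_Y(y)\in(0,1)$, so the filter is well defined; non-triviality ($Y\not\indep Z$) follows because $\alpha_y$ is constant in $y$ iff $a_y\propto P_Y(y)$, which together with $\sum_y a_y=0$ forces $a=0$, contradicting the choice of $a$.

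For the necessity direction ($g_0>0 \Rightarrow$ weak independence), I would extract a linear dependence directly from any admissible filter. If $P_{Z|Y}$ yields $X\indep Z$ with $Y\not\indep Z$, the Markov chain $X\markov Y\markov Z$ gives $P_X(x)=\sum_y P_{X|Y}(x|y)P_{Y|Z}(y|z)$ for every $z$ with $P_Z(z)>0$. Because $Y$ is not independent of $Z$, there exist two such values $z_1\neq z_2$ with $P_{Y|Z}(\cdot|z_1)\neq P_{Y|Z}(\cdot|z_2)$. Subtracting the two identities and setting $a_y:=P_{Y|Z}(y|z_1)-P_{Y|Z}(y|z_2)$ produces a nonzero vector $a$ satisfying $\sum_y a_y P_{X|Y}(x|y)=0$ for every $x$, which is exactly weak independence.

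The only real obstacle is in the sufficiency part: making sure that the algebraically motivated choice $\beta_y=\epsilon a_y+tP_Y(y)$ yields a bona fide stochastic channel (i.e.\ $\alpha_y\in[0,1]$) without destroying non-triviality. The cushion $tP_Y(y)$ with $t=\tfrac12$ together with the freedom to shrink $\epsilon$ handles both requirements at once; the remainder of the argument is a routine check of marginals. WLOG one may assume $P_Y(y)>0$ on $\Y$, since otherwise the problem reduces to the support of $Y$.
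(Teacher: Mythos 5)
Your proof is correct and follows the paper's overall approach: reformulate $g_0(X;Y)>0$ as the existence of a nontrivial filter making $Z$ independent of $X$, and link this to linear dependence of the rows $\{P_{X|Y}(\cdot|y)\}$. Your necessity direction is essentially identical to the paper's (subtracting the Markov identities for two distinct values of $z$). The one genuine difference is in the sufficiency direction: the paper simply cites Berger and Yeung's Appendix II for the construction of a binary $Z$, whereas you carry out the construction explicitly by affinely combining the null vector $a$ with the cushion $tP_Y$, checking stochasticity via the shrinkage parameter $\epsilon$, and verifying non-triviality by noting that $a\propto P_Y$ together with $\sum_y a_y=0$ forces $a=0$. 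Your version is more self-contained; the paper's is shorter. Both are sound. One small point worth making explicit in your write-up: the observation that $a$ necessarily has entries of both signs (since $\sum_y a_y=0$ and $a\neq 0$) is what guarantees the filter is genuinely nontrivial, and this is implicit in your final contradiction; stating it up front would tighten the exposition.
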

\begin{proof}
$\Rightarrow$ direction:

Assuming that $g_0(X;Y)>0$ implies that there exists a random variable $Z$ over an alphabet $\Z$ such that the Markov condition $X\markov Y\markov Z$ is satisfied and $Z\indep X$ while $I(Y;Z)>0$. Hence, for any $z_1$ and $z_2$ in $\Z$, we must have $P_{X|Z}(x|z_1)=P_{X|Z}(x|z_2)$ for all $x\in \X$, which implies that
$$\sum_{y\in \Y} P_{X|Y}(x|y)P_{Y|Z}(y|z_1)=\sum_{y\in \Y} P_{X|Y}(x|y)P_{Y|Z}(y|z_2)$$ and hence $$\sum_{y\in \Y} P_{X|Y}(x|y)\left[P_{Y|Z}(y|z_1)-P_{Y|Z}(y|z_2)\right]=0.$$ Since $Y$ is not independent of $Z$, there exist $z_1$ and $z_2$ such that $P_{Y|Z}(y|z_1)\neq P_{Y|Z}(y|z_2)$ and hence the above shows that the set of vectors $P_{X|Y}(\cdot|y)$, $y\in \Y$ is linearly dependent.
\clearpage
$\Leftarrow$ direction:

Berger and Yeung \cite[Appendix II]{berger}, in a completely different context, showed that if $X$ being weakly independent of $Y$, one can always construct a binary random variable $Z$ correlated with $Y$ which satisfies $X\markov Y\markov Z$ and $X\indep Z$, and hence $g_0(X;Y)>0$.
\end{proof}
\begin{remark}
Lemma~\ref{Lemma_nece_suff} first appeared in \cite{Asoodeh_Allerton}. However,
Calmon et al.\ \cite{Calmon_fundamental-Limit} studied \eqref{Dual_gEpsilon}, the dual version of $g_{\eps}(X;Y)$, and showed an equivalent result for $t_{R}(X;Y)$. In fact, they showed that for a given $P_{XY}$, one can always generate $Z$ such that $I(X;Z)=0$,  $I(Y;Z)>0$ and $X\markov Y\markov Z$, or equivalently $g_0(X;Y)>0$, if and only if the smallest singular value of the conditional expectation operator $f\mapsto \E[f(X)|Y]$ is zero. This condition can, in fact, be shown to be equivalent to the condition in Lemma~\ref{Lemma_nece_suff}.

\end{remark}
\begin{remark}\label{Remark_on_Weak_Independence}
It is clear that, according to Definition~\ref{definition_weakly_ind}, $X$ is weakly independent of $Y$ if $|\Y|>|\X|$. Hence, Lemma~\ref{Lemma_nece_suff} implies that  $g_0(X;Y)>0$ if $Y$ has strictly larger alphabet than $X$.
\end{remark}
In light of the above remark, in the most common case of $|\Y|=|\X|$, one might have $g_0(X;Y)=0$, which corresponds to the most conservative scenario as no privacy leakage implies no broadcasting of observable data. In such cases, the rate of increase of $g_{\eps}(X;Y)$ at $\eps=0$, that is $g'_0(X;Y):=\frac{\text{d}}{\text{d}\eps}g_{\eps}(X;Y)|_{\eps=0}$, which corresponds to the initial efficiency of privacy-constrained information extraction,  proves to be very important in characterizing the behavior of $g_{\eps}(X;Y)$ for all $\eps\geq 0$. This is because, for example, by concavity of $\eps\mapsto g_{\eps}(X;Y)$, the slope of $g_{\eps}(X;Y)$ is maximized at $\eps=0$ and so
$$g'_0(X;Y)=\lim_{\eps\to 0}\frac{g_{\eps}(X;Y)}{\eps}=\sup_{\eps> 0}\frac{g_{\eps}(X;Y)}{\eps},$$ and hence  $g_{\eps}(X;Y)\leq \eps g'_0(X;Y)$ for all $\eps\leq I(X;Y)$ which, together with \eqref{UB&LB}, implies that $g_{\eps}(X;Y)=\eps\frac{H(Y)}{I(X;Y)}$ if $g'_0(X;Y)\leq\frac{H(Y)}{I(X;Y)}$. In the sequel, we always assume that $X$ is not weakly independent of $Y$, or equivalently $g_0(X;Y)=0$. For example, in light of Lemma~\ref{Lemma_nece_suff} and Remark~\ref{Remark_on_Weak_Independence}, we can assume that $|\Y|\leq |\X|$.

 It is easy to show that, $X$ is weakly independent of binary $Y$ if and only if $X$ and $Y$ are independent (see e.g., \cite[Remark 2]{berger}). The following corollary, therefore, immediately follows from Lemma~\ref{Lemma_nece_suff}.
 \begin{corollary}\label{generalizedtheorem1}
Let $Y$ be a non-degenerate binary random variable correlated with $X$. Then $g_0(X;Y)=0$.
\end{corollary}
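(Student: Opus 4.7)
The plan is to apply Lemma~\ref{Lemma_nece_suff} directly, which says $g_0(X;Y)>0$ iff $X$ is weakly independent of $Y$. So it suffices to verify the claim made right before the corollary: when $Y$ is binary, weak independence of $X$ from $Y$ is equivalent to outright independence of $X$ and $Y$. Once that equivalence is in hand, the hypothesis that $X$ and $Y$ are correlated rules out weak independence, so Lemma~\ref{Lemma_nece_suff} forces $g_0(X;Y)=0$.

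To establish the equivalence for binary $Y$, I would argue as follows. With $\Y=\{0,1\}$, the set $\{P_{X|Y}(\cdot\mid 0),\, P_{X|Y}(\cdot\mid 1)\}$ consists of two vectors in the probability simplex over $\X$. These two vectors are linearly dependent iff one is a scalar multiple of the other; but both have entries summing to $1$, so the only admissible scalar is $1$, and hence the two conditional distributions must be identical. That forces $P_{X|Y}(\cdot\mid y)=P_X(\cdot)$ for $y\in\{0,1\}$ (using non-degeneracy of $Y$ to make both conditional distributions well-defined), which is precisely $X\indep Y$. Conversely, if $X\indep Y$ then $P_{X|Y}(\cdot\mid 0)=P_{X|Y}(\cdot\mid 1)$, which is a trivial linear dependence. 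Thus for binary non-degenerate $Y$, weak independence and independence coincide.

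Combining the two ingredients: since $X$ and $Y$ are assumed correlated, $X$ is not independent of $Y$, hence not weakly independent of $Y$, and therefore Lemma~\ref{Lemma_nece_suff} yields $g_0(X;Y)=0$. There is no real obstacle here; the only slightly non-trivial point is recognizing that linear dependence of two probability vectors collapses to equality because of the normalization constraint, which is what makes the binary case special (for $|\Y|\geq 3$ one can have linearly dependent rows without any two being equal).
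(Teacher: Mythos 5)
Your proof is correct and follows essentially the same route as the paper: invoke Lemma~\ref{Lemma_nece_suff} and use the fact that for binary $Y$, weak independence of $X$ from $Y$ coincides with independence. The only difference is that you prove this equivalence directly (via the observation that two probability vectors that are linearly dependent must in fact be equal, since the scalar relating them is pinned to $1$ by normalization), whereas the paper simply cites Remark~2 of Berger--Yeung for it; your short argument is a welcome self-contained substitute.
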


\section{Operational Interpretations of the Rate-Privacy Function}

In this section, we provide a scenario in which $g_{\eps}(X;Y)$ appears as a boundary point of an achievable rate region and thus giving an information-theoretic operational interpretation for $g_{\eps}(X;Y)$. We then proceed to present an estimation-theoretic motivation for $\hat{g}_{\eps}(X;Y)$.

\subsection{Dependence Dilution}

Inspired by the problems of information amplification \cite{Cover_State_Amplification} and state masking \cite{Merhav_state_masking},  Courtade \cite{Courtade_Aplification} proposed the \emph{information-masking tradeoff} problem as follows. The tuple $(R_u, R_v, \Delta_A, \Delta_M)\in\R^4$ is said to be achievable if for two given separated sources $U\in\U$ and $V\in\V$ and any $\eps>0$ there exist mappings $f:\U^n\to \{1, 2, \dots, 2^{nR_u}\}$
 and $g:\V^n\to \{1, 2, \dots, 2^{nR_v}\}$ such that $I(U^n; f(U^n), g(V^n))\leq n(\Delta_M+\eps)$ and $I(V^n; f(U^n), g(V^n))\geq n(\Delta_A-\eps)$. In other words, $(R_u, R_v, \Delta_A, \Delta_M)$ is achievable if there exist indices $K$ and $J$ of rates $R_u$ and $R_v$ given $U^n$ and $V^n$, respectively, such that the receiver in possession of $(K, J)$ can recover at most $n\Delta_M$ bits about $U^n$ and at least $n\Delta_A$ about $V^n$. The closure of the set of all achievable tuple $(R_u, R_v, \Delta_A, \Delta_M)$ is characterized in \cite{Courtade_Aplification}. Here, we look at a similar problem but for a joint encoder. In fact, we want to examine the achievable rate of an encoder observing both $X^n$ and $Y^n$ which masks $X^n$ and amplifies $Y^n$ at the same time, by rates $\Delta_M$ and $\Delta_A$, respectively.

 We define a $(2^{nR}, n)$ \emph{dependence dilution} code by an encoder
$$f_n:\X^n\times \Y^n\to \{1,2,\dots, 2^{nR}\},$$
and a list decoder
$$g_n:\{1,2,\dots, 2^{nR}\}\to 2^{\Y^n},$$ where $2^{\Y^n}$ denotes the power set of $\Y^n$. A \emph{dependence dilution triple} $(R, \Delta_A, \Delta_M)\in\R^3_+$ is said to be achievable if, for any $\delta>0$, there exists a $(2^{nR}, n)$ dependence dilution code that for sufficiently large $n$ satisfies the utility constraint:
\begin{equation}\label{list_decoder}
    \Pr\left(Y^n\notin g_n(J)\right)<\delta
\end{equation}
having a fixed list size
\begin{equation}\label{lis_size}
    |g_n(J)|=2^{n(H(Y)-\Delta_A)},\qquad \forall J\in\{1,2,\dots, 2^{nR}\}
\end{equation}
where $J:=f_n(X^n, Y^n)$ is the encoder's output, and satisfies the privacy constraint:
\begin{equation}\label{privacy_constraint_list_decoder}
    \frac{1}{n}I(X^n; J)\leq \Delta_M+\delta.
\end{equation}
Intuitively speaking, upon receiving $J$, the decoder is required to construct list $g_n(J)\subset \Y^n$ of fixed size which contains likely candidates of the actual sequence $Y^n$. Without any observation, the decoder can only construct a list of size $2^{nH(Y)}$ which contains $Y^n$ with probability close to one. However, after $J$ is observed and the list $g_n(J)$ is formed, the decoder's list size can be reduced to $2^{n(H(Y)-\Delta_A)}$ and thus reducing the uncertainty about $Y^n$ by $0\leq n\Delta_A\leq nH(Y)$. This observation led Kim et al.\ \cite{Cover_State_Amplification} to show that the utility constraint \eqref{list_decoder} is equivalent to the amplification requirement
\begin{equation}\label{amplification_equivalence}
    \frac{1}{n}I(Y^n; J)\geq \Delta_A-\delta,
\end{equation}
which lower bounds the amount of information $J$ carries about $Y^n$. The following lemma gives an outer bound for the achievable dependence dilution region.
\begin{theorem}\label{Theorem_correlation_dilution}
Any achievable dependence dilution triple $(R, \Delta_A, \Delta_M)$ satisfies
 \begin{equation*}
  \begin{cases}
        ~~\,R \geq \Delta_A\\
        \Delta_A \leq I(Y;U)\\
        \Delta_M \geq I(X;U)-I(Y;U)+\Delta_A,
        \end{cases}
 \end{equation*}
for some  auxiliary random variable $U\in\U$ with a finite alphabet and jointly distributed with $X$ and $Y$.
\end{theorem}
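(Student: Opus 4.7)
The plan is a standard single-letterization converse in the style of \cite{Courtade_Aplification,Cover_State_Amplification}, with the auxiliary random variable chosen so that $\Delta_M-\Delta_A$ telescopes under the Csisz\'ar sum identity. First, I would invoke the amplification equivalence \eqref{amplification_equivalence} to replace the list-decoding utility constraints \eqref{list_decoder}--\eqref{lis_size} by the mutual information lower bound $I(Y^n;J)\geq n(\Delta_A-\delta)$. The rate bound is then essentially free: $nR\geq H(J)\geq I(Y^n;J)\geq n(\Delta_A-\delta)$, and sending $\delta\to 0$ yields $R\geq \Delta_A$.

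For the other two inequalities, the crucial step is to pick the auxiliary
$$U_i:=(J,X^{i-1},Y_{i+1}^n),\qquad i=1,\dots,n.$$
Expanding $I(X^n;J)$ via the chain rule along the $X$-coordinate and $I(Y^n;J)$ along the $Y$-coordinate, and using the memoryless structure $X_i\indep X^{i-1}$ and $Y_i\indep Y_{i+1}^n$, I expect to obtain
\begin{align*}
I(X^n;J) &= \sum_{i=1}^{n} I(X_i;U_i) - \sum_{i=1}^{n} I(X_i;Y_{i+1}^n\mid J,X^{i-1}),\\
I(Y^n;J) &= \sum_{i=1}^{n} I(Y_i;U_i) - \sum_{i=1}^{n} I(Y_i;X^{i-1}\mid J,Y_{i+1}^n).
\end{align*}
The Csisz\'ar sum identity equates the two residual double sums, yielding the clean identity
$$I(X^n;J)-I(Y^n;J)=\sum_{i=1}^{n}\bigl[I(X_i;U_i)-I(Y_i;U_i)\bigr],$$
and in particular $\sum_{i} I(Y_i;U_i)\geq I(Y^n;J)$ since the subtracted double sum is nonnegative.

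To single-letterize, I would introduce a time-sharing variable $Q$ uniform on $\{1,\dots,n\}$ independent of $(X^n,Y^n,J)$ and set $U:=(U_Q,Q)$, $X:=X_Q$, $Y:=Y_Q$. Because the source is i.i.d.\ under $P_{XY}$, the pair $(X,Y)$ is distributed as $P_{XY}$, and standard averaging gives $\sum_i I(X_i;U_i)=n\,I(X;U)$ and $\sum_i I(Y_i;U_i)=n\,I(Y;U)$. Combining with $I(X^n;J)\leq n(\Delta_M+\delta)$ and $I(Y^n;J)\geq n(\Delta_A-\delta)$ then produces $\Delta_A\leq I(Y;U)+\delta$ and $\Delta_M\geq I(X;U)-I(Y;U)+\Delta_A-2\delta$, and $\delta\to 0$ closes the argument (noting that $U$ takes values in a finite alphabet automatically since $Q$ and each $U_i$ are finite-valued). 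The main obstacle is not any single calculation but the choice of $U_i$: it must be rich enough for $I(X^n;J)-I(Y^n;J)$ to collapse under Csisz\'ar's identity while still representing the \emph{same} $U$ in both the amplification and masking bounds, and it is the i.i.d.\ structure of $(X^n,Y^n)$ that drives the residual conditional terms to cancel exactly.
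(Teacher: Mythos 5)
Your proof is correct and follows essentially the paper's approach: Fano's inequality for list decoders (via the amplification equivalence) gives $R\geq\Delta_A$, a chain-rule expansion with the Csisz\'ar sum identity collapses $I(X^n;J)-I(Y^n;J)$ into a single-letter sum, and time-sharing with $U=(U_Q,Q)$ finishes the argument. The only difference is a cosmetic mirror image in the auxiliary: you take $U_i=(J,X^{i-1},Y_{i+1}^n)$ while the paper takes $U_i=(J,X_{i+1}^n,Y^{i-1})$; by the symmetry of the i.i.d.\ source and of the Csisz\'ar sum identity, the two choices are interchangeable and produce identical bounds.
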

Before we prove this theorem, we need two preliminary lemmas. The first lemma is an extension of Fano's inequality for list decoders and the second one makes use of a single-letterization technique to express $I(X^n; J)-I(Y^n; J)$ in a single-letter form in the sense of Csisz\'{a}r and K\"{o}rner \cite{csiszarbook}.
\begin{lemma}[\cite{Cover_State_Amplification,Ahlswede_korner_lossless_decoder}]\label{lemma_fano_list}
Given a pair of random variables $(U,V)$ defined over $\U\times \V$ for finite $\V$ and arbitrary $\U$, any list decoder $g: \U\to 2^{\V}$, $U\mapsto g(U)$ of fixed list size $m$ (i.e., $|g(u)|=m, ~\forall u\in \U$), satisfies
$$H(V|U)\leq h_b(p_e)+p_e\log |\V|+(1-p_e)\log m,$$
where $p_e:=\Pr(V\notin g(U))$ and $h_b:[0, 1]\to [0, 1]$ is the binary entropy function.
\end{lemma}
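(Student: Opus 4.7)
The plan is to mirror the proof of classical Fano's inequality, using as the key auxiliary variable the indicator of the list-decoding error. Concretely, I would introduce a binary random variable $E$ defined by $E=1$ if $V\notin g(U)$ and $E=0$ otherwise. Then $E$ is Bernoulli with parameter $p_e$, and, crucially, $E$ is a deterministic function of the pair $(U,V)$.

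Next I would expand $H(V,E\mid U)$ in two different ways using the chain rule for conditional entropy. Since $E$ is determined by $(U,V)$, one has $H(V,E\mid U)=H(V\mid U)+H(E\mid V,U)=H(V\mid U)$. On the other hand,
\begin{equation*}
H(V,E\mid U)=H(E\mid U)+H(V\mid E,U)\leq h_b(p_e)+H(V\mid E,U),
\end{equation*}
where I use that conditioning reduces entropy and that a binary random variable with mean $p_e$ has entropy at most $h_b(p_e)$. Equating these two expansions gives $H(V\mid U)\leq h_b(p_e)+H(V\mid E,U)$, so it remains only to bound $H(V\mid E,U)$.

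For that bound I would split on the value of $E$:
\begin{equation*}
H(V\mid E,U)=(1-p_e)\,H(V\mid U,E=0)+p_e\,H(V\mid U,E=1).
\end{equation*}
Conditional on $E=0$ and $U=u$, the random variable $V$ is supported in $g(u)$, whose cardinality is exactly $m$ for every $u$, so $H(V\mid U,E=0)\leq\log m$ uniformly in $u$. Conditional on $E=1$, only the trivial bound $H(V\mid U,E=1)\leq\log|\V|$ is available, which is fine since $\V$ is finite. Substituting these two estimates back produces the claimed inequality.

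The argument is essentially routine, being a direct generalization of the classical Fano inequality in which the unique correct codeword is replaced by a list of size $m$; accordingly, I do not anticipate a genuine obstacle. The only point that warrants a moment of care is verifying that the $E=0$ bound $\log m$ does not depend on $u$, so that it survives the outer averaging over $P_U$; this is precisely where the hypothesis of a \emph{fixed} list size $m$ (rather than one depending on the observation) is used. The hypothesis that $\U$ may be arbitrary while $\V$ is finite poses no difficulty, since all entropies appearing in the derivation are those of the discrete random variables $V$ and $E$ conditioned on $U$.
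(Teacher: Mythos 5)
Your proof is correct and is exactly the standard list-decoding extension of Fano's inequality (introduce the error indicator $E$, expand $H(V,E\mid U)$ two ways, and bound $H(V\mid E,U)$ by splitting on $E$, using the fixed list size $m$ on the event $E=0$); the paper itself states this lemma with a citation and gives no proof, and your argument matches the one in the cited sources. The only cosmetic quibble is that a Bernoulli$(p_e)$ variable has entropy exactly $h_b(p_e)$ — the inequality $H(E\mid U)\leq h_b(p_e)$ comes from conditioning reducing entropy — but your chain of inequalities is valid as written.
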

This lemma, applied to $J$ and $Y^n$ in place of $U$ and $V$, respectively, implies that for any list decoder with the property \eqref{list_decoder}, we have
\begin{equation}\label{Fano_list_decoder}
    H(Y^n|J)\leq \log |g_n(J)|+n\eps_n,
\end{equation}
where $\eps_n:=\frac{1}{n}+(\log |\Y|-\frac{1}{n}\log |g_n(J)|)p_e$ and hence
$\eps_n\to 0$ as $n\to \infty$.

\begin{lemma}\label{lemma_single_letterization}
Let $(X^n, Y^n)$ be $n$ i.i.d. copies of a pair of random variables $(X,Y)$. Then for a random variable $J$ jointly distributed with $(X^n, Y^n)$, we have
$$I(X^n;J)-I(Y^n;J)=\sum_{i=1}^n[I(X_i; U_i)-I(Y_i; U_i)],$$ where $U_i:=(J, X_{i+1}^n, Y^{i-1})$.
\end{lemma}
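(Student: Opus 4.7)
The plan is to expand both $I(X^n;J)$ and $I(Y^n;J)$ via the chain rule, but in two complementary orderings, and then invoke the Csisz\'{a}r sum identity to match the right-hand side exactly. First I would write
$$I(X^n;J)=\sum_{i=1}^n I(X_i;J\mid X_{i+1}^n) \qquad\text{and}\qquad I(Y^n;J)=\sum_{i=1}^n I(Y_i;J\mid Y^{i-1}),$$
where the reverse ordering for $X^n$ and the forward ordering for $Y^n$ are chosen so that the conditioning variables line up with those appearing in $U_i=(J,X_{i+1}^n,Y^{i-1})$. Because $(X^n,Y^n)$ is i.i.d., we have $H(X_i\mid X_{i+1}^n)=H(X_i)$ and $H(Y_i\mid Y^{i-1})=H(Y_i)$, so these two sums are the ``marginal'' pieces that will ultimately reproduce the left-hand side of the claim.

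Next, for each $i$ I would decompose $I(X_i;U_i)$ and $I(Y_i;U_i)$ using the chain rule of mutual information together with the fact that $X_i$ is independent of $X_{i+1}^n$ and $Y_i$ is independent of $Y^{i-1}$:
\begin{align*}
I(X_i;U_i) &= I(X_i;J\mid X_{i+1}^n) + I(X_i;Y^{i-1}\mid J,X_{i+1}^n),\\
I(Y_i;U_i) &= I(Y_i;J\mid Y^{i-1}) + I(Y_i;X_{i+1}^n\mid J,Y^{i-1}).
\end{align*}
The first terms on the right-hand sides are precisely the summands obtained in the previous step, while the second terms are the ``Csisz\'{a}r residues'' that must be reconciled.

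The final ingredient is the Csisz\'{a}r sum identity (see \cite{csiszarbook}), which states
$$\sum_{i=1}^n I(X_i;Y^{i-1}\mid J,X_{i+1}^n)=\sum_{i=1}^n I(Y_i;X_{i+1}^n\mid J,Y^{i-1}).$$
Summing the two decompositions over $i$ and subtracting, the residues cancel by this identity, leaving exactly $I(X^n;J)-I(Y^n;J)$ on the left and $\sum_i [I(X_i;U_i)-I(Y_i;U_i)]$ on the right, as desired. The only real obstacle is bookkeeping — choosing the chain-rule orderings so that the residual terms match the Csisz\'{a}r sum identity and the i.i.d.\ structure can be exploited cleanly; once the orderings are aligned, the telescoping is classical.
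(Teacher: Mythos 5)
Your proposal is correct and follows essentially the same route as the paper: chain-rule expansions of $I(X^n;J)$ and $I(Y^n;J)$ in complementary orderings, use of the i.i.d.\ structure to absorb $X_{i+1}^n$ (resp.\ $Y^{i-1}$) into the mutual information terms, and cancellation of the residual sums via the Csisz\'{a}r sum identity. The only cosmetic difference is that you apply the chain rule to $I(X_i;U_i)$ and $I(Y_i;U_i)$ directly, whereas the paper builds up to $I(X_i;U_i)$ from $I(X_i;J,X_{i+1}^n)$; the computations are identical.
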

\begin{proof}
Using the chain rule for the mutual information, we can express $I(X^n; J)$ as follows
\begin{eqnarray}
  I(X^n; J) &=& \sum_{i=1}^n I(X_i; J|X_{i+1}^n)=\sum_{i=1}^n I(X_i; J,X_{i+1}^n)\nonumber\\
   &=& \sum_{i=1}^n[I(X_i; J,X_{i+1}^n, Y^{i-1})-I(X_i; Y^{i-1}|J, X_{i+1}^n)] \nonumber\\
   &=&  \sum_{i=1}^nI(X_i; U_i)-\sum_{i=1}^nI(X_i; Y^{i-1}|J, X_{i+1}^n).\label{proof_single_letterizaton1}
\end{eqnarray}
Similarly, we can expand $I(Y^n; J)$ as
\begin{eqnarray}
  I(Y^n; J) &=& \sum_{i=1}^n I(Y_i; J|Y^{i-1})=\sum_{i=1}^n I(Y_i; J,Y^{i-1})\nonumber\\
   &=& \sum_{i=1}^n[I(Y_i; J,X_{i+1}^n, Y^{i-1})-I(Y_i; X_{i+1}^n|J, Y^{i-1})] \nonumber\\
   &=&  \sum_{i=1}^nI(Y_i; U_i)-\sum_{i=1}^nI(Y_i; X_{i+1}^n|J, Y^{i-1}).\label{proof_single_letterizaton2}
\end{eqnarray}
Subtracting \eqref{proof_single_letterizaton2} from \eqref{proof_single_letterizaton1}, we get
\begin{eqnarray*}
  I(X^n; J)-I(Y^n; J) &=& \sum_{i=1}^n[I(X_i; U_i)-I(Y_i; U_i)]-\sum_{i=1}^n[I(X_i; Y^{i-1}|J, X_{i+1}^n)-I(X_{i+1}^n ;Y_i|J, Y^{i-1})] \nonumber \\
   &\stackrel{(a)}{=}&  \sum_{i=1}^n[I(X_i; U_i)-I(Y_i; U_i)],
\end{eqnarray*}
where $(a)$ follows from the Csisz\'{a}r sum identity \cite{networkinfotheory}.
\end{proof}
\begin{proof}[Proof of Theorem~\ref{Theorem_correlation_dilution}]
The rate $R$ can be bounded as
\begin{eqnarray}
  nR &\geq& H(J)\geq I(Y^n;J) \\
   &=& nH(Y)-H(Y^n|J) \nonumber\\
   &\stackrel{(a)}{\geq} & nH(Y)-\log |g_n(J)|-n\eps_n\nonumber\\
   &\stackrel{(b)}{=}& n\Delta_A-n\eps_n,  \label{proof_theorem_rate}
\end{eqnarray}
where $(a)$ follows from Fano's inequality \eqref{Fano_list_decoder} with $\eps_n\to 0$ as $n\to \infty$ and $(b)$ is due to \eqref{lis_size}.
We can also upper bound $\Delta_A$ as
\begin{eqnarray}
  \Delta_A &\stackrel{(a)}{=}& H(Y^n)-\log |g_n(J)| \nonumber\\
   &\stackrel{(b)}{\leq} &  H(Y^n)-H(Y^n|J)+n\eps_n\nonumber\\
  &=& \sum_{i=1}^n H(Y_i)-H(Y_i|Y^{i-1}, J)+n\eps_n\nonumber\\
&\leq &\sum_{i=1}^n H(Y_i)-H(Y_i|Y^{i-1}, X_{i+1}^n, J)+n\eps_n\nonumber\\
&=&\sum_{i=1}^nI(Y_i; U_i)+n\eps_n, \label{proof_theorem_amplification}
\end{eqnarray}
where $(a)$ follows from \eqref{lis_size}, $(b)$ follows from \eqref{Fano_list_decoder}, and in the last equality the auxiliary random variable $U_i:=(Y^{i-1}, X_{i+1}^n, J)$ is introduced.

We shall now lower bound $I(X^n; J)$:
\begin{eqnarray}
    \hspace{-0.5cm}n(\Delta_M+\delta)&\geq& I(X^n; J)\nonumber\\
    &\stackrel{(a)}{=}&  I(Y^n; J)+\sum_{i=1}^n [I(X_i;U_i)-I(Y_i; U_i)]\nonumber\\
    &\stackrel{(b)}{\geq}& n\Delta_A +\sum_{i=1}^n [I(X_i;U_i)-I(Y_i; U_i)]-n\eps_n\label{proof_theorem_masking}.
\end{eqnarray}
where $(a)$ follows from Lemma~\ref{lemma_single_letterization} and $(b)$ is due to Fano's inequality and \eqref{lis_size} (or equivalently from \eqref{amplification_equivalence}).

Combining \eqref{proof_theorem_rate}, \eqref{proof_theorem_amplification} and \eqref{proof_theorem_masking}, we can write
 \begin{eqnarray*}
   R &\geq & \Delta_A-\eps_n \\
   \Delta_A &\leq& I(Y_Q; U_Q|Q)+\eps_n= I(Y_Q; U_Q,Q)+\eps_n\\
   \Delta_M&\geq& \Delta_A +I(X_Q;U_Q|Q)-I(Y_Q; U_Q|Q)-\eps'_n\\
   &=&\Delta_A +I(X_Q;U_Q,Q)-I(Y_Q; U_Q,Q)-\eps'_n
 \end{eqnarray*}
 where $\eps'_n:=\eps_n+\delta$ and $Q$ is a random variable distributed uniformly over $\{1,2,\dots, n\}$ which is independent of $(X,Y)$ and hence $I(Y_Q; U_Q|Q)=\frac{1}{n}\sum_{i=1}^nI(Y_i; U_i)$. The results follow by denoting $U:=(U_Q,Q)$ and noting that $Y_Q$ and $X_Q$ have the same distributions as $Y$ and $X$, respectively.
\end{proof}
If the encoder does not have direct access to the private source $X^n$, then we can define the encoder mapping as $f_n:\Y^n\to \{1, 2, \dots, s^{nR}\}$. The following corollary is an immediate consequence of Theorem~\ref{Theorem_correlation_dilution}.
\begin{corollary}\label{Corollary_Dilution}
If the encoder does not see the private source, then for all achievable dependence dilution triple $(R, \Delta_A, \Delta_M)$, we have
 \begin{equation*}
  \begin{cases}
        ~~\,R \geq  \Delta_A \\
        \Delta_A \leq I(Y;U)\\
        \Delta_M \geq I(X;U)-I(Y;U)+\Delta_A,
        \end{cases}
 \end{equation*}
for some joint distribution $P_{XYU}=P_{XY}P_{U|Y}$ where the auxiliary random variable $U\in\U$ satisfies $|\U|\leq |\Y|+1$.
\end{corollary}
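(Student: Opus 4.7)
The proof will essentially retrace that of Theorem~\ref{Theorem_correlation_dilution}, exploiting the additional Markov structure $X^n\markov Y^n\markov J$ that is automatic once the encoder sees only $Y^n$. First I would copy the first two displays from the proof of Theorem~\ref{Theorem_correlation_dilution}: the rate bound $nR\ge H(J)\ge I(Y^n;J)\ge n\Delta_A-n\eps_n$ obtained from Fano's list inequality~\eqref{Fano_list_decoder} together with~\eqref{lis_size}, and the amplification bound $n\Delta_A\le\sum_{i=1}^n I(Y_i;U_i)+n\eps_n$ with the same single-letter choice $U_i:=(J,Y^{i-1},X_{i+1}^n)$. Next, the identity of Lemma~\ref{lemma_single_letterization}, combined with the masking hypothesis~\eqref{privacy_constraint_list_decoder}, again yields $n(\Delta_M+\delta)\ge n\Delta_A+\sum_{i=1}^n\bigl[I(X_i;U_i)-I(Y_i;U_i)\bigr]-n\eps_n$. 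None of these three steps used that the encoder had access to $X^n$, so they go through verbatim.

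The genuinely new content is the Markov factorization $P_{XYU}=P_{XY}P_{U|Y}$ claimed by the corollary. For each $i$, since the pairs $(X_j,Y_j)$ are i.i.d., $X_i$ is conditionally independent of $(X^{i-1},Y^{i-1},X_{i+1}^n,Y_{i+1}^n)$ given $Y_i$; and since $J=f_n(Y^n)$ depends on $Y^n$ alone, $X_i\markov Y_i\markov J$ also holds. Combining these two observations gives $X_i\markov Y_i\markov U_i$ for every $i$. Now introduce a time-sharing random variable $Q$ uniform on $\{1,\dots,n\}$ and independent of $(X^n,Y^n)$, and set $X:=X_Q$, $Y:=Y_Q$, $U:=(U_Q,Q)$. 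A short computation using the fact that $(X_q,Y_q)$ has the same marginal distribution for every $q$ shows that
\[
P\bigl(X=x,\,U=(u,q)\mid Y=y\bigr)=P(X=x\mid Y=y)\,P\bigl(U=(u,q)\mid Y=y\bigr),
\]
so the conditional Markov chain lifts to the unconditional chain $X\markov Y\markov U$, as desired.

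With the Markov chain in hand, the three inequalities on $R$, $\Delta_A$, and $\Delta_M$ follow exactly as in the proof of Theorem~\ref{Theorem_correlation_dilution} by rewriting the averaged sums as conditional mutual informations given $Q$ and absorbing $Q$ into the auxiliary variable. The cardinality bound $|\U|\le|\Y|+1$ is then a routine application of the Support Lemma \cite{csiszarbook} to the joint distribution $P_{XYU}$: one needs $|\Y|-1$ parameters to preserve the marginal $P_Y$ and two additional parameters to preserve the two informational quantities $I(Y;U)$ and $I(X;U)$ appearing in the bounds. I expect the main obstacle to be the lifting of the Markov chain from conditional-on-$Q$ to unconditional, since $U$ must carry the index $Q$ as a component while $Y$ does not; it is precisely the identical distribution of $(X_i,Y_i)$ across $i$ that makes this computation succeed, and this step is the one place where the absence of $X^n$ at the encoder genuinely enters the argument.
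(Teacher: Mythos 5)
Your proof is correct and supplies exactly the argument the paper omits (the paper calls the corollary an ``immediate consequence'' of Theorem~\ref{Theorem_correlation_dilution} without spelling out the Markov structure, the lifting to a single-letter $U$, or the cardinality bound). The three bound derivations, the identification of the Markov chain $X_i\markov Y_i\markov U_i$, the factorization check under time-sharing, and the Support Lemma count are all the right ingredients.

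One small step deserves tightening. You establish the two facts $X_i\indep(X^{i-1},Y^{i-1},X_{i+1}^n,Y_{i+1}^n)\mid Y_i$ and $X_i\markov Y_i\markov J$, then write that ``combining these two observations'' gives $X_i\markov Y_i\markov U_i$. In general two conditional independences $A\indep B\mid C$ and $A\indep D\mid C$ do not imply $A\indep(B,D)\mid C$, so the word ``combining'' papers over the real argument. What actually closes the gap is that $U_i=(J,Y^{i-1},X_{i+1}^n)$ is, once $Y_i$ is fixed, a deterministic function of $(Y^{i-1},Y_{i+1}^n,X_{i+1}^n)$ (since $J=f_n(Y^{i-1},Y_i,Y_{i+1}^n)$); conditional independence of $X_i$ from this tuple given $Y_i$ then transfers to any measurable function of it. With that sentence inserted, the chain $X_i\markov Y_i\markov U_i$ is airtight, and the rest of your argument---the factorization computation under $Q$, the averaging step that turns the per-index inequalities into the single-letter ones, and the Support Lemma bound $|\U|\le|\Y|+1$ (preserving $P_Y$ costs $|\Y|-1$ affine constraints, plus one each for the two linear-in-$P_U$ functionals $H(Y|U)$ and $H(X|U)$)---goes through without change.
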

\begin{remark}
If source $Y$ is required to be amplified (according to \eqref{amplification_equivalence}) at maximum rate, that is, $\Delta_A=I(Y; U)$ for an auxiliary random variable $U$ which satisfies $X\markov Y\markov U$, then by Corollary~\ref{Corollary_Dilution}, the best privacy performance one can expect from the dependence dilution setting is
\begin{equation}\label{Dual_gEpsilon_2}
  \Delta_M^*=\min_{\substack{U:X~\markov ~~Y~\markov ~~U\\I(Y; U)\geq \Delta_A}} I(X;U),
\end{equation}
which is equal to the dual of $g_{\eps}(X;Y)$ evaluated at $\Delta_A$, $t_{\Delta_A}(X;Y)$, as defined in \eqref{Dual_gEpsilon}.
\end{remark}
The dependence dilution problem is closely related to the discriminatory lossy source coding problem studied in \cite{Tandon_Quantize_bin}. In this problem, an encoder $f$ observes $(X^n, Y^n)$ and wants to describe this source to a decoder, $g$, such that $g$ recovers $Y^n$ within distortion level $D$ and $I(f(X^n, Y^n); X^n)\leq n\Delta_M$. If the distortion level is Hamming measure, then the distortion constraint and the amplification constraint are closely related via Fano's inequality. Moreover, dependence dilution problem reduces to a secure lossless (list decoder of fixed size 1) source coding problem by  setting $\Delta_A=H(H)$, which is recently studied in \cite{Asoodeh_Allerton2015}.
\subsection{MMSE Estimation of Functions of Private Information}

In this section, we provide a justification for the privacy guarantee $\rho_m^2(X;Z)\leq \eps$. To this end, we recall the definition of the minimum mean squared error estimation.

\begin{definition}
Given random variables $U$ and $V$, $\mmse(U|V)$ is defined as the minimum error of an estimate, $g(V)$, of $U$ based on $V$, measured in the mean-square sense, that is
\begin{equation}\label{MMSE_Definition}
  \mmse(U|V):=\inf_{g\in\L^2(\V)}\E[\left(U-g(V)\right)^2]=\E[\left(U-\E[U|V]\right)^2]=\E[\var(U|V)],
\end{equation}
where $\var(U|V)$ denotes the conditional variance of $U$ given $V$.
\end{definition}
It is easy to see that $\mmse(U|V)=0$ if and only if $U=f(V)$ for some measurable function $f$ and $\mmse(U|V)=\var(U)$ if and only if $U\indep V$. Hence, unlike for the case of maximal correlation, a small value of $\mmse(U|V)$ implies a strong dependence between $U$ and $V$. Hence, although it is not a "proper" measure of correlation, in a certain sense it measures how well one random variable can be predicted from another one.

Given a non-degenerate measurable function $f:\X\to \R$, consider the following constraint on $\mmse(f(X)|Y)$
\begin{equation}\label{Privacy_constraint_MMSE}
(1-\eps)\var(f(X))\leq \mmse(f(X)|Z)\leq \var(f(X)).
\end{equation}
   This guarantees that no adversary knowing $Z$ can efficiently estimate $f(X)$. First consider the case where $f$ is an identity function, i.e., $f(x)=x$. In this case, a direct calculation shows that
 \begin{eqnarray*}
  \mmse(X|Z)&\stackrel{(a)}{=}& \E[(X-\E[X|Z])^2]=\E[X^2]-\E[(\E[X|Z])^2]\\
    &=& \var(X)(1-\rho^2(X;\E[X|Z])) \\
    &\stackrel{(b)}{\geq} & \var(X)(1-\rho_m^2(X;Z)),
 \end{eqnarray*}
where $(a)$ follows from \eqref{MMSE_Definition} and $(b)$ is due to the definition of maximal correlation. Having imposed $\rho_m^2(X;Z)\leq \eps$, we, can therefore conclude that the MMSE of estimating $X$ given $Z$ satisfies
\begin{equation}\label{Privacy_constraint_MMSE_Identity}
(1-\eps)\var(X)\leq \mmse(X|Z)\leq \var(X),
\end{equation}
which shows that $\rho_m^2(X;Z)\leq \eps$ implies \eqref{Privacy_constraint_MMSE} for $f(x)=x$. However, in the following we show that the constraint $\rho_m^2(X;Z)\leq \eps$ is, indeed, equivalent to \eqref{Privacy_constraint_MMSE} for \emph{any} non-degenerate measurable $f:\X\to \R$.
\begin{definition}[\cite{Maxim_Poincare_inequality}]\label{Poincare_Definition}
A joint distribution $P_{UV}$ satisfies a \emph{Poincar\'{e} inequality} with constant $c\leq 1$ if for all $f:\U\to \R$
$$c\cdot\var(f(U))\leq  \mmse(f(U)|V),$$ and the \emph{Poincar\'{e} constant} for $P_{UV}$ is defined as $$\vartheta(U;V):=\inf_{f}\frac{\mmse(f(U)|V)}{\var(f(U))}.$$
\end{definition}

The privacy constraint \eqref{Privacy_constraint_MMSE} can then be viewed as
\begin{equation}\label{privacy_MMSE_Poincare}
  \vartheta(X;Z)\geq 1-\eps.
\end{equation}
\begin{theorem}[\cite{Maxim_Poincare_inequality}]\label{Theorem_Poincare}
For any joint distribution $P_{UV}$, we have
$$\vartheta(U;V)=1-\rho_m^2(U;V).$$
\end{theorem}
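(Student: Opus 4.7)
The plan is to reduce the theorem to the alternative characterization of maximal correlation in \eqref{Maximal_correlation_Equivalent} by a direct computation based on the orthogonality principle (equivalently, the law of total variance). The key observation is that both $\mmse(f(U)|V)$ and $\var(f(U))$ are invariant under replacing $f$ by $f - \E f(U)$, so the infimum in the definition of $\vartheta(U;V)$ can be taken over functions $f$ with $\E f(U)=0$, and by homogeneity one may further normalize to $\E f^2(U)=1$.

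First I would expand the mean-squared error using the identity
\[
\E[f^2(U)] \;=\; \E\big[(f(U)-\E[f(U)|V])^2\big] \;+\; \E\big[\E^2[f(U)|V]\big],
\]
which for zero-mean $f$ gives $\mmse(f(U)|V) = \var(f(U)) - \E[\E^2[f(U)|V]]$. Dividing by $\var(f(U))$ yields
\[
\frac{\mmse(f(U)|V)}{\var(f(U))} \;=\; 1 \;-\; \frac{\E[\E^2[f(U)|V]]}{\E[f^2(U)]},
\]
where the right-hand ratio is scale-invariant in $f$.

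Next I would take the infimum over all non-degenerate $f$ and swap it for a supremum of the subtracted term, obtaining
\[
\vartheta(U;V) \;=\; 1 \;-\; \sup_{f:\,\E f(U)=0,\,\E f^2(U)=1} \E\big[\E^2[f(U)|V]\big].
\]
Finally, recognizing the supremum on the right-hand side as exactly the expression in \eqref{Maximal_correlation_Equivalent} for $\rho_m^2(U;V)$ concludes the argument.

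There is no serious obstacle: the proof is essentially a one-line computation built on top of Pythagoras (the orthogonality of $f(U)-\E[f(U)|V]$ and any function of $V$). The only minor points requiring care are the reduction to zero-mean $f$ (handled because constants cancel in both numerator and denominator) and excluding the degenerate case $\var(f(U))=0$ from the infimum, which is precisely the condition under which the normalized class in the definition of $\rho_m$ is non-empty. No further machinery beyond the identity \eqref{Maximal_correlation_Equivalent} already stated in the paper is needed.
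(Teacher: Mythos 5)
The paper states Theorem~\ref{Theorem_Poincare} as a result imported from the cited reference and provides no proof of its own, so there is no in-paper argument to compare against. That said, your proof is correct and complete: the orthogonal (Pythagorean) decomposition $\E[f^2(U)] = \mmse(f(U)\mid V) + \E[\E^2[f(U)\mid V]]$ for zero-mean $f$, combined with the invariance of both numerator and denominator under centering and their quadratic homogeneity under scaling, reduces the Poincar\'e ratio to $1 - \E[\E^2[f(U)\mid V]]$ on the class $\E f(U)=0$, $\E f^2(U)=1$; taking the infimum then flips to a supremum and the resulting expression is exactly the R\'enyi characterization \eqref{Maximal_correlation_Equivalent} of $\rho_m^2(U;V)$. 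You also handle the two small technicalities correctly, namely that constant functions must be excluded from the infimum defining $\vartheta$ precisely because $\var(f(U))=0$ there, and that this exclusion is what makes the normalized class in the definition of $\rho_m$ nonempty. This is the standard elementary derivation and requires nothing beyond what the paper already states.
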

In light of Theorem~\ref{Theorem_Poincare} and \eqref{privacy_MMSE_Poincare}, the privacy constraint \eqref{Privacy_constraint_MMSE} is equivalent to $\rho_m^2(X;Z)\leq \eps$, that is,
$$\rho_m^2(X;Z)\leq \eps\Longleftrightarrow (1-\eps)\var(f(X))\leq \mmse(f(X)|Z)\leq \var(f(X)),$$
for any non-degenerate measurable functions $f:\X\to \R$.

 Hence, $\hat{g}_{\eps}(X;Y)$ characterizes the maximum information extraction from $Y$ such that no (non-trivial) function of $X$ can be efficiently estimated, in terms of MMSE \eqref{Privacy_constraint_MMSE}, given the extracted information.

\section{Observation Channels for Minimal and Maximal $ {\bf g_{\eps}(X;Y)}$}

In this section, we characterize the observation channels which achieve the lower or upper bounds on the rate-privacy function in \eqref{UB&LB}. We first derive general conditions for achieving the lower bound and then present a large family of observation channels $P_{Y|X}$ which achieve the lower bound. We also give a family of $P_{Y|X}$ which attain the upper bound on $g_{\eps}(X;Y)$.
\subsection{Conditions for Minimal $g_{\eps}(X;Y)$}

Assuming that $g_0(X;Y)=0$, we seek a set of conditions on $P_{XY}$ such that $g_{\eps}(X;Y)$ is linear in $\eps$, or equivalently, $g_{\eps}(X;Y)=\eps\frac{H(Y)}{I(X;Y)}$. In order to do this, we shall examine the slope of $g_{\eps}(X;Y)$ at zero. Recall that by concavity of $g_{\eps}(X;Y)$, it is clear that $g'_0(X;Y)\geq \frac{H(Y)}{I(X;Y)}$. We strengthen this bound in the following lemmas. For this, we need to recall the notion of Kullback-Leibler divergence. Given two probability distribution $P$ and $Q$ supported over a finite alphabet $\U$,
\begin{equation}\label{KL_Divergence}
  D(P||Q):=\sum_{u\in\U}P(u)\log\left(\frac{P(u)}{Q(u)}\right).
\end{equation}
\begin{lemma}\label{Lemma_slope_lowerbound}
For a given joint distribution $P_{XY}=P_Y\times P_{X|Y}$, if $g_0(X;Y)=0$, then for any $\eps\geq0$ $$g'_{0}(X;Y)\geq \max_{y\in \Y}\frac{-\log P_Y(y)}{D(P_{X|Y}(\cdot|y)||P_X(\cdot))}.$$
\end{lemma}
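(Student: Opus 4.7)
The plan is to directly construct, for each fixed $y^* \in \Y$, a one-parameter family of privacy filters $\{P_{Z_\alpha \mid Y}\}_{\alpha > 0}$ with $I(X;Z_\alpha) \to 0$ as $\alpha \to 0$ and to compute the limit of $I(Y;Z_\alpha)/I(X;Z_\alpha)$. Since $g_0(X;Y) = 0$ is assumed, concavity (Lemma~\ref{Lemma_Concavity}) gives
\begin{equation*}
g'_0(X;Y) = \lim_{\eps \to 0^+} \frac{g_\eps(X;Y)}{\eps} = \sup_{\eps > 0} \frac{g_\eps(X;Y)}{\eps},
\end{equation*}
so any lower bound on $\liminf I(Y;Z_\alpha)/I(X;Z_\alpha)$ transfers to a lower bound on $g'_0(X;Y)$, and we may finally take the maximum over $y^* \in \Y$.

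The family I would use is the natural ``binary indicator'' filter: fix $y^* \in \Y$ with $D(P_{X|Y}(\cdot|y^*)\|P_X) > 0$, and let $Z_\alpha \in \{0,1\}$ with $\Pr(Z_\alpha = 1 \mid Y = y^*) = \alpha$ and $\Pr(Z_\alpha = 1 \mid Y \neq y^*) = 0$. The Markov condition $X \markov Y \markov Z_\alpha$ holds by construction. A short calculation gives $P_{Z_\alpha}(1) = \alpha P_Y(y^*)$ and $P_{X|Z_\alpha}(\cdot|1) = P_{X|Y}(\cdot|y^*)$, while $P_{X|Z_\alpha}(\cdot|0) \to P_X$ as $\alpha \to 0$.

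The key asymptotics, which I would verify by direct Taylor expansion in $\alpha$, are
\begin{align*}
I(Y;Z_\alpha) &= h_b(\alpha P_Y(y^*)) - P_Y(y^*) h_b(\alpha) = -\alpha P_Y(y^*) \log P_Y(y^*) + o(\alpha), \\
I(X;Z_\alpha) &= \alpha P_Y(y^*)\, D\bigl(P_{X|Y}(\cdot|y^*) \,\big\|\, P_X\bigr) + o(\alpha),
\end{align*}
where the first line uses $h_b(\alpha p) - p\, h_b(\alpha) = -\alpha p \log p + o(\alpha)$ and the second follows because the $Z_\alpha = 1$ branch contributes the stated KL term and the $Z_\alpha = 0$ branch contributes only $O(\alpha^2)$ (the conditional distribution there differs from $P_X$ by $O(\alpha)$, so its divergence from $P_X$ is $O(\alpha^2)$, and it is weighted by $1 - \alpha P_Y(y^*)$). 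Taking the ratio and letting $\alpha \to 0$ gives
\begin{equation*}
\lim_{\alpha \to 0} \frac{I(Y; Z_\alpha)}{I(X; Z_\alpha)} = \frac{-\log P_Y(y^*)}{D(P_{X|Y}(\cdot|y^*)\,\|\,P_X)}.
\end{equation*}

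Combined with $g_{I(X;Z_\alpha)}(X;Y) \geq I(Y;Z_\alpha)$ and the concavity identity for $g'_0$ above, this yields the bound for each admissible $y^*$, and maximizing over $y^*$ completes the proof. The main obstacle, and the only place requiring real care, is the $o(\alpha)$ control of the $Z_\alpha = 0$ branch of $I(X;Z_\alpha)$: one must confirm that the perturbation of $P_{X|Z_\alpha}(\cdot|0)$ away from $P_X$ contributes strictly sub-linear KL divergence, which follows from a standard second-order expansion of $D(\cdot\|P_X)$ around $P_X$ (whose first-order term vanishes). The $y^*$ with $D(P_{X|Y}(\cdot|y^*)\|P_X) = 0$ are harmless since they do not affect the maximum whenever some other $y$ contributes a finite positive value, and if all such divergences vanish then $X \indep Y$, contradicting the assumption $g_0(X;Y) = 0$ (via $g_\eps(X;Y) = H(Y)$).
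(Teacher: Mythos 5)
Your proof is correct and uses essentially the same construction as the paper's: the binary indicator filter you define (with $\Pr(Z_\alpha=1\mid Y=y^*)=\alpha$ and $0$ otherwise) is identical, up to relabeling, to the $\{k,\mathrm{e}\}$-valued erasure-type filter the paper uses in its appendix, and the first-order Taylor expansions of $I(Y;Z_\alpha)$ and $I(X;Z_\alpha)$ match the paper's computation exactly. The paper likewise uses $g_0(X;Y)=0$ to identify $g_0'(X;Y)$ with $\lim_{\eps\downarrow 0}g_\eps(X;Y)/\eps$, so your concavity-based wrap-up is the same argument.
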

\begin{proof}
The proof is given in Appendix~\ref{AppendixI_Theorem_gHat_lowerbound}.
\end{proof}
\begin{remark}
Note that if for a given joint distribution $P_{XY}$, there exists $y_0\in\Y$ such that $D(P_{X|Y}(\cdot|y_0)||P_X(\cdot))=0$, it implies that $P_{X|Y}(\cdot|y_0)=P_X(x)$. Consider the binary random variable $Z\in\{1, \text{e}\}$ constructed according to the distribution $P_{Z|Y}(1|y_0)=1$ and $P_{Z|Y}(\text{e}|y)=1$ for $y\in\Y\textbackslash\{y_0\}$. We can now claim that $Z$ is independent of $X$, because $P_{X|Z}(x|1)=P_{X|Y}(x|y_0)=P_X(x)$ and
\begin{eqnarray*}
  P_{X|Z}(x|\text{e}) &=& \sum_{y\neq y_0}P_{X|Y}(x|y)P_{Y|Z}(y|\text{e})=\sum_{y\neq y_0}P_{X|Y}(x|y)\frac{P_Y(y)}{1-P_Y(y_0)} \\
   &=& \frac{1}{1-P_Y(y_0)}\sum_{y\neq y_0}P_{XY}(x,y)=P_X(x).
\end{eqnarray*}
Clearly, $Z$ and $Y$ are not independent, and hence $g_0(X;Y)>0$. This implies that the right-hand side of inequality in Lemma~\ref{Lemma_slope_lowerbound} can not be infinity.
\end{remark}

In order to prove the main result, we need the following simple lemma.
\begin{lemma}\label{Lemma_slope_upperbound}
For any joint distribution $P_{XY}$, we have
$$\frac{H(Y)}{I(X;Y)}\leq \max_{y\in\Y}\frac{-\log P_Y(y)}{D(P_{X|Y}(\cdot|y)||P_X(x))},$$ where equality holds if and only if there exists a constant $c>0$ such that $-\log P_Y(y)=cD(P_{X|Y}(\cdot|y)||P_X(x))$ for all $y\in\Y$.
\end{lemma}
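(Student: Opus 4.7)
The plan is to recognize the inequality as an instance of the classical mediant (weighted-average-of-ratios) inequality, applied to the two expansions
\[
H(Y)=\sum_{y\in\Y}P_Y(y)\,\bigl(-\log P_Y(y)\bigr),\qquad I(X;Y)=\sum_{y\in\Y}P_Y(y)\,D\bigl(P_{X|Y}(\cdot|y)\,\|\,P_X\bigr).
\]
Writing $a_y:=-\log P_Y(y)\ge 0$ and $b_y:=D\bigl(P_{X|Y}(\cdot|y)\,\|\,P_X\bigr)\ge 0$, the desired inequality becomes
\[
\frac{\sum_{y}P_Y(y)\,a_y}{\sum_{y}P_Y(y)\,b_y}\;\le\;\max_{y\in\Y}\frac{a_y}{b_y}.
\]

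First I would restrict attention to $\mathrm{supp}(P_Y)$, so all weights $P_Y(y)$ are strictly positive. If some $b_{y_0}=0$ on this support, then by the remark following Lemma~\ref{Lemma_slope_lowerbound} we have $P_{X|Y}(\cdot|y_0)=P_X$, which corresponds to $g_0(X;Y)>0$ and makes the right-hand side infinite (since $a_{y_0}>0$ because otherwise $Y$ would be degenerate and $I(X;Y)=0$); in that degenerate regime the inequality is trivially valid, so we may assume $b_y>0$ for every $y\in\mathrm{supp}(P_Y)$. Letting $M:=\max_y a_y/b_y$, we then have $a_y\le M b_y$ pointwise, and summing against the probability weights $P_Y(y)$ gives $H(Y)\le M\cdot I(X;Y)$, i.e., the stated bound.

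For the equality characterization, observe that the inequality $a_y\le M b_y$ becomes strict for any $y$ with $a_y/b_y<M$. Because the weights $P_Y(y)$ are strictly positive on the support, summing produces $H(Y)<M\cdot I(X;Y)$ unless $a_y=Mb_y$ for every $y\in\mathrm{supp}(P_Y)$. Thus equality forces $-\log P_Y(y)=c\,D\bigl(P_{X|Y}(\cdot|y)\,\|\,P_X\bigr)$ with $c=M>0$ for all $y$; conversely, if such a constant $c$ exists then both numerator and denominator scale identically and the equality is immediate.

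I do not anticipate any real obstacle here — the content is essentially a one-line mediant bound once the two sums are written out. The only subtlety worth handling carefully in the writeup is the possibility of $b_y=0$ on the support (i.e., the degenerate case already flagged in the remark preceding this lemma), which is why I would dispose of it explicitly before invoking the mediant inequality.
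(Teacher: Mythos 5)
Your proposal is correct and takes essentially the same route as the paper: both recognize the bound as the mediant (weighted-average-of-ratios) inequality applied to the decompositions $H(Y)=\sum_y P_Y(y)(-\log P_Y(y))$ and $I(X;Y)=\sum_y P_Y(y)D(P_{X|Y}(\cdot|y)\|P_X)$, with the equality condition read off from when the pointwise bound $a_y\le Mb_y$ is tight. Your explicit handling of the degenerate $b_y=0$ case is a small extra care the paper leaves implicit.
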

\begin{proof}
It is clear that $$\frac{H(Y)}{I(X;Y)}=\frac{-\sum_{y\in\Y}P_Y(y)\log P_Y(y)}{\sum_{y\in\Y}P_Y(y)D(P_{X|Y}(\cdot|y)||P_X(x))}\leq \max_{y\in\Y}\frac{-\log P_Y(y)}{D(P_{X|Y}(\cdot|y)||P_X(x))},$$
where the inequality follows from the fact that for any three sequences of positive numbers $\{a_i\}_{i=1}^n$, $\{b_i\}_{i=1}^n$ and $\{\lambda_i\}_{i=1}^n$ we have $\frac{\sum_{i=1}^n\lambda_i a_i}{\sum_{i=1}^n\lambda_i b_i}\leq \max_{1\leq i\leq n}\frac{a_i}{b_i}$ where equality occurs if and only if $\frac{a_i}{b_i}=c$ for all $1\leq i\leq n$.
\end{proof}

Now we are ready to state the main result of this subsection.
\begin{theorem}\label{Theorem_slope-equality}
For a given $(X,Y)$ with joint distribution $P_{XY}=P_Y\times P_{X|Y}$, if $g_0(X;Y)=0$ and $\eps\mapsto g_{\eps}(X;Y)$ is linear for $0\leq\eps\leq I(X;Y)$, then for any $y\in\Y$
$$\frac{H(Y)}{I(X;Y)}= \frac{-\log P_Y(y)}{D(P_{X|Y}(\cdot|y)||P_X(\cdot))}.$$
\end{theorem}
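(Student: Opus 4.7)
The proof is essentially a squeeze argument between the two preceding lemmas, so the plan is to first pin down the exact slope $g'_{0}(X;Y)$ from the linearity hypothesis, then pit Lemma~\ref{Lemma_slope_lowerbound} against Lemma~\ref{Lemma_slope_upperbound}, and finally harvest the equality case of the latter.

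First I would extract the slope. Since $g_{0}(X;Y)=0$ by hypothesis and $g_{I(X;Y)}(X;Y)=H(Y)$ (as noted after \eqref{gepsilon}, taking $Z=Y$ is admissible once $\eps\ge I(X;Y)$), the assumption that $\eps\mapsto g_{\eps}(X;Y)$ is affine on $[0,I(X;Y)]$ forces
\begin{equation*}
g_{\eps}(X;Y)=\eps\,\frac{H(Y)}{I(X;Y)}\qquad\text{for all }\eps\in[0,I(X;Y)],
\end{equation*}
and in particular $g'_{0}(X;Y)=\frac{H(Y)}{I(X;Y)}$.

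Next I would chain the two lemmas. By Lemma~\ref{Lemma_slope_lowerbound},
\begin{equation*}
\frac{H(Y)}{I(X;Y)}=g'_{0}(X;Y)\;\geq\;\max_{y\in\Y}\frac{-\log P_{Y}(y)}{D(P_{X|Y}(\cdot|y)\,\|\,P_{X}(\cdot))}.
\end{equation*}
On the other hand, Lemma~\ref{Lemma_slope_upperbound} gives the reverse inequality
\begin{equation*}
\frac{H(Y)}{I(X;Y)}\;\leq\;\max_{y\in\Y}\frac{-\log P_{Y}(y)}{D(P_{X|Y}(\cdot|y)\,\|\,P_{X}(\cdot))}.
\end{equation*}
Combining these yields equality, so the maximum on the right equals $H(Y)/I(X;Y)$.

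Finally I would invoke the equality characterization in Lemma~\ref{Lemma_slope_upperbound}: the weighted-ratio bound $\frac{\sum \lambda_{i}a_{i}}{\sum \lambda_{i}b_{i}}\le \max_{i}\frac{a_{i}}{b_{i}}$ saturates exactly when all the ratios $a_{i}/b_{i}$ coincide. Thus there is a constant $c>0$ with $-\log P_{Y}(y)=c\,D(P_{X|Y}(\cdot|y)\,\|\,P_{X}(\cdot))$ for every $y\in\Y$, and averaging with weights $P_{Y}(y)$ identifies $c=H(Y)/I(X;Y)$, which is exactly the claimed identity.

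I do not anticipate a serious obstacle: the substantive work has already been done in Lemmas~\ref{Lemma_slope_lowerbound} and \ref{Lemma_slope_upperbound}, and the theorem just reads off their joint consequence. The only point that requires a touch of care is justifying $g'_{0}(X;Y)=H(Y)/I(X;Y)$ from linearity, which uses the two endpoint values $g_{0}(X;Y)=0$ and $g_{I(X;Y)}(X;Y)=H(Y)$; the latter should be stated explicitly to avoid circularity with the general bound $g_{\eps}\le H(Y|X)+\eps$ from \eqref{UB&LB}.
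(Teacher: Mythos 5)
Your proof is correct and takes essentially the same approach as the paper: establish the slope $g'_0(X;Y)=H(Y)/I(X;Y)$ from linearity, sandwich it between Lemmas~\ref{Lemma_slope_lowerbound} and \ref{Lemma_slope_upperbound}, and read off the equality condition of the latter to conclude the ratio is constant in $y$.
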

\begin{proof}
Note that the fact that $g_0(X;Y)=0$ and $g_{\eps}(X;Y)$ is linear in $\eps$ is equivalent to  $g_{\eps}(X;Y)=\eps\frac{H(Y)}{I(X;Y)}$.  It is, therefore, immediate from Lemmas~\ref{Lemma_slope_lowerbound} and \ref{Lemma_slope_upperbound} that we have
\begin{eqnarray}
    g'_0(X;Y)&\stackrel{(a)}{=}&\frac{H(Y)}{I(X;Y)}\stackrel{(b)}{\leq} \max_{y\in\Y}\frac{-\log P_Y(y)}{D(P_{X|Y}(\cdot|y)||P_X(x))}\nonumber\\
&\stackrel{(c)}{\leq}&  g'_0(X;Y),\label{Proof_theorem}
\end{eqnarray}
where $(a)$ follows from the fact that $g_{\eps}(X;Y)=\eps\frac{H(Y)}{I(X;Y)}$ and $(b)$ and $(c)$ are due to Lemmas~\ref{Lemma_slope_upperbound} and~\ref{Lemma_slope_lowerbound}, respectively. The inequality in \eqref{Proof_theorem} shows that
\begin{equation}\label{Equality_in_Ratio}
  \frac{H(Y)}{I(X;Y)}=\max_{y\in\Y}\frac{-\log P_Y(y)}{D(P_{X|Y}(\cdot|y)||P_X(x))}.
\end{equation}
According to Lemma~\ref{Lemma_slope_upperbound}, \eqref{Equality_in_Ratio} implies that the ratio of $\frac{-\log P_Y(y)}{D(P_{X|Y}(\cdot|y)||P_X(x))}$ does not depend on $y\in\Y$ and hence the result follows.
\end{proof}

This theorem implies that if there exists $y=y_1$ and $y=y_2$ such that $\frac{\log P_Y(y)}{D(P_{X|Y}(\cdot|y)||P_X(x))}$ results in two different values, then $\eps\mapsto g_{\eps}(X,Y)$ cannot achieve the lower bound in \eqref{UB&LB}, or equivalently $$g_{\eps}(X;Y)>\eps\frac{H(Y)}{I(X;Y)}.$$ This, therefore, gives a necessary condition for the lower bound to be achievable. The following corollary simplifies this necessary condition.
\begin{corollary}\label{Corollary_Y_Uniform}
For a given joint distribution $P_{XY}=P_Y\times P_{X|Y}$, if $g_0(X;Y)=0$ and $\eps\mapsto g_{\eps}(X;Y)$ is linear, then the following are equivalent:
\begin{itemize}
\item[(i)] $Y$ is uniformly distributed,
\item[(ii)] $D(P_{X|Y}(\cdot|y)||P_X(\cdot))$ is constant for all $y\in\Y$.
\end{itemize}
\end{corollary}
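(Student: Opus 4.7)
The plan is to derive the equivalence directly from Theorem~\ref{Theorem_slope-equality}, which under the stated hypotheses ($g_0(X;Y)=0$ and linearity of $\eps\mapsto g_{\eps}(X;Y)$) already gives the pointwise identity
\[
-\log P_Y(y)\;=\;c\cdot D\bigl(P_{X|Y}(\cdot|y)\,\|\,P_X(\cdot)\bigr),\qquad \forall y\in\Y,
\]
with the constant $c:=H(Y)/I(X;Y)>0$. The whole corollary is essentially a reading of this identity: one side constant forces the other side constant (up to the positive multiplicative factor $c$).

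For the direction (i)$\Rightarrow$(ii), I would assume $P_Y(y)=1/|\Y|$ for every $y\in\Y$, so that $-\log P_Y(y)=\log|\Y|$ is independent of $y$. Plugging into the identity from Theorem~\ref{Theorem_slope-equality} immediately yields $D(P_{X|Y}(\cdot|y)\,\|\,P_X(\cdot))=\log|\Y|/c$ for all $y$, proving (ii). For the converse (ii)$\Rightarrow$(i), I would assume $D(P_{X|Y}(\cdot|y)\,\|\,P_X(\cdot))=d$ for some constant $d$ and all $y\in\Y$; the identity then gives $-\log P_Y(y)=cd$, so $P_Y(y)=2^{-cd}$ is constant on $\Y$, and since it is a probability mass function it must equal $1/|\Y|$, establishing uniformity.

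One small point to check is that $c$ is a genuine positive finite constant, which requires $0<I(X;Y)<\infty$; this is guaranteed because $g_0(X;Y)=0$ rules out independence of $X$ and $Y$ (otherwise $g_\eps(X;Y)\equiv H(Y)$ and the linearity hypothesis would degenerate), and finiteness of the alphabets makes $I(X;Y)$ finite. Likewise, $D(P_{X|Y}(\cdot|y)\,\|\,P_X(\cdot))$ is well-defined and finite for every $y$ because $P_X$ has full support on the effective alphabet of $X$ relative to $P_{X|Y}(\cdot|y)$ (any $x$ with $P_{X|Y}(x|y)>0$ satisfies $P_X(x)\geq P_Y(y)P_{X|Y}(x|y)>0$).

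I do not expect a genuine obstacle here: once Theorem~\ref{Theorem_slope-equality} is in hand the corollary is a two-line bookkeeping argument, and the only subtlety is confirming that the constant $c$ and the divergences are finite and positive so that ``constant on one side'' really does transfer to ``constant on the other side.''
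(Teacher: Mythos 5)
Your proof is correct and follows the same route as the paper: both directions are read off directly from the pointwise identity $-\log P_Y(y)=\frac{H(Y)}{I(X;Y)}D(P_{X|Y}(\cdot|y)\|P_X)$ supplied by Theorem~\ref{Theorem_slope-equality}. The paper additionally evaluates the constants explicitly (averaging the constant divergence against $P_Y$ to get $D=I(X;Y)$, hence $-\log P_Y(y)=H(Y)$), while you simply observe that constancy on one side transfers through the positive factor $c$ to the other; both are equivalent bookkeeping, and your sanity check that $c$ is finite and positive is a worthwhile addition the paper omits.
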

\begin{proof}
$(i)\Rightarrow (ii)$:

From Theorem~\ref{Theorem_slope-equality}, we have for all $y\in\Y$
\begin{equation}\label{equation1}
    \frac{H(Y)}{I(X;Y)}= \frac{-\log(P_Y(y))}{D\left(P_{X|Y}(\cdot|y)||P_X(\cdot)\right)}.
\end{equation}
Letting $D:=D\left(P_{X|Y}(\cdot|y)||P_X(\cdot)\right)$ for any $y\in\Y$, we have $\sum_yP_{Y}(y)D=I(X;Y)$ and hence $D=I(X;Y)$, which together with \eqref{equation1} implies that $H(Y)= -\log(P_Y(y))$ for all $y\in \Y$ and hence $Y$ is uniformly distributed.

$(ii)\Rightarrow (i)$:

When $Y$ is uniformly distributed, we have from  \eqref{equation1} that $I(X;Y)=D\left(P_{X|Y}(\cdot|y)||P_X(\cdot)\right)$ which implies that $D\left(P_{X|Y}(\cdot|y)||P_X(\cdot)\right)$ is constant for all $y\in\Y$.
\end{proof}
\begin{example}
Suppose $P_{Y|X}$ is a binary symmetric channel (BSC) with crossover probability $0<\alpha<1$ and $P_X=\sBer(0.5)$. In this case, $P_{X|Y}$ is also a BSC with input distribution $P_{Y}=\sBer(0.5)$. Note that Corollary~\ref{generalizedtheorem1} implies that $g_0(X;Y)=0$. We will show that $g_{\eps}(X;Y)$ is linear as a function of $\eps\geq 0$ for a larger family of symmetric channels (including BSC) in Corollary~\ref{corollary_BISO_Uniform}. Hence, the BSC with uniform input nicely illustrates Corollary~\ref{Corollary_Y_Uniform}, because $D(P_{X|Y}(\cdot|y)||P_X(\cdot))=1-h(\alpha)$ for $y\in \{0,1\}$.
\end{example}
\begin{example}
Now suppose $P_{X|Y}$ is a binary asymmetric channel such that $P_{X|Y}(\cdot|0)=\sBer(\alpha_0)$, $P_{X|Y}(\cdot|1)=\sBer(\alpha_1)$ for some $0<\alpha_0, \alpha_1<1$ and input distribution $P_Y=\sBer(p)$, $0<p\leq 0.5$. It is easy to see that if $\alpha_0+\alpha_1=1$ then $D(P_{X|Y}(\cdot|y)||P_X(\cdot))$ does not depend on $y$ and hence we can conclude from Corollary~\ref{Corollary_Y_Uniform} (noticing that $g_0(X;Y)=0$) that in this case for any $p<0.5$, $g_{\eps}(X;Y)$ is not linear and hence for $0<\eps<I(X;Y)$
$$g_{\eps}(X;Y)>\eps\frac{H(Y)}{I(X;Y)}.$$
\end{example}

In Theorem~\ref{Theorem_slope-equality}, we showed that when $g_{\eps}(X;Y)$
achieves its lower bound, illustrated in \eqref{UB&LB}, the slope of the mapping $\eps\mapsto g_{\eps}(X;Y)$ at zero is equal to $\frac{-\log P_Y(y)}{D(P_{X|Y}(\cdot|y)||P_X(\cdot))}$ for any $y\in \Y$. We will show in the next section that the reverse direction is also true at least for a large family of binary-input symmetric output channels, for instance when $P_{Y|X}$ is a BSC, and thus showing that in this case, $$g'_0(X;Y)=\frac{-\log P_Y(y)}{D(P_{X|Y}(\cdot|y)||P_X(\cdot))}, ~~\forall y\in\Y\Longleftrightarrow g_{\eps}(X;Y)=\eps\frac{H(Y)}{I(X;Y)},~~~0\leq\eps\leq I(X;Y).$$


\subsection{Special Observation Channels}

In this section, we apply the results of last section to different joint distributions $P_{XY}$. In the first family of channels from $X$ to $Y$, we look at the case where $Y$ is binary and the reverse channel $P_{X|Y}$ has symmetry in a particular sense, which will be specified later. One particular case of this family of channels is when $P_{X|Y}$ is a BSC. As a family of observation channels which achieves the upper bound of $g_{\eps}(X;Y)$, stated in \eqref{UB&LB}, we look at the class of erasure channels from $X\to Y$, i.e., $Y$ is an erasure version of $X$.

\subsubsection{Observation Channels With Symmetric Reverse}

The first example of $P_{XY}$ that we consider for binary $Y$ is the so-called \emph{Binary Input Symmetric Output} (BISO) $P_{X|Y}$, see for example \cite{Shamai_BISO1, Shamai_BISO2}. Suppose $\Y=\{0,1\}$ and $\X=\{0, \pm 1, \pm 2, \dots, \pm k\}$, and for any $x\in \X$ we have $P_{X|Y}(x|1)=P_{X|Y}(-x|0)$. This clearly implies that $p_0:=P_{X|Y}(0|0)=P_{X|Y}(0|1)$. We notice that with this definition of symmetry, we can always assume that the output alphabet $\X=\{\pm 1, \pm 2, \dots, \pm k\}$ has even number of elements because we can split $X=0$ into two outputs, $X=0^+$ and  $X=0^-$, with $P_{X|Y}(0^-|0)=P_{X|Y}(0^+|0)=\frac{p_0}{2}$ and $P_{X|Y}(0^-|1)=P_{X|Y}(0^+|1)=\frac{p_0}{2}$. The new channel is clearly essentially equivalent to the original one, see \cite{Shamai_BISO2} for more details.  This family of channels can also be characterized using the definition of \emph{quasi-symmetric} channels \cite[Definition 4.17]{Fady_Lecture_note}. A channel $\sW$ is BISO if (after making $|\X|$ even) the transition matrix $P_{X|Y}$ can be  partitioned along its columns into binary-input binary-output sub-arrays in which rows are permutations of each other and the column sums are equal. It is clear that binary symmetric channels and binary erasure channels are both BISO.
The following lemma gives an upper bound for $g_{\eps}(X,Y)$ when $P_{X|Y}$ belongs to such a family of channels.
\begin{lemma}\label{lemma_BISO}
If the channel $P_{X|Y}$ is BISO, then for $\eps\in [0, I(X;Y)]$,
$$\eps\frac{H(Y)}{I(X;Y)}\leq g_{\eps}(X;Y)\leq H(Y)-\frac{I(X;Y)-\eps}{C(P_{X|Y})},$$
where $C(P_{X|Y})$ denotes the capacity of $P_{X|Y}$.
\end{lemma}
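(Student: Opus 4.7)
The lower bound is exactly Lemma~\ref{non-increasing-Lemma} specialized to the BISO setting, so the plan is to focus entirely on the upper bound. Let $C:=C(P_{X|Y})$ and pick any $P_{Z|Y}\in\D_\eps(P)$. Since $Y$ is binary, I will parameterize the posterior by $p_z:=P_{Y|Z}(1|z)$, which yields
\[
H(Y|Z)=\sum_{z}P_Z(z)\,h_b(p_z).
\]
The BISO property gives $H(X|Y=0)=H(X|Y=1)=:H_0$, and the Markov chain $X\markov Y\markov Z$ makes $P_{X|Z=z}$ the BISO output induced by the binary input $Y'\sim\sBer(p_z)$. Writing $\phi(p):=I(X;Y')$ when $Y'\sim\sBer(p)$ drives $P_{X|Y}$, I will obtain
\[
I(X;Y)-I(X;Z)=H(X|Z)-H_0=\sum_{z}P_Z(z)\,\phi(p_z),\qquad \phi(1/2)=C.
\]

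The crux will be to establish the pointwise bound $\phi(p)\leq C\,h_b(p)$ for every $p\in[0,1]$. My plan is to exploit the standard decomposition of a BISO channel into BSC sub-channels: pair the output symbols as $\{x,-x\}$ (splitting a possible $0$-output as described before the lemma) and let $T$ index the pair in which $X$ lies. The defining symmetry $P_{X|Y}(x|1)=P_{X|Y}(-x|0)$ forces $\Pr(T=t\mid Y'=0)=\Pr(T=t\mid Y'=1)$, so $T\indep Y'$ regardless of $P_{Y'}$; conditioned on $T=t$, the channel $Y'\to X$ is a BSC with some crossover $\alpha_t$ and capacity $C_t=1-h_b(\alpha_t)$, with $C=\sum_t P_T(t)\,C_t$. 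On each BSC sub-channel, Mrs.\ Gerber's lemma asserts that $h_b(p\ast\alpha_t)$ is a convex function of $h_b(p)$, so it is upper-bounded on $[0,1]$ by the chord connecting its values at $p=0$ and $p=\tfrac12$, giving
\[
\phi_t(p)=h_b(p\ast\alpha_t)-h_b(\alpha_t)\leq(1-h_b(\alpha_t))\,h_b(p)=C_t\,h_b(p).
\]
Averaging over $T$ (using $T\indep Y'$) then produces $\phi(p)=\sum_{t}P_T(t)\,\phi_t(p)\leq C\,h_b(p)$.

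Combining the two displays above will give $I(X;Y)-I(X;Z)\leq C\,H(Y|Z)$, and the privacy constraint $I(X;Z)\leq\eps$ then yields $H(Y|Z)\geq(I(X;Y)-\eps)/C$, so that
\[
I(Y;Z)=H(Y)-H(Y|Z)\leq H(Y)-\frac{I(X;Y)-\eps}{C}.
\]
Taking the supremum over $P_{Z|Y}\in\D_\eps(P)$ finishes the upper bound.

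The main delicacy I anticipate is the BISO-to-BSC decomposition: one must verify that $T\indep Y'$ holds after the symmetrization of any $0$-output, and must apply the correct form of Mrs.\ Gerber's lemma (the chord inequality arising from convexity, not the tangent inequality). Once the pointwise estimate $\phi(p)\leq C\,h_b(p)$ is in hand, everything else is a single line of averaging against $P_Z$.
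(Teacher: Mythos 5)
Your proof is correct, and it takes a genuinely different (though closely related) route from the paper's. The paper's argument parameterizes the posterior $P_{X|Z=z}$ directly by $u=H(Y|Z=z)$ and invokes, as a black box, the extension of Mrs.~Gerber's Lemma to BISO channels---namely the convexity of $u\mapsto H(P\otimes h_b^{-1}(u))$ proved in \cite{Shamai_BISO3}---then applies the chord bound at the endpoints $u\in\{0,1\}$ and averages over $P_Z$. You instead work with the mutual-information curve $\phi(p)=I(X;Y')$, decompose the BISO channel into its BSC sub-channels via the pair index $T=|X|$, verify that the BISO symmetry forces $T\indep Y'$ (so that both $P_T$ and the decomposition $\phi=\sum_t P_T(t)\phi_t$ are input-independent), apply the classical \emph{binary} Mrs.~Gerber's Lemma on each sub-channel to get $\phi_t(p)\le C_t\,h_b(p)$, and then average using $C=\sum_t P_T(t)C_t$. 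In effect you re-derive, in-line, exactly the convexity result the paper cites: the BSC decomposition you carry out is the standard way one lifts MGL from the binary to the BISO setting, so your argument is more self-contained at the cost of an extra reduction step. Both routes hinge on the same chord inequality from convexity (the paper's attribution of this step to ``Jensen's inequality'' is a slight misnomer---it is the chord, not Jensen, bound), and both then obtain $I(X;Y)-I(X;Z)\le C(P_{X|Y})\,H(Y|Z)$ and rearrange. One small point of care in your version: the statement ``$h_b(p\ast\alpha_t)$ is a convex function of $h_b(p)$'' implicitly requires fixing a branch of $h_b^{-1}$ (say $p\in[0,\tfrac12]$); the symmetry $\phi_t(p)=\phi_t(1-p)$, inherited from the BISO symmetry, makes the claim correct for all $p\in[0,1]$, and this is precisely what the paper handles explicitly by splitting $\Z$ into $\Z_0$ and $\Z_1$.
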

\begin{proof}
The lower bound has already appeared in \eqref{UB&LB}. To prove the upper bound note that by Markovity $X\markov Y\markov Z$, we have for any $x\in \X$ and $z\in \Z$
\begin{equation}\label{Proof_BISO_1}
P_{X|Z}(x|z)=P_{X|Y}(x|0)P_{Y|Z}(0|z)+P_{X|Y}(x|1)P_{Y|Z}(1|z).
\end{equation}
Now suppose $\Z_0:=\{z: P_{Y|Z}(0|z)\leq P_{Y|Z}(1|z)\}$ and similarly $\Z_1:=\{z: P_{Y|Z}(1|z)\leq P_{Y|Z}(0|z)\}$. Then \eqref{Proof_BISO_1} allows us to write for $z\in \Z_0$
\begin{equation}\label{Proof_BISO_2}
P_{X|Z}(x|z)=P_{X|Y}(x|0)h_b^{-1}(H(Y|Z=z))+P_{X|Y}(x|1)(1-h_b^{-1}(H(Y|Z=z))),
\end{equation}
where $h_b^{-1}:[0, 1]\to [0, 0.5]$ is the inverse of binary entropy function, and for $z\in \Z_1$,
\begin{equation}\label{Proof_BISO_3}
P_{X|Z}(x|z)=P_{X|Y}(x|0)(1-h_b^{-1}(H(Y|Z=z)))+P_{X|Y}(x|1)h_b^{-1}(H(Y|Z=z)).
\end{equation}
Letting $P\otimes h_b^{-1}(H(Y|z))$ and $\tilde{P}\otimes h_b^{-1}(H(Y|z))$  denote the right-hand sides of \eqref{Proof_BISO_2} and \eqref{Proof_BISO_3}, respectively,  we can, hence, write
\begin{eqnarray*}
  H(X|Z) &=& \sum_{z\in\Z}P_Z(z)H(X|Z=z)\\
  &\stackrel{(a)}{=}&\sum_{z\in\Z_0}P_Z(z)H(P\otimes h_b^{-1}(H(Y|Z=z)))+\sum_{z\in\Z_1}P_Z(z)H(\tilde{P}\otimes h_b^{-1}(H(Y|Z=z))) \\
   &\stackrel{(b)}{\leq}&  \sum_{z\in\Z_0}P_Z(z)\left[(1-H(Y|Z=z))H(P\otimes h_b^{-1}(0))+H(Y|Z=z)H(P\otimes h_b^{-1}(1))\right]\\
   &&+\sum_{z\in\Z_1}P_Z(z)\left[(1-H(Y|Z=z))H(\tilde{P}\otimes h_b^{-1}(0))+H(Y|Z=z)H(\tilde{P}\otimes h_b^{-1}(1))\right]\\
   &\stackrel{(c)}{=}&  \sum_{z\in\Z_0}P_Z(z)\left[(1-H(Y|Z=z))H(X|Y)+H(Y|Z=z)H(X_{\text{unif}})\right]\\
   &&+\sum_{z\in\Z_1}P_Z(z)\left[(1-H(Y|Z=z))H(X|Y)+H(Y|Z=z)H(X_{\text{unif}})\right]\\
   &=& H(X|Y)[1-H(Y|Z)]+H(Y|Z)H(X_{\text{unif}}),
\end{eqnarray*}
where $H(X_{\text{unif}})$ denotes the entropy of $X$ when $Y$ is uniformly distributed. Here, $(a)$ is due to \eqref{Proof_BISO_2} and \eqref{Proof_BISO_3}, $(b)$ follows form convexity of $u\mapsto H(P\otimes h_b^{-1}(u)))$ for all $u\in[0,1]$ \cite{Shamai_BISO3} and Jensen's inequality. In $(c)$, we used the symmetry of channel $P_{X|Y}$ to show  that $H(X|Y=0)=H(X|Y=1)=H(X|Y)$. Hence, we obtain
\begin{equation*}
  H(Y|Z)\geq\frac{H(X|Z)-H(X|Y)}{H(X_{\text{unif}})-H(X|Y)}=\frac{I(X;Y)-I(X;Z)}{C(P_{X|Y})},
\end{equation*}
   where the equality follows from the fact that for BISO channel (and in general for any quasi-symmetric channel) the uniform input distribution is the capacity-achieving distribution \cite[Lemma 4.18]{Fady_Lecture_note}. Since $g_{\eps}(X;Y)$ is attained when $I(X;Z)=\eps$, the conclusion immediately follows.
\end{proof}
This lemma then shows that the larger the gap between $I(X;Y)$ and $I(X;Y')$ is for $Y'\sim\sBer(0.5)$, the more $g_{\eps}(X;Y)$ deviates from its lower bound. When $Y\sim \sBer(0.5)$, then $C(P_{Y|X})=I(X;Y)$ and $H(Y)=1$ and hence Lemma~\ref{lemma_BISO} implies that
$$\frac{\eps}{I(X;Y)}\leq g_{\eps}(X;Y)\leq 1-\frac{I(X;Y)-\eps}{I(X;Y)}=\frac{\eps}{I(X;Y)},$$ and hence we have proved the following corollary.
\begin{corollary}\label{corollary_BISO_Uniform}
If the channel $P_{X|Y}$ is BISO and $Y\sim \sBer(0.5)$, then for any $\eps\geq 0$
$$g_{\eps}(X;Y)=\frac{1}{I(X;Y)}\min\{\eps, I(X;Y)\}.$$
\end{corollary}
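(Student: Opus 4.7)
The plan is to obtain the corollary as a direct specialization of Lemma~\ref{lemma_BISO}, which already sandwiches $g_{\eps}(X;Y)$ between matching linear bounds for the range $\eps \in [0, I(X;Y)]$ as soon as $Y$ is uniform. First I would substitute $H(Y) = 1$, which is immediate since $Y \sim \sBer(0.5)$.

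Next, I would argue that $C(P_{X|Y}) = I(X;Y)$ under the hypotheses. The relevant channel $P_{X|Y}$ has binary input $Y$ and is BISO by assumption; the capacity-achieving input distribution for any BISO (and more generally quasi-symmetric) channel is uniform \cite[Lemma 4.18]{Fady_Lecture_note}, which is precisely the input distribution we are using. Hence the mutual information $I(X;Y)$ induced by the $\sBer(0.5)$ input equals the channel capacity $C(P_{X|Y})$.

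Plugging $H(Y) = 1$ and $C(P_{X|Y}) = I(X;Y)$ into the two-sided bound of Lemma~\ref{lemma_BISO}, the lower bound becomes $\eps/I(X;Y)$ and the upper bound becomes
\[
1 - \frac{I(X;Y) - \eps}{I(X;Y)} = \frac{\eps}{I(X;Y)},
\]
so the two bounds collapse and force $g_{\eps}(X;Y) = \eps/I(X;Y)$ throughout $[0, I(X;Y)]$. For $\eps \geq I(X;Y)$ the privacy constraint is vacuous by the data processing inequality applied to $X \markov Y \markov Z$, so taking $Z = Y$ achieves $g_{\eps}(X;Y) = H(Y) = 1$. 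Combining the two regimes gives $g_{\eps}(X;Y) = \min\{\eps, I(X;Y)\}/I(X;Y)$, as claimed.

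The only mild subtlety is verifying the identification $C(P_{X|Y}) = I(X;Y)$, since Lemma~\ref{lemma_BISO} is stated with $C(P_{X|Y})$ as a generic quantity; but this is immediate from the BISO symmetry hypothesis. No nontrivial obstacle is expected — the corollary is essentially a matching of upper and lower bounds already established.
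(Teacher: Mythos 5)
Your proposal is correct and follows the same route as the paper: substitute $H(Y)=1$ and $C(P_{X|Y})=I(X;Y)$ (the latter because uniform input achieves capacity for BISO, hence quasi-symmetric, channels) into the two-sided bound of Lemma~\ref{lemma_BISO}, making the bounds collapse to $\eps/I(X;Y)$ on $[0,I(X;Y)]$, and observe that the constraint is vacuous for $\eps\geq I(X;Y)$. Your write-up is slightly more careful than the paper's, which contains a typo ($C(P_{Y|X})$ instead of $C(P_{X|Y})$) at exactly the step you flag as the ``mild subtlety.''
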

This corollary now enables us to prove the reverse direction of Theorem~\ref{Theorem_slope-equality} for the family of BISO channels.
\begin{theorem}\label{Theorem_Reverse_Direction_BISO}
If $P_{X|Y}$ is a BISO channel, then the following statements are equivalent:
\begin{itemize}
\item[(i)] $g_{\eps}(X;Y)=\eps\frac{H(Y)}{I(X;Y)}$ for $0\leq\eps\leq I(X;Y)$.
\item[(ii)] The initial efficiency of privacy-constrained information extraction is
$$g'_0(X;Y)=\frac{-\log P_Y(y)}{D(P_{X|Y}(\cdot|y)||P_X(\cdot))}, ~~\forall y\in\Y.$$
\end{itemize}
\end{theorem}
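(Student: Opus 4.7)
The plan is to prove the equivalence by assembling earlier results; no new technical machinery is required. Since $P_{X|Y}$ is BISO, the variable $Y$ is binary, so Corollary~\ref{generalizedtheorem1} yields $g_0(X;Y)=0$ (under the implicit non-degeneracy assumption that $X$ and $Y$ are not independent; otherwise both sides of both statements vanish trivially). Having $g_0(X;Y)=0$ at hand is what unlocks Theorem~\ref{Theorem_slope-equality} and the concavity-based slope argument.

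For the direction (i)$\Rightarrow$(ii), I would first note that evaluating (i) at $\eps=0$ reconfirms $g_0(X;Y)=0$, and that (i) is exactly the statement that $\eps\mapsto g_\eps(X;Y)$ is linear on $[0,I(X;Y)]$. Theorem~\ref{Theorem_slope-equality} then applies verbatim and gives
\[
\frac{H(Y)}{I(X;Y)}=\frac{-\log P_Y(y)}{D(P_{X|Y}(\cdot|y)\|P_X(\cdot))},\qquad \forall y\in\Y.
\]
Since $\eps\mapsto g_\eps(X;Y)$ is linear with slope $H(Y)/I(X;Y)$, its derivative at zero equals that common value, so $g'_0(X;Y)=\frac{-\log P_Y(y)}{D(P_{X|Y}(\cdot|y)\|P_X(\cdot))}$ for every $y$, which is (ii).

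For the direction (ii)$\Rightarrow$(i), the hypothesis asserts that the ratio $\frac{-\log P_Y(y)}{D(P_{X|Y}(\cdot|y)\|P_X(\cdot))}$ is the \emph{same} number $g'_0(X;Y)$ for every $y\in\Y$. The equality clause of Lemma~\ref{Lemma_slope_upperbound} then forces $g'_0(X;Y)=\frac{H(Y)}{I(X;Y)}$. Now invoke concavity of $\eps\mapsto g_\eps(X;Y)$ from Lemma~\ref{Lemma_Concavity} together with $g_0(X;Y)=0$: concavity implies $g_\eps(X;Y)\le \eps\,g'_0(X;Y)=\eps\frac{H(Y)}{I(X;Y)}$ for $\eps\in[0,I(X;Y)]$. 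The matching lower bound is exactly the left inequality in \eqref{UB&LB}, so equality holds throughout and (i) is established.

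The main (minor) obstacle is simply identifying where the BISO hypothesis actually gets used: it enters only through forcing $|\Y|=2$, which in turn certifies $g_0(X;Y)=0$ via Corollary~\ref{generalizedtheorem1}. Without this, (ii)$\Rightarrow$(i) would fail in general, because Remark~\ref{Remark_General_LoerBound} shows that when $g_0(X;Y)>0$ the chord from $(0,g_0(X;Y))$ to $(I(X;Y),H(Y))$ lies strictly above the line $\eps\mapsto\eps\frac{H(Y)}{I(X;Y)}$, so $g_\eps(X;Y)$ cannot reduce to that line. Beyond this bookkeeping, the whole argument is a clean packaging of Theorem~\ref{Theorem_slope-equality}, Lemma~\ref{Lemma_slope_upperbound}, and the concavity/lower-bound pair.
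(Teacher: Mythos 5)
Your proof is correct, and for the direction (ii)$\Rightarrow$(i) it is a genuinely different and noticeably shorter argument than the paper's. You handle (i)$\Rightarrow$(ii) exactly as the paper does, by Corollary~\ref{generalizedtheorem1} plus Theorem~\ref{Theorem_slope-equality}. For (ii)$\Rightarrow$(i), the paper specializes to the BISO structure: it writes the constancy of $\frac{-\log P_Y(y)}{D(P_{X|Y}(\cdot|y)\|P_X)}$ explicitly as equation~\eqref{ratio3}, proves in Appendix~\ref{Appendix_BISO_Uniform} (via the fairly involved Lemma~\ref{Lemma_Unique_sol_BISO} and a sign analysis of $h''$) that \eqref{ratio3} has the unique solution $p=\tfrac12$, and then invokes Corollary~\ref{corollary_BISO_Uniform}. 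You bypass that entire computation: (ii) says the ratio is constant, the equality clause of Lemma~\ref{Lemma_slope_upperbound} then pins $g'_0(X;Y)=\frac{H(Y)}{I(X;Y)}$, and concavity of $\eps\mapsto g_\eps(X;Y)$ combined with $g_0(X;Y)=0$ gives $g_\eps(X;Y)\le\eps g'_0(X;Y)$, which together with the lower bound in \eqref{UB&LB} forces equality. Your route is cleaner, avoids Appendix~\ref{Appendix_BISO_Uniform} entirely, and actually proves more: BISO is used only to certify $g_0(X;Y)=0$ (via binary $Y$ and Corollary~\ref{generalizedtheorem1}), so your argument establishes the equivalence under the weaker hypothesis $g_0(X;Y)=0$ alone. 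What you lose relative to the paper's calculation is the explicit conclusion that for BISO the two conditions are equivalent to $Y\sim\sBer(1/2)$; that extra structural fact is used downstream in Corollary~\ref{corollary_BSC_BEC} (to assert that $P_X=\sBer(0.5)$ is the \emph{only} distribution making $g_\eps$ linear), so if you were rewriting the paper with your proof you would still need a short separate argument there. As a proof of the stated equivalence, your version is correct and more elementary.
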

\begin{proof}
\emph{(i)$\Rightarrow$ (ii)}.

 This follows from Theorem~\ref{Theorem_slope-equality}.

\emph{(ii)$\Rightarrow$ (i)}.

Let $Y\sim\sBer(p)$ for $0<p<1$, and, as before, $\X=\{\pm 1, \pm 2, \dots, \pm k\}$, so that $P_{X|Y}$ is determined by a $2\times (2k)$ matrix. We then have
\begin{equation}\label{ratio1}
  \frac{-\log P_Y(0)}{D(P_{X|Y}(\cdot|0)||P_X(\cdot))}=\frac{\log(1-p)}{H(X|Y)+\sum_{x=-k}^kP_{X|Y}(x|0)\log(P_X(x))},
\end{equation}
and
\begin{equation}\label{ratio2}
  \frac{-\log P_Y(1)}{D(P_{X|Y}(\cdot|1)||P_X(\cdot))}=\frac{\log(p)}{H(X|Y)+\sum_{x=-k}^kP_{X|Y}(x|1)\log(P_X(x))}.
\end{equation}
The hypothesis implies that \eqref{ratio1} is equal to \eqref{ratio2}, that is,
\begin{equation}\label{ratio3}
  \frac{\log(1-p)}{H(X|Y)+\sum_{x=-k}^kP_{X|Y}(x|0)\log(P_X(x))}=\frac{\log(p)}{H(X|Y)+\sum_{x=-k}^kP_{X|Y}(x|1)\log(P_X(x))}.
\end{equation}

It is shown in Appendix~\ref{Appendix_BISO_Uniform} that \eqref{ratio3} holds if and only if $p=0.5$. Now we can invoke Corollary~\ref{corollary_BISO_Uniform} to conclude that $g_{\eps}(X;Y)=\eps\frac{H(Y)}{I(X;Y)}$.
\end{proof}

This theorem shows that for any BISO $P_{X|Y}$ channel with uniform input, the optimal privacy filter is an erasure channel depicted in Fig.\ \ref{fig:LowerBound_Filter}. Note that if $P_{X|Y}$ is a BSC with uniform input $P_{Y}=\sBer(0.5)$, then $P_{Y|X}$ is also a BSC with uniform input $P_{X}=\sBer(0.5)$. The following corollary specializes Corollary~\ref{corollary_BISO_Uniform} for this case.
%
%

\begin{corollary}\label{corollary_BSC_BEC}
For the joint distribution $P_{X}P_{Y|X}=\sBer(0.5)\times \text{BSC}(\alpha)$, the binary erasure channel with erasure probability (shown in Fig.~\ref{fig:Bsc})
\begin{equation}\label{BEC_optimal_prob}
    \delta(\eps, \alpha):=1-\frac{\eps}{I(X;Y)}.
\end{equation}
for $0\leq \eps\leq I(X;Y)$, is the optimal privacy filter in \eqref{gepsilon}. In other words, for $\eps\geq 0$
 $$g_{\eps}(X;Y)=\frac{1}{I(X;Y)}\min\{\eps, I(X;Y)\}.$$ Moreover, for a given $0<\alpha<\frac{1}{2}$, $P_X=\sBer(0.5)$ is the only distribution for which $\eps\mapsto g_{\eps}(X;Y)$ is linear. That is, for $P_{X}P_{Y|X}=\sBer(p)\times \text{BSC}(\alpha)$, $0<p<0.5$, we have
 $$g_{\eps}(X;Y)>\eps\frac{H(Y)}{I(X;Y)}.$$
\end{corollary}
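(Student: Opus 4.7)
\emph{Part 1.} When $P_X = \sBer(0.5)$ and $P_{Y|X}$ is a $\mathrm{BSC}(\alpha)$, a direct Bayes computation gives $P_Y = \sBer(0.5)$ and shows that the reverse channel $P_{X|Y}$ is again a $\mathrm{BSC}(\alpha)$, which is BISO. The hypotheses of Corollary~\ref{corollary_BISO_Uniform} are therefore satisfied and yield $g_{\eps}(X;Y) = \min\{\eps, I(X;Y)\}/I(X;Y)$. For the optimality of the filter, I would invoke the observation made right after \eqref{UB&LB} that the erasure filter of Fig.~\ref{fig:LowerBound_Filter} with erasure probability \eqref{Delta} attains the lower bound of \eqref{UB&LB} with equality for every $\eps \in [0, I(X;Y)]$; under uniform $Y$ this lower bound equals the closed form above, so the corresponding BEC with $\delta(\eps,\alpha) = 1 - \eps/I(X;Y)$ achieves the supremum in \eqref{gepsilon}.

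\emph{Part 2 (setup).} Now let $P_X = \sBer(p)$ with $0 < p < 1/2$ and fix $\alpha \in (0,1/2)$. Since $Y$ is non-degenerate binary, Corollary~\ref{generalizedtheorem1} gives $g_0(X;Y)=0$. Suppose, for a contradiction, that $\eps \mapsto g_\eps(X;Y)$ is linear on $[0,I(X;Y)]$; then $g_\eps(X;Y) = \eps H(Y)/I(X;Y)$, and Theorem~\ref{Theorem_slope-equality} forces
\[
\frac{-\log P_Y(0)}{D(P_{X|Y}(\cdot|0)\|P_X)} \;=\; \frac{H(Y)}{I(X;Y)} \;=\; \frac{-\log P_Y(1)}{D(P_{X|Y}(\cdot|1)\|P_X)}.
\]
The plan is to substitute $P_Y(0)=(1-\alpha)(1-p)+\alpha p$, $P_Y(1) = 1-P_Y(0)$, and the Bayes expressions for $P_{X|Y}(\cdot|y)$ into this identity, and to show that the resulting transcendental equation in $p$ admits $p=1/2$ as its only root on $(0,1/2]$ for each fixed $\alpha \in (0,1/2)$.

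\emph{Part 2 (main obstacle).} That $p=1/2$ is a root is immediate from the symmetry $P_Y(0)=P_Y(1)$ and $D(\cdot|0)=D(\cdot|1)$ that holds in the uniform-input case. The main difficulty is showing there is no other root in $(0,1/2)$. I would attack this by defining the smooth function $\Phi(p;\alpha) := \log P_Y(0)\cdot D(\cdot|1) - \log P_Y(1)\cdot D(\cdot|0)$ on $(0,1)$ and establishing a strict sign change through $p=1/2$ by direct differentiation in $p$, exploiting the fact that only at $p=1/2$ does $P_{X|Y}$ itself become a (backward) BSC, so that for $p \ne 1/2$ the two conditional divergences depend on $p$ in genuinely different ways. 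Once $\Phi(p;\alpha) \ne 0$ for $p \in (0,1/2)$ is secured, the ratio identity fails and the linearity assumption is contradicted. Finally, the strict inequality $g_\eps(X;Y) > \eps H(Y)/I(X;Y)$ on $(0,I(X;Y))$ follows from concavity of $\eps\mapsto g_\eps(X;Y)$ together with $g_0(X;Y)=0$ and $g_{I(X;Y)}(X;Y)=H(Y)$, since a concave function that agrees with its endpoint chord on an interior point must coincide with the chord throughout.
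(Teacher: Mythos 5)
Your Part~1 follows the paper's own argument essentially verbatim: the BSC with uniform input is self-conjugate, so $P_{X|Y}$ is again a $\mathrm{BSC}(\alpha)$ with $Y\sim\sBer(0.5)$, Corollary~\ref{corollary_BISO_Uniform} gives the closed form $g_{\eps}(X;Y)=\min\{\eps,I(X;Y)\}/I(X;Y)$, and since that equals the generic lower bound of \eqref{UB&LB}, the erasure filter of Fig.~\ref{fig:LowerBound_Filter} with $\delta=1-\eps/I(X;Y)$ is optimal. No issues there.

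Part~2, however, has a genuine gap at precisely the step the paper spends real effort on. You correctly reduce, via Corollary~\ref{generalizedtheorem1} and Theorem~\ref{Theorem_slope-equality}, to showing that the ratio identity
$\frac{-\log P_Y(0)}{D(P_{X|Y}(\cdot|0)\|P_X)}=\frac{-\log P_Y(1)}{D(P_{X|Y}(\cdot|1)\|P_X)}$
forces $p=1/2$, and you also correctly observe that once nonlinearity is established, the strict inequality on $(0,I(X;Y))$ follows from concavity plus the two known endpoint values. But the uniqueness-of-root claim is left as a plan (``I would attack this by defining $\Phi(p;\alpha)$\dots and establishing a strict sign change through $p=1/2$ by direct differentiation in $p$''), not a proof, and it is not at all obvious that differentiating in $p$ yields a clean sign change. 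This is exactly where the paper's argument becomes nontrivial: Theorem~\ref{Theorem_Reverse_Direction_BISO} delegates the uniqueness to Lemma~\ref{Lemma_Unique_sol_BISO} (Appendix~\ref{Appendix_BISO_Uniform}), which proves the two strict inequalities
$\frac{D(P\|R_{1-\lambda})}{D(P\|R_{\lambda})}\lessgtr\frac{\log(1-\lambda)}{\log\lambda}$
on the two sides of $\lambda=1/2$. That proof does \emph{not} differentiate in $p=\lambda$; instead it writes both divergences as sums of a function $\Upsilon(\lambda,\zeta_x)$ of the likelihood ratio $\zeta_x=P(-x)/P(x)$, studies the second derivative of $h(\zeta)=\Phi(\lambda,\zeta)\log\lambda\log(1-\lambda)$ in $\zeta$, shows the factor $B(\lambda,\zeta)$ is a concave quadratic in $\zeta$ changing sign exactly once, and uses $h(1)=h'(1)=0$ together with this convex-then-concave structure to get $h>0$ on $(1,\infty)$. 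Until you supply an argument of comparable substance (or explicitly invoke Lemma~\ref{Lemma_Unique_sol_BISO} / the proof of Theorem~\ref{Theorem_Reverse_Direction_BISO}, as the paper does), Part~2 is incomplete.
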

\begin{proof}\label{Remark_on_Uniform_binary}
As mentioned earlier, since $P_X=\sBer(0.5)$ and $P_{Y|X}$ is $\text{BSC}(\alpha)$, it follows that $P_{X|Y}$ is also a BSC with uniform input and hence from Corollary~\ref{corollary_BISO_Uniform}, we have $g_{\eps}(X;Y)=\frac{\eps}{I(X;Y)}$. As in this case $g_{\eps}(X;Y)$ achieves the lower bound given in Lemma~\ref{non-increasing-Lemma}, we conclude from Fig.~\ref{fig:LowerBound_Filter} that BEC($\delta(\eps, \alpha)$), where $\delta(\eps, \alpha)=1-\frac{\eps}{I(X;Y)}$, is an optimal privacy filter.  The fact that $P_X=\sBer(0.5)$ is the only input distribution for which $\eps\mapsto g_{\eps}(X;Y)$ is linear follows from the proof of Theorem~\ref{Theorem_Reverse_Direction_BISO}. In particular, we saw that a necessary and sufficient condition for $g_{\eps}(X;Y)$ being linear is that the ratio $\frac{-\log P_Y(y)}{D(P_{X|Y}(\cdot|y)||P_X(\cdot))}$ is constant for all $y\in\Y$. As shown before, this is equivalent to $Y\sim\sBer(0.5)$. For the binary symmetric channel, this is equivalent to $X\sim\sBer(0.5)$.
\end{proof}
 \vspace{-5 mm}
\begin{figure}[h]
\centering
\begin{tikzpicture}
\node (a) [circle] at (0,0) {1};
\node (b) [circle] at (0,1.4) {0};
\node (c) [circle] at (2,0) {1};
\node (d) [circle] at (2,1.4) {0};
\draw[->] (0.15,0) -- (1.85,0) node[pos=.5,sloped,below] {\footnotesize{$1-\alpha$}};
\draw[->] (0.15,0) -- (1.85,1.4) node[pos=.75,sloped,below] {};
\draw[->] (0.15,1.4) -- (1.85,0) node[pos=.25,sloped,above] {};
\draw[->] (0.15,1.4) -- (1.85,1.4) node[pos=.5,sloped,above] {\footnotesize{$1-\alpha$}};
\node (e) [circle] at (2.3,0) {};
\node (f) [circle] at (2.3,1.4) {};
\node (g) [circle] at (4.6,0) {1};
\node (h) [circle] at (4.6,1.4) {0};
\node (k) [circle] at  (4.6, 0.7) {\text{e}};
\draw[->] (2.3,0) -- (4.45,0) node[pos=.5,sloped,below] {\footnotesize{$1-\delta(\eps,\alpha)$}};
\draw[->] (2.3,0) -- (4.45, 0.7) node[pos=.5,sloped,below] {};
\draw[->] (2.3,1.4) -- (4.45,1.4) node[pos=.5,sloped,above] {\footnotesize{$1-\delta(\eps,\alpha)$}};
\draw[->] (2.3,1.4) -- (4.45, 0.7) node[pos=.5,sloped,above] {};
\end{tikzpicture}
\caption{\small{Optimal privacy filter for $P_{Y|X}=BSC(\alpha)$ with uniform $X$ where $\delta(\eps,\alpha)$ is specified in \eqref{BEC_optimal_prob}}.} \label{fig:Bsc}
\end{figure}
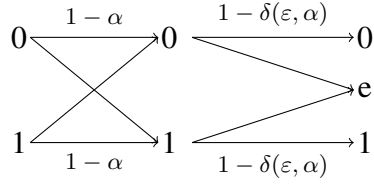
\vspace{1 mm}
  The optimal privacy filter for BSC($\alpha$) and uniform $X$ is shown in Figure~\ref{fig:Bsc}.   In fact, this corollary immediately implies that the general lower-bound given in \eqref{UB&LB} is tight for the binary symmetric channel with uniform $X$.

\subsubsection{Erasure Observation Channel}

Combining \eqref{General_Lower_Bound} and Lemma~\ref{non-increasing-Lemma}, we have for $\eps\leq I(X;Y)$
\begin{equation}\label{General_LB_UB}
  \eps\frac{H(Y)}{I(X;Y)}+g_0(X;Y)\left[1-\frac{\eps}{I(X;Y)}\right]\leq g_{\eps}(X;Y)\leq H(Y|X)+\eps,
\end{equation}

In the following we show that the above upper and lower bound coincide when $P_{Y|X}$ is an erasure channel, i.e.,  $P_{Y|X}(x|x)=1-\delta$ and $P_{Y|X}(\text{e}|x)=\delta$ for all $x\in \X$ and  $0\leq\delta\leq 1$.

\begin{lemma}\label{Lemma_erasure}
For any given $(X,Y)$, if $P_{Y|X}$ is an erasure channel (as defined above), then $$g_{\eps}(X;Y)=H(Y|X)+\min\{\eps, I(X;Y)\},$$ for any $\eps\geq 0$.
\end{lemma}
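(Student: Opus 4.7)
The upper bound $g_{\eps}(X;Y)\le H(Y|X)+\eps$ for $\eps\in[0,I(X;Y)]$ is already available from \eqref{General_LB_UB}, and for $\eps\ge I(X;Y)$ the privacy constraint becomes inactive so $g_{\eps}(X;Y)=H(Y)=H(Y|X)+I(X;Y)$. Hence the only work is to exhibit, for each $\eps\in[0,I(X;Y)]$, a privacy filter $P_{Z|Y}$ satisfying $X\markov Y\markov Z$, $I(X;Z)\le\eps$, and $I(Y;Z)\ge H(Y|X)+\eps$.

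The structural feature I would exploit is that for an erasure observation channel, $Y$ decomposes losslessly into an erasure flag $E:=\mathds{1}[Y=\mathrm{e}]$ and the symbol $Y$ itself, where $E\sim\sBer(\delta)$ is \emph{independent} of $X$, and conditioned on $E=0$ we have $Y=X$. Therefore revealing $E$ is ``free'' privacy-wise (it contributes $H(E)=h_b(\delta)=H(Y|X)$ to $I(Y;Z)$ while contributing $0$ to $I(X;Z)$), and any further leakage can be spent on describing $X$ itself. Concretely, pick an auxiliary channel $Q:\X\to\V$ with $I(X;Q(X))=\eps/(1-\delta)$ (possible since $\eps\le I(X;Y)=(1-\delta)H(X)$, and $t\mapsto I(X;Q_t(X))$ sweeps continuously from $0$ to $H(X)$ as $Q_t$ interpolates between a constant channel and the identity). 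Define the filter
\[
Z=(E,V),\qquad V=\begin{cases}Q(Y),&E=0,\\ \bot,&E=1,\end{cases}
\]
which is a legitimate (randomized) function of $Y$, so $X\markov Y\markov Z$ holds.

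The verification is then a short chain-rule calculation that I would carry out as follows. For the privacy side, since $E$ is a deterministic function of $Y$ and independent of $X$,
\[
I(X;Z)=I(X;E)+I(X;V\mid E)=(1-\delta)\,I(X;V\mid E=0)=(1-\delta)\,I(X;Q(X))=\eps,
\]
using that $V=\bot$ is deterministic when $E=1$ and that $Y=X$ when $E=0$. For the utility side,
\[
I(Y;Z)=I(Y;E)+I(Y;V\mid E)=H(E)+(1-\delta)\,I(X;Q(X))=h_b(\delta)+\eps=H(Y|X)+\eps,
\]
matching the upper bound. Combining with the trivial extension for $\eps\ge I(X;Y)$ (using $Z=Y$) yields the claimed identity.

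The only delicate point is the existence of $Q$ with a prescribed mutual information value $\eps/(1-\delta)\in[0,H(X)]$; this is the ``main obstacle'' only in the sense that one must check it cleanly. I would handle it by the intermediate value theorem applied to the one-parameter family $Q_\alpha$ that outputs $X$ with probability $\alpha$ and a fixed constant with probability $1-\alpha$, for which $I(X;Q_\alpha(X))$ is continuous in $\alpha$ with endpoints $0$ and $H(X)$. Everything else is routine application of the chain rule and the Markov structure induced by the erasure channel.
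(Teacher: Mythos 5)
Your proof is correct, and it takes a genuinely different route from the paper's. The paper proves only that $g_0(X;Y)=H(Y|X)$ — by exhibiting a single perfect-privacy filter (the one that maps every non-erasure symbol uniformly over $\X$ and keeps $\mathrm{e}$ fixed) — and then invokes the concavity-based lower bound \eqref{General_LB_UB}, which with $g_0(X;Y)=H(Y|X)$ collapses onto the upper bound $H(Y|X)+\eps$. You instead bypass concavity entirely and construct an $\eps$-achieving filter for every $\eps\in[0,I(X;Y)]$, using the clean structural decomposition $Y\leftrightarrow(E,\,V)$ with $E\indep X$ and $Y=X$ on $\{E=0\}$; a chain-rule computation then gives $I(X;Z)=\eps$ and $I(Y;Z)=H(Y|X)+\eps$ directly. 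What the paper's route buys is brevity (one construction, at $\eps=0$), at the cost of leaning on the previously proved concavity machinery; what your route buys is a self-contained argument that also makes the form of an optimal filter explicit for all $\eps$, not just at $\eps=0$ — the ``reveal the erasure flag for free, spend the remaining budget describing $X$'' picture. The intermediate-value argument for choosing $Q$ with $I(X;Q(X))=\eps/(1-\delta)$ is the right way to make the construction airtight, and your observation that $\eps\le I(X;Y)=(1-\delta)H(X)$ is exactly the condition that keeps the target value in $[0,H(X)]$. The only cosmetic remark is that your $\eps=0$ filter $Z=E$ is an equivalent but simpler description than the paper's uniform-scrambling filter; both induce the same mutual-information values.
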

\begin{proof}
It suffices to show that if $P_{Y|X}$ is an erasure channel, then $g_0(X;Y)=H(Y|X)$. This follows, since if $g_0(X;Y)=H(Y|X)$, then the lower bound in \eqref{General_LB_UB} becomes $H(Y|X)+\eps$ and thus $g_{\eps}(X;Y)=H(Y|X)+\eps$.

Let $|\X|=m$ and $\Y=\X\cup \{\text{e}\}$ where $\text{e}$ denotes the erasure symbol. Consider the following privacy filter to generate $Z\in \Y$:
$$P_{Z|Y}(z|y)=\left\{
  \begin{array}{ll}
   \frac{1}{m}  & \text{if} ~~y\neq \text{e}, z\neq \text{e}, \\
    1 & \text{if} ~~ y=z=\text{e}.
  \end{array}
\right.$$

For any $x\in \X$, we have
$$P_{Z|X}(z|x)=P_{Z|Y}(z|x)P_{Y|X}(x|x)+P_{Z|Y}(z|\text{e})P_{Y|X}(\text{e}|x)=\left[\frac{1-\delta}{m}\right]1_{\{z\neq \text{e}\}}+\delta1_{\{z=\text{e}\}},$$
which implies $Z\indep X$ and thus $I(X;Z)=0$. On the other hand, $P_Z(z)=\left[\frac{1-\delta}{m}\right]1_{\{z\neq \text{e}\}}+\delta 1_{\{z=\text{e}\}}$, and therefore we have
\begin{eqnarray*}
  g_0(X;Y)&\geq & I(Y;Z) = H(Z)-H(Z|Y)=H\left(\frac{1-\delta}{m}, \dots, \frac{1-\delta}{m}, \delta\right)-(1-\delta)\log (m) \\
   &=& h(\delta)=H(Y|X).
\end{eqnarray*}
 It then follows from Lemma~\ref{non-increasing-Lemma} that $g_0(X;Y)=H(Y|X)$, which completes the proof.
\end{proof}
\begin{example}
 In light of this lemma, we can conclude that if $P_{Y|X}=\text{BEC}(\delta)$, then the optimal privacy filter is a combination of an identity channel and  a BSC($\alpha(\eps, \delta)$), as shown in Fig. \ref{fig:Bec}, where $0\leq\alpha(\eps,\delta)\leq~\frac{1}{2}$ is the unique solution of
\begin{equation}\label{BSC_optimal_prob}
    (1-\delta)[h_b(\alpha*p)-h_b(\alpha)]=\eps,
\end{equation}
 where $X\sim\sBer(p)$, $p\leq 0.5$ and $a*b=a(1-b)+b(1-a)$. Note that it is easy to check that $I(X;Z)=(1-\delta)[h_b(\alpha*p)-h_b(\alpha)]$. Therefore, in order for this channel to be a valid privacy filter, the crossover probability, $\alpha(\eps, \delta)$, must be chosen such that $I(X;Z)=\eps$. We note that for fixed $0< \delta< 1$ and $0< p< 0.5$, the map $\alpha\mapsto (1-\delta)[h_b(\alpha*p)-h_b(\alpha)]$ is monotonically decreasing on $[0, \frac{1}{2}]$ ranging over $[0, (1-\delta)h_b(p)]$ and since $\eps\leq I(X;Y)=(1-\delta)h_b(p)$, the solution of the above equation is unique.
\end{example}
\vspace{-2 mm}
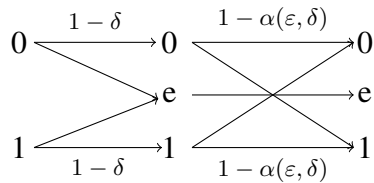
\begin{figure}[h]
\centering
\begin{tikzpicture}
\node (a) [circle] at (0,0) {1};
\node (b) [circle] at (0,1.4) {0};
\node (c) [circle] at (2,0) {1};
\node (d) [circle] at (2,1.4) {0};
\node (e) [circle] at (2,0.7) {e};
\draw[->] (0.2,0) -- (1.9,0) node[pos=.5,sloped,below] {\footnotesize{$1-\delta$}};
\draw[->] (0.2,0) -- (1.85,0.65) node[pos=.75,sloped,below] {};
\draw[->] (0.2,1.4) -- (1.85,0.65) node[pos=.25,sloped,above] {};
\draw[->] (0.2,1.4) -- (1.82,1.4) node[pos=.5,sloped,above] {\footnotesize{$1-\delta$}};
\node (e1) [circle] at (2.3,0) {};
\node (e2) [circle] at (2.3,0.7) {};
\node (f) [circle] at (2.3,1.4) {};
\node (g) [circle] at (4.6,0) {1};
\node (h) [circle] at (4.6,1.4) {0};
\node (k) [circle] at  (4.6, 0.7) {\text{e}};
\draw[->] (2.3,0) -- (4.45,0) node[pos=.5,sloped,below] {\footnotesize{$1-\alpha(\eps,\delta)$}};
\draw[->] (2.3,0) -- (4.45,1.4) node[pos=.5,sloped,below] {};
\draw[->] (2.3,1.4)-- (4.45,1.4) node[pos=.5,sloped,above]{\footnotesize{$1-\alpha(\eps,\delta)$}};
\draw[->] (2.3,0.7) -- (4.45, 0.7) node[pos=.5,sloped,above] {};
\draw[->] (2.3,1.4) -- (4.45,0) node[pos=.5,sloped,above] {};
\end{tikzpicture}
\caption{\small{Optimal privacy filter for $P_{Y|X}=BEC(\delta)$ where $\delta(\eps,\alpha)$ is specified in \eqref{BSC_optimal_prob}}.} \label{fig:Bec}
\end{figure}
Combining Lemmas~\ref{non-increasing-Lemma} and \ref{Lemma_erasure} with Corollary~\ref{corollary_BSC_BEC}, we can show the following \emph{extremal property} of the BEC and BSC channels, which is similar to other existing extremal properties of the BEC and the BSC, see e.g., \cite{Shamai_BISO2} and \cite{Shamai_BISO1}. For $X\sim\sBer(0.5)$, we have for any channel $P_{Y|X}$,
$$g_{\eps}(X;Y)\geq \frac{\eps}{I(X;Y)}=g_{\eps}(\text{BSC}(\hat{\alpha})),$$ where $g_{\eps}(\text{BSC}(\alpha))$ is the rate-privacy function corresponding to $P_{XY}=\sBer(0.5)\times \text{BSC}(\alpha)$ and  $\hat{\alpha}:=h_b^{-1}(H(X|Y))$. Similarly, if $X\sim \sBer(p)$, we have for any channel $P_{Y|X}$ with $H(Y|X)\leq 1$,
$$g_{\eps}(X;Y)\leq H(Y|X)+\eps=g_{\eps}(\text{BEC}(\hat{\delta})),$$
where $g_{\eps}(\text{BEC}(\delta))$ is the rate-privacy function corresponding to $P_{XY}=\sBer(p)\times \text{BEC}(\delta)$ and  $\hat{\delta}:=h_b^{-1}(H(Y|X))$.
\section{Rate-Privacy Function for Continuous Random Variables}

In this section we extend the rate-privacy function $g_{\eps}(X;Y)$ to the continuous case. Specifically, we assume that the private and observable data are continuous random variables and that the filter is composed of two stages: first Gaussian noise is added and then the resulting random variable is quantized using an $M$-bit accuracy uniform scalar quantizer (for some positive integer $M\in\N$). These filters are of practical interest as they can be easily implemented. This section is divided in two subsections, in the first we discuss general properties of the rate-privacy function and in the second we study the Gaussian case in more detail. Some observations on $\hat{g}_{\eps}(X;Y)$ for continuous $X$ and $Y$ are also given.

\subsection{General properties of the rate-privacy function}
\label{Subsection:gContinuous}

Throughout this section we assume that the random vector $(X,Y)$ is absolutely continuous with respect to the Lebesgue measure on $\R^2$. Additionally, we assume that its joint density $f_{X,Y}$ satisfies the following.
\begin{itemize}
	\item[(a)] There exist constants $C_1>0$, $p>1$ and bounded function $C_2:\R\to \R$ such that
$$f_Y(y)\leq C_1|y|^{-p},$$ and also for $x\in \R$ $$f_{Y|X}(y|x)\leq C_2(x)|y|^{-p},$$

	\item[(b)] $\E[X^2]$ and $\E[Y^2]$ are both finite,
	\item[(c)] the differential entropy of $(X,Y)$ satisfies $h(X,Y)>-\infty$,
\item[(d)] $H(\lfloor Y\rfloor)<\infty$, where $\lfloor a\rfloor$ denotes the  largest integer $\ell$ such that $\ell\leq a$.
\end{itemize}

Note that assumptions (b) and (c) together imply that $h(X, Y)$, $h(X)$ and $h(Y)$ are finite, i.e., the maps $x\mapsto f_X(x) |\log f_X(x)|, \;y\mapsto f_Y(y) |\log f_Y(y)|$ and $(x,y)\mapsto f_{X,Y}(x,y)|\log(f_{X,Y}(x,y))|$ are integrable.
We also assume that $X$ and $Y$ are not independent, since otherwise the problem to characterize $g_{\eps}(X;Y)$ becomes trivial by assuming that the displayed data $Z$ can equal the observable data $Y$.

We are interested in filters of the form $\Q_M(Y+\gamma N)$ where $\gamma\geq0$, $N\sim N(0,1)$ is a standard normal random variable which is independent of $X$ and $Y$, and for any positive integer $M$, $\Q_M$ denotes the $M$-bit accuracy uniform scalar quantizer, i.e., for all $x\in\R$
\begin{equation*}
\Q_M(x) = \frac{1}{2^M} \left\lfloor 2^M x\right\rfloor.
\end{equation*}

Let $Z_\gamma=Y+\gamma N$ and $Z_\gamma^M=\Q_M(Z_\gamma)=\Q_M(Y+\gamma N)$. We define, for any $M\in\N$,
\begin{equation}
\label{eq:ContinuousCaseDefgM}
g_{\eps, M}(X;Y) := \sup\limits_{\begin{smallmatrix}\gamma\geq0,\\I(X;Z_\gamma^M)\leq\eps\end{smallmatrix}} I(Y;Z_\gamma^M),
\end{equation}
and similarly
\begin{equation}
\label{eq:ContinuousCaseDefg}
g_\eps(X;Y) :=  \sup\limits_{\begin{smallmatrix}\gamma\geq0,\\I(X;Z_\gamma)\leq\eps\end{smallmatrix}} I(Y;Z_\gamma).
\end{equation}

The next theorem shows that the previous definitions are closely related.

\begin{theorem}
\label{Thm:gMtog}
Let $\eps>0$ be fixed. Then $\displaystyle \lim_{M\to\infty} g_{\eps, M}(X;Y) = g_\eps(X;Y)$.
\end{theorem}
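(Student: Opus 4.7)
The plan is to establish the matching inequalities $\liminf_{M\to\infty} g_{\eps,M}(X;Y) \geq g_\eps(X;Y)$ and $\limsup_{M\to\infty} g_{\eps,M}(X;Y) \leq g_\eps(X;Y)$ separately. Both rely on two structural facts about the dyadic quantizer. First, since $Z_\gamma^M = \Q_M(Z_\gamma)$ is a deterministic function of $Z_\gamma$, the data-processing inequality yields $I(X;Z_\gamma^M)\leq I(X;Z_\gamma)$ and $I(Y;Z_\gamma^M)\leq I(Y;Z_\gamma)$. Second, because the dyadic partition of resolution $M+1$ refines that of level $M$, one checks $\Q_L=\Q_L\circ\Q_M$ for $L\leq M$, so $I(X;Z_\gamma^L)\leq I(X;Z_\gamma^M)$, and because the Borel $\sigma$-algebra on $\R$ is generated by dyadic intervals, $I(X;Z_\gamma^M)\nearrow I(X;Z_\gamma)$ as $M\to\infty$ (analogously for $Y$).

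For the easier direction $\liminf g_{\eps,M}(X;Y)\geq g_\eps(X;Y)$, I would fix $\delta>0$ and pick $\gamma^*\geq 0$ that is $\delta$-optimal for $g_\eps(X;Y)$, i.e.\ $I(X;Z_{\gamma^*})\leq\eps$ and $I(Y;Z_{\gamma^*})\geq g_\eps(X;Y)-\delta$. Data processing gives $I(X;Z_{\gamma^*}^M)\leq\eps$, so this same $\gamma^*$ is admissible in the definition of $g_{\eps,M}(X;Y)$, and the refinement property yields $g_{\eps,M}(X;Y)\geq I(Y;Z_{\gamma^*}^M)\nearrow I(Y;Z_{\gamma^*})\geq g_\eps(X;Y)-\delta$. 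Letting $\delta\downarrow 0$ concludes this half.

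The converse is more delicate. Let $\gamma_M$ be $1/M$-optimal for $g_{\eps,M}(X;Y)$. Assumption (b) together with the Gaussian max-entropy bound $I(Y;Y+\gamma N)\leq \tfrac{1}{2}\log(1+\E[Y^2]/\gamma^2)$ shows $I(Y;Z_\gamma)\to 0$ as $\gamma\to\infty$, so if $\gamma_M$ is unbounded I pass to a subsequence on which $g_{\eps,M}\to 0$, and the desired inequality is trivial. Otherwise extract a subsequence $\gamma_{M_k}\to\gamma^*\in[0,\infty)$. The key step is to transfer the privacy constraint: for any fixed level $L$ and $M_k\geq L$, the refinement property gives $I(X;Z_{\gamma_{M_k}}^L)\leq I(X;Z_{\gamma_{M_k}}^{M_k})\leq\eps$. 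Sending $k\to\infty$ and using continuity of $\gamma\mapsto I(X;Z_\gamma^L)$ at $\gamma^*$—which holds because $Z_\gamma^L$ is supported on the countable grid $2^{-L}\mathbb{Z}$, each cell-probability $\Pr(Z_\gamma^L=2^{-L}k\mid X=x)$ depends continuously on $\gamma$ via the density assumptions (a) and (c), and $H(Z_\gamma^L)$ remains finite by assumption (d)—I obtain $I(X;Z_{\gamma^*}^L)\leq\eps$ for every $L$. Letting $L\to\infty$ and invoking monotone convergence then gives $I(X;Z_{\gamma^*})\leq\eps$, so $\gamma^*$ is feasible in $g_\eps(X;Y)$. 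Combining this with $I(Y;Z_{\gamma_{M_k}}^{M_k})\leq I(Y;Z_{\gamma_{M_k}})$ and continuity of $\gamma\mapsto I(Y;Z_\gamma)$ on $(0,\infty)$ (standard for additive Gaussian channels; the corner $\gamma^*=0$ forces $I(X;Y)\leq\eps$, making $g_\eps(X;Y)=+\infty$ and the inequality trivial) yields $\limsup_k g_{\eps,M_k}(X;Y)\leq I(Y;Z_{\gamma^*})\leq g_\eps(X;Y)$. A standard subsequence argument lifts this to the full sequence.

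The main obstacle is precisely the interchange of limits needed to pass $I(X;Z_{\gamma_M}^M)\leq\eps$ through to $I(X;Z_{\gamma^*})\leq\eps$: no uniform rate of convergence of $I(X;Z_\gamma^M)\to I(X;Z_\gamma)$ in $\gamma$ is available, so one cannot simply send $M\to\infty$ and $\gamma_M\to\gamma^*$ simultaneously. The route through an intermediate fixed-resolution quantizer $\Q_L$, on which continuity in $\gamma$ is directly accessible from the density hypotheses (a)–(d), is what allows the two limits to be decoupled and then reassembled by monotonicity in $L$; everything else in the proof is relatively routine data-processing bookkeeping.
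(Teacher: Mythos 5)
Your proof is correct and reaches the same conclusion, but the converse direction is organized quite differently from the paper's. The paper defines $\Gamma_\eps^M=\{\gamma\geq0: I(X;Z_\gamma^M)\leq\eps\}$, bounds $g_{\eps,M}\leq I(Y;Z_{\gamma_{\eps,\min}^M})$ where $\gamma_{\eps,\min}^M:=\min\Gamma_\eps^M$ (using closedness of $\Gamma_\eps^M$ and that $\gamma\mapsto I(Y;Z_\gamma)$ is decreasing), and then exploits the nesting $\Gamma_\eps^{M+1}\subset\Gamma_\eps^M$ to show the increasing sequence $\gamma_{\eps,\min}^M$ converges to exactly the optimizer $\gamma_\eps$ for $g_\eps$, invoking the strict monotonicity of $\gamma\mapsto I(X;Z_\gamma)$ at the last step. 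You instead run a Bolzano--Weierstrass compactness argument on approximate maximizers $\gamma_M$, transfer the privacy constraint to the cluster point via a fixed intermediate resolution $\Q_L$ and then $L\to\infty$, and handle the unbounded case via the tail decay of $I(Y;Z_\gamma)$. What your route buys: you never need the supremum in $g_\eps$ to be attained (you work with a $\delta$-optimal $\gamma^*$ for the lower bound), and you never invoke the uniqueness of $\gamma_\eps$ or the strict monotonicity of $\gamma\mapsto I(X;Z_\gamma)$ — only continuity. What it costs: the paper's monotone $\gamma_{\eps,\min}^M$ argument is shorter and avoids the case analysis and the subsequence bookkeeping. Both proofs ultimately rest on the same technical inputs — the nesting/refinement of dyadic quantizers, the monotone convergence $I(\cdot;Z_\gamma^M)\nearrow I(\cdot;Z_\gamma)$, and continuity of $\gamma\mapsto I(X;Z_\gamma^L)$ at fixed $L$, which you invoke somewhat informally but which the paper justifies carefully via the tail bound on cell probabilities $p_{k,\gamma}$ and dominated convergence; this last point is the only place where your sketch would need to be filled in to be airtight.
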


\begin{proof}
See Appendix \ref{Appendix:ProofsContinuous}.
\end{proof}

In the limit of large $M$, $g_\eps(X;Y)$ approximates $g_{\eps, M}(X;Y)$. This becomes relevant when $g_\eps(X;Y)$ is easier to compute than $g_{\eps, M}(X;Y)$, as demonstrated in the following subsection. The following theorem summarizes some general properties of $g_\eps(X;Y)$.

\begin{theorem}
\label{Thm:gContinuous}
The function $\eps\mapsto g_\eps(X;Y)$ is non-negative, strictly-increasing, and satisfies
\begin{equation*}
\lim_{\eps\to0} g_\eps(X;Y)=0\quad\quad\quad\textnormal{and}\quad\quad\quad g_{I(X;Y)}(X;Y)=\infty.
\end{equation*}
\end{theorem}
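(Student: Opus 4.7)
The plan is to reduce the analysis of $\eps \mapsto g_\eps(X;Y)$ to the behavior of the two scalar functions $\gamma \mapsto I(X; Z_\gamma)$ and $\gamma \mapsto I(Y; Z_\gamma)$ on $[0,\infty)$. First I would establish that each of these is continuous and strictly decreasing. Monotonicity follows from the decomposition $Z_{\gamma_2} = Z_{\gamma_1} + \sqrt{\gamma_2^2 - \gamma_1^2}\, N'$ for $\gamma_2 > \gamma_1 \geq 0$ with an independent $N' \sim N(0,1)$, which yields the Markov chains $X \markov Z_{\gamma_1} \markov Z_{\gamma_2}$ and $Y \markov Z_{\gamma_1} \markov Z_{\gamma_2}$; by the data-processing inequality both mutual informations are non-increasing in $\gamma$. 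The strict inequalities come from the I-MMSE identity applied to $\mathrm{snr}=1/\gamma^2$, using that $Y$ is non-degenerate and not independent of $X$ so that the MMSE derivative of each mutual information is strictly positive. Continuity on $(0,\infty)$ is standard for mutual information across a Gaussian channel with fixed input.

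Next I would identify the boundary values. At $\gamma=0$ one has $Z_0=Y$, hence $I(X;Z_0)=I(X;Y)$, and $I(Y;Z_0)=I(Y;Y)=+\infty$ because $Y$ is absolutely continuous with finite differential entropy (assumptions (b)--(c)). As $\gamma\to\infty$ I would show that both $I(X;Z_\gamma)$ and $I(Y;Z_\gamma)$ tend to $0$. Writing
\begin{equation*}
I(X;Z_\gamma)=h(Z_\gamma)-h(Z_\gamma\mid X),
\end{equation*}
the Gaussian maximum-entropy inequality gives $h(Z_\gamma)\leq\tfrac{1}{2}\log\bigl(2\pi e(\var(Y)+\gamma^2)\bigr)$, while the entropy-power inequality yields $h(Z_\gamma\mid X)\geq \tfrac{1}{2}\log\bigl(e^{2h(Y\mid X)}+2\pi e\gamma^2\bigr)$; the difference between the two upper bounds tends to $0$ as $\gamma\to\infty$, forcing $I(X;Z_\gamma)\to 0$, and the analogous argument handles $I(Y;Z_\gamma)$.

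With these two structural facts in hand, the theorem is immediate. Because $I(Y;Z_\gamma)$ is strictly decreasing in $\gamma$, the supremum in the definition of $g_\eps(X;Y)$ is attained at the smallest $\gamma$ satisfying $I(X;Z_\gamma)\leq\eps$, and by continuity and strict monotonicity of $\gamma\mapsto I(X;Z_\gamma)$ this smallest value is the unique $\gamma(\eps)$ with $I(X;Z_{\gamma(\eps)})=\eps$ for $\eps\in(0,I(X;Y))$. Hence $g_\eps(X;Y)=I(Y;Z_{\gamma(\eps)})$, which is non-negative, and strictly increasing in $\eps$ because $\gamma(\eps)$ is strictly decreasing in $\eps$ while $I(Y;Z_\gamma)$ is strictly decreasing in $\gamma$. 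The limit $\lim_{\eps\to 0}g_\eps(X;Y)=0$ follows because $\gamma(\eps)\to\infty$ as $\eps\to 0$ and $I(Y;Z_\gamma)\to 0$ as $\gamma\to\infty$. Finally, $\gamma=0$ is feasible at $\eps=I(X;Y)$, giving $g_{I(X;Y)}(X;Y)\geq I(Y;Y)=+\infty$.

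The main obstacle I expect is the rigorous derivation of $I(X;Z_\gamma)\to 0$ (and likewise $I(Y;Z_\gamma)\to 0$) as $\gamma\to\infty$ under only the integrability assumptions (a)--(d); the Gaussian max-entropy plus EPI argument above must be combined with the polynomial tail bound on $f_Y$ in (a) and the bounded function $C_2(x)$ controlling $f_{Y\mid X}$, in order to exchange limit and expectation over $X$. A secondary technical point is strict (rather than merely weak) monotonicity of $\gamma\mapsto I(X;Z_\gamma)$, which I would settle by invoking the I-MMSE identity together with the hypothesis that $X$ and $Y$ are not independent, so that $\mmse(X\mid Z_\gamma)$ is strictly less than $\var(X)$ at every $\gamma>0$ and the derivative of the mutual information in $\mathrm{snr}$ is strictly positive throughout.
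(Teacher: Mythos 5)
Your top-level argument matches the paper's exactly: reduce to the scalar functions $\gamma\mapsto I(X;Z_\gamma)$ and $\gamma\mapsto I(Y;Z_\gamma)$, show both are continuous and strictly decreasing with the appropriate boundary behavior, conclude that the supremum is attained at the unique $\gamma_\eps$ with $I(X;Z_{\gamma_\eps})=\eps$, and read off the four claims. Where you diverge is in how the supporting lemmas are proved, and here the comparison is instructive. For strict monotonicity you invoke the I-MMSE identity, whereas the paper proves a small lemma (its Lemma~\ref{Thm:DataProcessingInq}) characterizing when the data-processing inequality $I(X;Y+V)\leq I(X;Y)$ is tight, and applies it twice; both work, but the paper's route is self-contained and avoids appealing to the I-MMSE machinery with its own regularity requirements. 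For $I(X;Z_\gamma)\to 0$ as $\gamma\to\infty$, which you flag as your ``main obstacle'' requiring a conditional EPI plus Jensen plus tail control, the paper does something much simpler: it bounds $I(Y;Z_\gamma)\leq\frac{1}{2}\log(1+\var(Y)/\gamma^2)$ (the Gaussian channel with power $\var(Y)$ maximizes $I(Y;Z_\gamma)$) and then squeezes $0\leq I(X;Z_\gamma)\leq I(Y;Z_\gamma)$ by the data-processing inequality, so no EPI or dominated-convergence argument is needed here at all. Finally, you dismiss continuity of $\gamma\mapsto I(X;Z_\gamma)$ as ``standard,'' but this is the step the paper treats most carefully: continuity at $\gamma>0$ uses de Bruijn's identity plus a dominated-convergence argument to pass the limit through the expectation over $X$, and continuity at $\gamma=0$ requires a nontrivial result of Linder and Zamir on $h(Y+\gamma N|X=x)\to h(Y|X=x)$, so this deserves explicit treatment rather than a gesture.
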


\begin{proof}
See Apendix \ref{Appendix:ProofsContinuous}.
\end{proof}

As opposed to the discrete case, in the continuous case $g_\eps(X;Y)$ is no longer bounded. In the following section we show that $\eps\mapsto g_\eps(X;Y)$ can be convex, in contrast to the discrete case where it is always concave.

We can also define $\hat{g}_{\eps, M}(X;Y)$ and $\hat{g}_\eps(X;Y)$ for continuous $X$ and $Y$, similar to \eqref{eq:ContinuousCaseDefgM} and \eqref{eq:ContinuousCaseDefg}, but where the privacy constraints are replaced by $\rho_m^2(X;Z_{\gamma}^M)\leq \eps$ and $\rho_m^2(X;Z_{\gamma})\leq \eps$, respectively. It is clear to see from Theorem~\ref{Thm:gContinuous} that $\hat{g}_{0}(X;Y)=g_{0}(X;Y)=0$ and $\hat{g}_{\rho^2(X;Y)}(X;Y)=\infty$. However, although we showed that $g_{\eps}(X;Y)$ is indeed the asymptotic approximation of $g_{\eps, M}(X;Y)$ for $M$ large enough, it is not clear that the same statement holds for $\hat{g}_{\eps}(X;Y)$ and $\hat{g}_{\eps, M}(X;Y)$.


\subsection{Gaussian Information}

The rate-privacy function for Gaussian $Y$ has an interesting interpretation from an estimation theoretic point of view. Given the private and observable data $(X,Y)$, suppose an agent is required to \emph{estimate} $Y$ based on the output of the privacy filter. We wish to know the effect of imposing a privacy constraint on the estimation performance.

The following lemma shows that $g_{\eps}(X;Y)$ bounds the best performance of the predictability of $Y$ given the output of the privacy filter. The proof provided for this lemma does not use the Gaussianity of the noise process, so it holds for any noise process.
\begin{lemma}\label{Lemma_MMSE_Gaussian}
For any given private data $X$ and Gaussian observable data $Y$, we have for any $\eps\geq 0$
$$\inf_{\begin{smallmatrix}\gamma\geq0,\\I(X;Z_\gamma)\leq\eps\end{smallmatrix}}\mmse(Y|Z_\gamma)\geq \var(Y)2^{-2g_{\eps}(X;Y)}.$$
\end{lemma}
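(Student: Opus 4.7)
My plan is to prove the inequality by relating $\mmse(Y|Z_\gamma)$ to $I(Y;Z_\gamma)$ via the Gaussian maximum-entropy bound, and then invoking the definition of $g_\eps(X;Y)$ to replace $I(Y;Z_\gamma)$ by its upper bound $g_\eps(X;Y)$. Note that Gaussianity of the additive noise $N$ is never used; only the Gaussianity of $Y$ matters, because we need a matching lower bound for $h(Y)$.

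\textbf{Step 1 (bound $h(Y|Z_\gamma)$ by $\mmse$).} For every realization $Z_\gamma=z$, the conditional distribution of $Y$ given $Z_\gamma=z$ has variance $\var(Y|Z_\gamma=z)$, so by the Gaussian maximum-entropy property, $h(Y|Z_\gamma=z)\leq \tfrac{1}{2}\log\bigl(2\pi e\,\var(Y|Z_\gamma=z)\bigr)$. Averaging over $Z_\gamma$ and using Jensen's inequality together with concavity of the logarithm,
\begin{equation*}
h(Y|Z_\gamma)\;\leq\;\tfrac{1}{2}\log\!\Bigl(2\pi e\,\E[\var(Y|Z_\gamma)]\Bigr)\;=\;\tfrac{1}{2}\log\bigl(2\pi e\,\mmse(Y|Z_\gamma)\bigr).
\end{equation*}

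\textbf{Step 2 (use Gaussianity of $Y$ to fix $h(Y)$).} Since $Y$ is Gaussian, $h(Y)=\tfrac{1}{2}\log(2\pi e\,\var(Y))$, and therefore
\begin{equation*}
I(Y;Z_\gamma)\;=\;h(Y)-h(Y|Z_\gamma)\;\geq\;\tfrac{1}{2}\log\!\frac{\var(Y)}{\mmse(Y|Z_\gamma)},
\end{equation*}
which rearranges to $\mmse(Y|Z_\gamma)\geq \var(Y)\,2^{-2I(Y;Z_\gamma)}$ for every $\gamma\geq 0$.

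\textbf{Step 3 (invoke the definition of $g_\eps$ and take the infimum).} For every admissible $\gamma$, i.e.\ $\gamma\geq 0$ with $I(X;Z_\gamma)\leq \eps$, the definition~\eqref{eq:ContinuousCaseDefg} gives $I(Y;Z_\gamma)\leq g_\eps(X;Y)$. Since $t\mapsto 2^{-2t}$ is decreasing, we obtain $\mmse(Y|Z_\gamma)\geq \var(Y)\,2^{-2g_\eps(X;Y)}$, and taking the infimum over admissible $\gamma$ finishes the proof. There is no real obstacle here; the only subtlety is the Jensen step in Step~1, which is standard but must be explicitly justified, and the fact that Gaussianity of $Y$ (not of $Y+\gamma N$) is what matches $h(Y)$ to $\tfrac{1}{2}\log(2\pi e\,\var(Y))$, which is why the argument is noise-process-agnostic.
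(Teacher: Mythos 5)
Your proposal is correct and follows essentially the same route as the paper: both reduce the lemma to the inequality $\mmse(Y|Z_\gamma)\geq \var(Y)\,2^{-2I(Y;Z_\gamma)}$ and then apply the definition of $g_\eps(X;Y)$ to each admissible $\gamma$. The only difference is that the paper cites this inequality as the Gaussian rate-distortion converse applied to $\hat{Y}=\E[Y|Z_\gamma]$ together with $I(Y;\hat{Y})\leq I(Y;Z_\gamma)$, whereas you unfold its standard proof via the maximum-entropy bound on $h(Y|Z_\gamma=z)$ and Jensen's inequality.
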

\begin{proof}
It is a well-known fact from rate-distortion theory that  for a Gaussian $Y$ and its reconstruction $\hat{Y}$
$$I(Y; \hat{Y})\geq \frac{1}{2}\log\frac{\mathsf{var}(Y)}{\mathbb{E}[(Y-\hat{Y})^2]},$$ and hence by setting $\hat{Y}=\mathbb{E}[Y|Z_{\gamma}]$, where $Z_{\gamma}$ is an output of a privacy filter, and noting that $I(Y; \hat{Y})\leq I(Y; Z_{\gamma})$, we obtain
\begin{equation}\label{mutualinfo}
    \mmse(Y|Z_{\gamma}) \geq \var(Y)2^{-2I(Y;Z_{\gamma})},
\end{equation}
from which the result follows immediately.
\end{proof}

According to Lemma~\ref{Lemma_MMSE_Gaussian}, the quantity $\lambda_{\eps}(X):=2^{-2g_{\eps}(X;Y)}$ is a parameter that bounds the difficulty of estimating Gaussian $Y$ when observing an additive perturbation $Z$ with privacy constraint $I(X;Z)\leq \eps$. Note that $0< \lambda_{\eps}(X)\leq 1$, and therefore, provided that the privacy threshold is not trivial (i.e, $\eps<I(X;Y)$), the mean squared error of estimating $Y$ given the privacy filter output is bounded away from zero, however the bound decays exponentially at rate of $g_{\eps}(X;Y)$.

To finish this section, assume that $X$ and $Y$ are jointly Gaussian with correlation coefficient $\rho$. The value of $g_{\eps}(X;Y)$ can be easily obtained in closed form as demonstrated in the following theorem.

\begin{theorem}\label{Thm:G_gaussian}
Let $(X,Y)$ be jointly Gaussian random variables with correlation coefficient $\rho$. For any $\eps\in[0,I(X;Y))$ we have $$g_{\eps}(X;Y) = \frac{1}{2}\log\left(\frac{\rho^2}{2^{-2\eps}+\rho^2-1}\right).$$
\end{theorem}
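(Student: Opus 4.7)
The plan is to reduce to a one-parameter optimization in $\gamma$ that can be solved explicitly using Gaussianity. By the translation/scaling invariance of mutual information, assume without loss of generality that $X$ and $Y$ are standardized, so that $Y = \rho X + \sqrt{1-\rho^2}\,W$ with $W \sim N(0,1)$ independent of $X \sim N(0,1)$. Then $Z_\gamma = Y + \gamma N = \rho X + \sqrt{1-\rho^2}\,W + \gamma N$ is jointly Gaussian with both $X$ and $Y$, which makes every mutual information in \eqref{eq:ContinuousCaseDefg} computable in closed form.

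The first step is to write the two mutual informations as functions of $\gamma$. Using $Z_\gamma \sim N(0, 1+\gamma^2)$ together with $Z_\gamma \mid Y \sim N(Y, \gamma^2)$ and $Z_\gamma \mid X \sim N(\rho X,\, 1-\rho^2+\gamma^2)$, a direct differential-entropy computation gives
\begin{equation*}
I(Y;Z_\gamma) = \tfrac{1}{2}\log\!\left(\tfrac{1+\gamma^2}{\gamma^2}\right), \qquad
I(X;Z_\gamma) = \tfrac{1}{2}\log\!\left(\tfrac{1+\gamma^2}{1-\rho^2+\gamma^2}\right).
\end{equation*}
Both expressions are strictly decreasing in $\gamma \geq 0$ (trivial to verify by differentiation, or by noting that $(1+\gamma^2)/\gamma^2$ decreases and $(1+\gamma^2)/(1-\rho^2+\gamma^2) = 1 + \rho^2/(1-\rho^2+\gamma^2)$ decreases).

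From this monotonicity, the constraint $I(X;Z_\gamma) \le \eps$ is equivalent to $\gamma \ge \gamma_\star(\eps)$ for a unique threshold $\gamma_\star(\eps) > 0$, and since $\gamma \mapsto I(Y;Z_\gamma)$ is strictly decreasing, the supremum in \eqref{eq:ContinuousCaseDefg} is attained by saturating the privacy constraint, i.e.\ at $\gamma = \gamma_\star(\eps)$. Solving $I(X;Z_\gamma) = \eps$ yields
\begin{equation*}
\gamma_\star(\eps)^2 = \frac{1 - 2^{2\eps}(1-\rho^2)}{2^{2\eps} - 1},
\end{equation*}
which is strictly positive precisely when $\eps < -\tfrac12\log(1-\rho^2) = I(X;Y)$, consistent with the hypothesis. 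Substituting into the formula for $I(Y;Z_\gamma)$, one gets $1+\gamma_\star^2 = 2^{2\eps}\rho^2/(2^{2\eps}-1)$, and a short algebraic simplification produces the claimed closed form $\tfrac12 \log\!\left(\rho^2 / (2^{-2\eps} + \rho^2 - 1)\right)$.

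There is no real obstacle here: because we are optimizing over the single scalar $\gamma$, and both mutual informations are explicit and monotone in $\gamma$, the problem reduces to solving one equation in one unknown. The only points requiring care are (i) justifying the standardization reduction (mutual information is invariant under bijective affine maps applied coordinatewise to $X$ and $Y$, and the noise $\gamma N$ scales accordingly), and (ii) checking that the solution $\gamma_\star(\eps)$ lies in $(0,\infty)$ throughout $\eps \in (0, I(X;Y))$, which follows from the standing assumption $\eps < I(X;Y)$. The boundary behavior $g_\eps(X;Y) \to \infty$ as $\eps \uparrow I(X;Y)$ and $g_0(X;Y) = 0$ matches Theorem~\ref{Thm:gContinuous}, providing a consistency check.
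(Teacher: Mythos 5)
Your proof is correct and follows essentially the same route as the paper: express $Z_\gamma$ as a linear function of $X$ plus independent Gaussian noise, compute $I(X;Z_\gamma)$ and $I(Y;Z_\gamma)$ in closed form, use the strict monotonicity of $\gamma\mapsto I(Y;Z_\gamma)$ (Lemma~\ref{Lemma:MonotonicContinuityI}) to conclude that the supremum is attained by saturating the privacy constraint at the unique $\gamma^*$ with $I(X;Z_{\gamma^*})=\eps$, and substitute. The only cosmetic difference is that you standardize $X$ and $Y$ at the outset, while the paper carries $\var(Y)$ through the computation (it cancels in the end), and you verify monotonicity directly from the explicit formula rather than citing the general lemma.
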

\begin{proof}
 One can always write $Y=aX+N_1$ where $a^2=\rho^2\frac{\var(Y)}{\var(X)}$ and $N_1$ is a Gaussian random variable with mean $0$ and variance $\sigma^2=(1-\rho^2)\var(Y)$ which is independent
of $(X,Y)$. On the other hand, we have $Z_{\gamma}=Y+\gamma N$ where $N$ is the standard Gaussian random variable independent of $(X,Y)$ and hence $Z_{\gamma}=aX+N_1+\gamma N$. In order for this additive channel to be a privacy filter, it must satisfy
$$I(X;Z_{\gamma})\leq \eps,$$ which implies $$\frac{1}{2}\log\left(\frac{\var(Y)+\gamma^2}{\sigma^2+\gamma^2}\right)\leq \eps,$$ and hence
$$\gamma^2\geq \frac{2^{-2\eps}+\rho^2-1}{1-2^{-2\eps}}\var(Y)=:\gamma^*.$$
Since $\gamma\mapsto I(Y; Z_{\gamma})$ is strictly decreasing (cf., Appendix~\ref{Appendix:ProofsContinuous}), we obtain
\begin{eqnarray}
  g_{\eps}(X;Y)&=&I(Y; Z_{\gamma^*})= \frac{1}{2}\log\left(1+\frac{\var(Y)}{\gamma^2}\right)\nonumber\\
   &=&  \frac{1}{2}\log\left(1+\frac{1-2^{-2\eps}}{2^{-2\eps}+\rho^2-1}\right)\label{Proof_Gaussian_gEpsilon}.\qedhere
\end{eqnarray}
\end{proof}
According to \eqref{Proof_Gaussian_gEpsilon}, we conclude that the optimal privacy filter for jointly Gaussian $(X,Y)$ is an additive Gaussian channel with signal to noise ratio $\dfrac{1-2^{-2\eps}}{2^{-2\eps}+\rho^2-1}$, which shows that if perfect privacy is required, then the displayed data is independent of the observable data $Y$, i.e., $g_0(X;Y)=0$.

\begin{remark}
We could assume that the privacy filter adds non-Gaussian noise to the observable data and define the rate-privacy function accordingly. To this end, we define
$$g_{\eps}^f(X;Y):=\sup_{\gamma\geq 0, \atop I(X;Z^f_{\gamma})}I(Y; Z_{\gamma}^f),$$ where $Z_{\gamma}^f=Y+\gamma M_f$ and $M_f$ is a noise process that has stable distribution with density $f$ and is independent of $(X,Y)$. In this case, we can use a technique similar to Oohama \cite{Oohama} to lower bound $g_{\eps}^f(X;Y)$ for jointly Gaussian $(X,Y)$. Since $X$ and $Y$ are jointly Gaussian, we can write $X=aY+bN$ where $a^2=\rho^2\frac{\var(X)}{\var(Y)}$, $b=\sqrt{(1-\rho^2)\var{X}}$, and $N$ is a standard Gaussian random variable that is independent of $Y$.
We can apply the conditional entropy power inequality (cf., \cite[Page 22]{networkinfotheory}) for a random variable $Z$ that is independent of $N$, to obtain
$$2^{2h(X|Z)}\geq 2^{2h(aY|Z)}+2^{2h(N)}=a^22^{2h(Y|Z)}+2\pi e(1-\rho^2)\var(X),$$
and hence
$$2^{-2I(X;Z)}2^{2h(X)}\geq a^22^{2h(Y)}2^{-2I(Y;Z)}+2\pi e(1-\rho^2)\var(X).$$
Assuming $Z=Z^f_{\gamma}$ and taking infimum from both sides of above inequality over $\gamma$ such that $I(X;Z_{\gamma}^f)\leq \eps$,  we obtain
$$g^f_{\eps}(X;Y)\geq \frac{1}{2}\log\left(\frac{\rho^2}{2^{-2\eps}+\rho^2-1}\right)=g_{\eps}(X;Y),$$
which shows that for Gaussian $(X,Y)$, Gaussian noise is the worst stable additive noise in the sense of privacy-constrained information extraction.
\end{remark}

We can also calculate $\hat{g}_{\eps}(X;Y)$ for jointly
Gaussian $(X,Y)$.
\begin{theorem}\label{Thm:Ghat_Gaussian}
Let $(X,Y)$ be jointly Gaussian random variables with correlation coefficient $\rho$. For any $\eps\in[0,\rho^2)$ we have that $$\hat{g}_{\eps}(X;Y) = \frac{1}{2}\log\left(\frac{\rho^2}{\rho^2-\eps}\right).$$
\end{theorem}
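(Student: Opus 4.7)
The proof follows the same template as Theorem~\ref{Thm:G_gaussian}: decompose $Y$ in terms of $X$, write $Z_\gamma=Y+\gamma N$ as a jointly Gaussian function of $(X,N)$, translate the maximal correlation constraint into an explicit constraint on $\gamma$, and exploit monotonicity of $\gamma\mapsto I(Y;Z_\gamma)$ to find the optimum. The key replacement relative to the mutual information case is that the privacy constraint $\rho_m^2(X;Z_\gamma)\leq\eps$ needs to be converted into an inequality of the form $\gamma^2\geq\gamma^{*2}$.

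The plan is as follows. Since $N\sim N(0,1)$ is independent of $(X,Y)$ and $(X,Y)$ is jointly Gaussian, the triple $(X,Y,N)$ is jointly Gaussian, and hence so is the pair $(X,Z_\gamma)$. The main input I need is the classical fact (Lancaster) that for jointly Gaussian $(U,V)$, the Hirschfeld--Gebelein--R\'enyi maximal correlation reduces to the absolute value of the usual correlation: $\rho_m(U;V)=|\rho(U;V)|$. This lets me compute the privacy functional in closed form:
\begin{equation*}
\rho_m^2(X;Z_\gamma)\;=\;\rho^2(X;Z_\gamma)\;=\;\frac{[\mathrm{Cov}(X,Y)]^2}{\var(X)\,[\var(Y)+\gamma^2]}\;=\;\frac{\rho^2\,\var(Y)}{\var(Y)+\gamma^2}.
\end{equation*}

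Imposing $\rho_m^2(X;Z_\gamma)\leq\eps$ is therefore equivalent to
\begin{equation*}
\gamma^2\;\geq\;\var(Y)\,\frac{\rho^2-\eps}{\eps}\;=:\;\gamma^{*2},
\end{equation*}
which is a meaningful (nontrivial) constraint exactly when $\eps<\rho^2$. As noted in the proof of Theorem~\ref{Thm:G_gaussian} (and verified in Appendix~\ref{Appendix:ProofsContinuous}), the map $\gamma\mapsto I(Y;Z_\gamma)=\tfrac{1}{2}\log(1+\var(Y)/\gamma^2)$ is strictly decreasing on $(0,\infty)$, so the supremum in the definition of $\hat{g}_\eps(X;Y)$ is achieved at $\gamma=\gamma^{*}$, yielding
\begin{equation*}
\hat{g}_\eps(X;Y)\;=\;\tfrac{1}{2}\log\!\left(1+\frac{\var(Y)}{\gamma^{*2}}\right)\;=\;\tfrac{1}{2}\log\!\left(1+\frac{\eps}{\rho^2-\eps}\right)\;=\;\tfrac{1}{2}\log\!\left(\frac{\rho^2}{\rho^2-\eps}\right).
\end{equation*}

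The only non-routine step is invoking $\rho_m=|\rho|$ for jointly Gaussian pairs; I would cite Lancaster's theorem (or equivalently verify it via the variational characterization in \eqref{Maximal_correlation_Equivalent} using Hermite polynomials, which form an orthonormal eigenbasis of the conditional expectation operator with leading nontrivial eigenvalue $\rho$). Once this is in hand, the remainder is the same monotonicity argument as in the proof of Theorem~\ref{Thm:G_gaussian}, and the closed form for $\hat{g}_\eps(X;Y)$ drops out directly. The fact that $\hat{g}_0(X;Y)=0$ and $\hat{g}_{\rho^2}(X;Y)=\infty$ (consistent with the general discussion after Theorem~\ref{Thm:gContinuous}) is automatic from the expression.
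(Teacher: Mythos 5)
Your proof is correct and follows essentially the same route as the paper's: both reduce $\rho_m^2(X;Z_\gamma)$ to $\rho^2(X;Z_\gamma)=\rho^2\var(Y)/(\var(Y)+\gamma^2)$ via the Gaussian maximal-correlation identity (the paper cites R\'enyi where you cite Lancaster, but it is the same fact), then solve the resulting constraint for $\gamma$ and invoke strict monotonicity of $\gamma\mapsto I(Y;Z_\gamma)$. The only cosmetic difference is that the paper obtains $\rho^2(X;Z_\gamma)$ by first computing $\rho^2(Y;Z_\gamma)$ and multiplying by $\rho^2$, whereas you compute it directly from the covariance; these are equivalent.
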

\begin{proof}
Since for the correlation coefficient between $Y$ and $Z_{\gamma}$ we have for any $\gamma\geq 0$, $$\rho^2(Y;Z_{\gamma})=\frac{\var(Y)}{\var(Y)+\gamma^2},$$ we can conclude that $$\rho^2(X;Z_{\gamma})=\frac{\rho^2\var(Y)}{\var(Y)+\gamma^2}.$$ Since $\rho_m^2(X;Z)=\rho^2(X;Z)$ (see e.g., \cite{Renyi-dependence-measure}), the privacy constraint $\rho_m^2(X;Z)\leq \eps$ implies that
$$\frac{\rho^2\var(Y)}{\var(Y)+\gamma^2}\leq \eps,$$ and hence
$$\gamma^2\geq \frac{(\rho^2-\eps)\var(Y)}{\eps}=:\hat{\gamma}^2_{\eps}.$$ By monotonicity of the map $\gamma\mapsto I(Y;Z_{\gamma})$, we have
$$\hat{g}_{\eps}(X;Y)=I(Y; Z_{\hat{\gamma}_{\eps}})=\frac{1}{2}\log\left(1+\frac{\var(Y)}{\hat{\gamma}_{\eps}^2}\right)=\frac{1}{2}\log\left(\frac{\rho^2}{{\rho^2-\eps}}\right).\qedhere$$
\end{proof}
Theorems \ref{Thm:G_gaussian} and \ref{Thm:Ghat_Gaussian} show that unlike to the discrete case (cf. Lemmas \ref{Lemma_Concavity} and \ref{Lemma_concavity_gHat}), $\eps\mapsto g_\eps(X;Y)$ and $\eps\mapsto \hat{g}_\eps(X;Y)$ are convex.


\section{Conclusions}

In this paper, we studied the problem of determining the maximal amount of information that one can extract by observing a random variable $Y$, which is correlated with another random variable $X$ that represents sensitive or private data, while ensuring that the extracted data $Z$ meets a privacy constraint with respect to $X$. Specifically, given two correlated discrete random variables $X$ and $Y$, we introduced the rate-privacy function as the maximization of $I(Y;Z)$ over all stochastic ''privacy filters'' $P_{Z|Y}$ such that $pm(X;Z) \leq \epsilon$, where $pm(\cdot;\cdot)$ is a privacy measure and $\epsilon\geq0$ is a given privacy threshold. We considered two possible privacy measure functions, $pm(X;Z)=I(X;Z)$ and $pm(X;Z)=\rho_m^2(X;Z)$ where $\rho_m$ denotes maximal correlation, resulting in the rate-privacy functions $g_{\epsilon}(X;Y)$ and $\hat{g}_{\epsilon}(X;Y)$, respectively.  We analyzed these two functions, noting that each function lies between easily evaluated upper and lower bounds, and derived their monotonicity and concavity properties. We next provided an information-theoretic interpretation for $g_{\epsilon}(X;Y)$ and an estimation-theoretic characterization for $\hat{g}_{\epsilon}(X;Y)$. In particular, we demonstrated that the dual function of $g_{\epsilon}(X;Y)$ is a corner point of an outer bound on the achievable region of the dependence dilution coding problem. We also showed that $\hat{g}_{\epsilon}(X;Y)$ constitutes the largest amount of information that can be extracted from $Y$ such that no meaningful MMSE estimation of any function of $X$ can be realized by just observing the extracted information $Z$. We then examined conditions on $P_{XY}$ under which the lower bound on $g_{\epsilon}(X;Y)$ is tight, hence determining the exact value of $g_{\epsilon}(X;Y)$. We also showed that for any given $Y$, if the observation channel $P_{Y|X}$ is an erasure channel, then $g_{\epsilon}(X;Y)$ attains its upper bound. Finally, we extended the notions of the rate-privacy functions $g_{\epsilon}(X;Y)$ and $\hat{g}_{\epsilon}(X;Y)$ to the continuous case where the observation channel consists of an additive Gaussian noise channel followed by uniform scalar quantization.

%
%
%
%
%
%
%
%

\makeatletter
\renewcommand\@biblabel[1]{#1. }
\makeatother
\bibliography{bibliography}
\bibliographystyle{plain}

\appendices
\section{Proof of Lemma~\ref{Lemma_slope_lowerbound}}\label{AppendixI_Theorem_gHat_lowerbound}

Given a joint distribution $P_{XY}$ defined over $\X\times \Y$ where $\X=\{1,2,\dots, m\}$ and $\Y=\{1,2,\dots, n\}$ with $n\leq m$, we consider a privacy filter specified by the following distribution for $\delta>0$ and $\Z=\{k, e\}$
\begin{eqnarray}
  P_{Z|Y}(k|y) &=& \delta 1_{\{y=k\}} \label{filter1}\label{filter_derivative}\\
  P_{Z|Y}(\text{e}|y) &=& 1-\delta 1_{\{y=k\}}\label{filter_derivative2}
\end{eqnarray}
where $1_{\{\cdot\}}$ denotes the indicator function. The system of $X\markov Y\markov Z$ in this case is depicted in Fig.~\ref{fig:Derivate_gEpsilon} for the case of $k=1$.
\def\bottom#1#2{\hbox{\vbox to #1{\vfill\hbox{#2}}}}
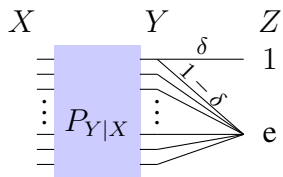
\begin{figure}[!h]
\centering
\begin{tikzpicture}

        \node (x) [circle] at (-1, 1.2) {$X$};
        \node (x) [circle] at (0.8,1.2) {$Y$};
        \node (x) [circle] at (2.3,1.2) {$Z$};
         \node [fill=blue!20, align=center, text height=1.5cm]{$P_{Y|X}$\\};
         \draw (-0.8,0.7) -- (-0.57, 0.7);
         \draw (-0.8,0.5) -- (-0.57, 0.5);
         \draw (-0.8,0.3) -- (-0.57, 0.3);
        \node (x4) [circle] at (-0.7,0.1) {$\vdots$};
        \draw (-0.8,-0.3) -- (-0.57, -0.3);
        \draw (-0.8,-0.5) -- (-0.57, -0.5);
        \draw (-0.8,-0.7) -- (-0.57, -0.7);

         \draw (0.57,0.7) -- (0.8, 0.7);
         \draw (0.57,0.5) -- (0.8, 0.5);
         \draw (0.57,0.3) -- (0.8, 0.3);
        \node (x4) [circle] at (0.8,0.1) {$\vdots$};
        \draw (0.57,-0.3) -- (0.8, -0.3);
        \draw (0.57,-0.5) -- (0.8, -0.5);
        \draw (0.57,-0.7) -- (0.8, -0.7);

        \node (z1) [circle] at (2.3,0.7) {$1$};
        \node (z2) [circle] at (2.3,-0.3) {e};
        \draw (0.8, 0.7) -- (1.95, 0.7);
        \draw (0.8, 0.7) -- (1.95, -0.3);
        \node [circle] at (1.4,0.88) {\footnotesize{$\delta$}};
        \node [circle, rotate=-45] at (1.4,0.35) {\footnotesize{$1-\delta$}};
         \draw (0.8, 0.5) -- (1.95,-0.3);
         \draw (0.8, 0.3) -- (1.95,-0.3);

        \draw (0.8, -0.3) -- (1.95,-0.3);
        \draw (0.8, -0.5) -- (1.95,-0.3);
        \draw (0.8, -0.7) -- (1.95,-0.3);
\end{tikzpicture}
\caption{\small{The privacy filter associated with \eqref{filter_derivative} and \eqref{filter_derivative2} with $k=1$. We have $P_{Z|Y}(\cdot|1)=\sBer(\delta)$ and $P_{Z|Y}(\cdot|y)=\sBer(0)$ for $y\in\{2, 3, \dots, n\}$.}} \label{fig:Derivate_gEpsilon}
\end{figure}

We clearly have $P_Z(k)=\delta P_Y(k)$ and $P_Z(\text{e})=1-\delta P_Y(k)$, and hence
$$P_{X|Z}(x|k)=\frac{P_{XZ}(x,k)}{\delta P_Y(k)}=\frac{P_{XYZ}(x,k, k)}{\delta P_Y(k)}=\frac{\delta P_{XY}(x,k)}{\delta P_Y(k)}=P_{X|Y}(x|k),$$
and also,
\begin{eqnarray*}
  P_{X|Z}(x|\text{e})&=&\frac{P_{XZ}(x,\text{e})}{1-\delta P_Y(k)}= \frac{\sum_{y}P_{XYZ}(x,y, \text{e})}{1-\delta P_Y(k)}\\
   &=& \frac{\sum_{y\neq k}P_{XYZ}(x,y)+(1-\delta)P_{XY}(x, k)}{1-\delta P_Y(k)}=\frac{P_X(x)-\delta P_{XY}(x, k)}{1-\delta P_Y(k)}.
\end{eqnarray*}
It, therefore, follows that for $k\in\{1,2,\dots, n\}$
$$H(X|Z=k)=H(X|Y=k),$$ and
$$H(X|Z=\text{e})=H\left(\frac{P_X(1)-\delta P_{XY}(1, k)}{1-\delta P_Y(k)}, \dots, \frac{P_X(m)-\delta P_{XY}(m, k)}{1-\delta P_Y(k)}\right)=:\mathcal{h}_X(\delta).$$
We then write
$$I(X;Z)=H(X)-H(X|Z)=H(X)-\delta P_{Y}(k)H(X|Y=k)-(1-\delta P_{Y}(k))\mathcal{h}_X(\delta),$$ and hence,
$$\frac{\text{d}}{\text{d}\delta}I(X;Z)=-P_{Y}(k)H(X|Y=k)+P_{Y}(k)\mathcal{h}_X(\delta)-(1-\delta P_{Y}(k))\mathcal{h}'_X(\delta),$$
where
$$\mathcal{h}'_X(\delta)=\frac{\text{d}}{\text{d}\delta}\mathcal{h}_X(\delta)=-\sum_{x=1}^m\frac{P_X(x)P_Y(k)-P_{XY}(x, k)}{[1-\delta P_{Y}(k)]^2}\log\left(\frac{P_X(x)-\delta P_{XY}(x,y)}{1-\delta P_{Y}(k)}\right).$$

Using the first-order approximation of mutual information for $\delta=0$, we can write
\begin{eqnarray}
    I(X;Z)&=&\frac{\text{d}}{\text{d}\delta}I(X;Z)|_{\delta=0}\delta+o(\delta)\nonumber\\
&=& \delta\left[\sum_{x=1}^mP_{XY}(x,k)\log\left(\frac{P_{XY}(x,k)}{P_X(x)P_Y(k)}\right)\right]+o(\delta)\nonumber\\
&=&\delta P_Y(k)D(P_{X|Y}(\cdot|k)||P_X(\cdot))+o(\delta).\label{IXZ_firstOrder}
\end{eqnarray}
Similarly, we can write
\begin{eqnarray*}
I(Y;Z)&=& h(Z)-\sum_{y=1}^nP_Y(y)h(Z|Y=y)=h(Z)-P_Y(k)h(\delta)=h(\delta P_Y(k))-P_Y(k)h(\delta)\\
&=&-\delta P_Y(k)\log(P_Y(k))-\Psi(1-\delta P_Y(k))+P_Y(k)\Psi(1-\delta),
\end{eqnarray*}
where $\Psi(x):=x\log x$ which yields
$$\frac{\text{d}}{\text{d}\delta}I(Y;Z)=-\Psi(P_Y(k))+P_Y(k)\log\left(\frac{1-\delta P_Y(k)}{1-\delta}\right).$$
From the above, we obtain
\begin{eqnarray}
    I(Y;Z)&=&\frac{\text{d}}{\text{d}\delta}I(Y;Z)|_{\delta=0}\delta+o(\delta)\nonumber\\
&=& -\delta \Psi(P_Y(k))+o(\delta).\label{IYZ_firstOrder}
\end{eqnarray}
Clearly from \eqref{IXZ_firstOrder}, in order for the filter $P_{Z|Y}$ specified in \eqref{filter_derivative} and \eqref{filter_derivative2} to belong to $\D_{\eps}(P_{XY})$, we must have
$$\frac{\eps}{\delta}=P_Y(k)D(P_{X|Y}(\cdot|k)||P_X(\cdot))+\frac{o(\delta)}{\delta},$$ and hence from \eqref{IYZ_firstOrder}, we have
$$I(Y;Z)=\frac{-\Psi(P_Y(k))}{P_Y(k)D(P_{X|Y}(\cdot|k)||P_X(\cdot))}\eps+o(\delta).$$ This immediately implies that
\begin{equation}\label{gprime}
    g'_0(X;Y)=\lim_{\eps\downarrow 0}\frac{g_{\eps}(X;Y)}{\eps}\geq \frac{-\Psi(P_Y(k))}{P_Y(k)D(P_{X|Y}(\cdot|k)||P_X(\cdot))}=\frac{-\log(P_Y(k))}{D\left(P_{X|Y}(\cdot|k)||P_X(\cdot)\right)},
\end{equation}
where we have used the assumption $g_0(X,Y)=0$ in the first equality.
\section{Completion of Proof of Theorem~\ref{Theorem_Reverse_Direction_BISO}}\label{Appendix_BISO_Uniform}

To prove that the equality \eqref{ratio3} has only one solution $p=\frac{1}{2}$, we first show the following lemma.

\begin{lemma}\label{Lemma_Unique_sol_BISO}
Let $P$ and $Q$ be two distributions over $\X=\{\pm 1, \pm 2, \dots, \pm k\}$ which satisfy $P(x)=Q(-x)$. Let $R_{\lambda}:=\lambda P+(1-\lambda)Q$ for $\lambda\in (0, 1)$. Then
\begin{equation}\label{Lemma_BISO_Appendix_Equation1}
\frac{D(P||R_{1-\lambda})}{D(P||R_{\lambda})}<\frac{\log(1-\lambda)}{\log(\lambda)},
\end{equation}
for $\lambda\in(0, \frac{1}{2})$ and
\begin{equation}\label{Lemma_BISO_Appendix_Equation2}
\frac{D(P||R_{1-\lambda})}{D(P||R_{\lambda})}>\frac{\log(1-\lambda)}{\log(\lambda)},
\end{equation}
for $\lambda\in(\frac{1}{2}, 1)$.
\end{lemma}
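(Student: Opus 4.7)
Set $F(\lambda):=\log\lambda\cdot D(P\|R_{1-\lambda})-\log(1-\lambda)\cdot D(P\|R_\lambda)$; then inequality \eqref{Lemma_BISO_Appendix_Equation1} is equivalent to $F(\lambda)>0$ on $(0,1/2)$, while \eqref{Lemma_BISO_Appendix_Equation2} follows by antisymmetry. The key structural fact is the antisymmetry $F(1-\lambda)=-F(\lambda)$, a direct consequence of the hypothesis $P(x)=Q(-x)$: this implies $R_\lambda(-x)=R_{1-\lambda}(x)$ and hence, via the substitution $x\mapsto -x$, $D(P\|R_{1-\lambda})=D(Q\|R_\lambda)$. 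In particular $F(1/2)=0$, and the boundary limits $F(0^+)=F(1^-)=0$ follow from the quadratic rate $D(P\|R_\mu)=O((1-\mu)^2)$ as $\mu\to 1$, which dominates the logarithmic singularity of $\log\lambda$.

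Next I would compute $F'(1/2)$ using $\tfrac{d}{d\mu}D(P\|R_\mu)=-\sum_xP(x)(P(x)-Q(x))/R_\mu(x)$, obtaining
\[
F'(1/2) \;=\; 4\,D(P\|R_{1/2}) \;-\; 4\log 2\cdot A, \qquad A:=\sum_x\frac{P(x)(P(x)-Q(x))}{P(x)+Q(x)}.
\]
Parametrizing $r_x:=P(x)/(P(x)+Q(x))$ and $s_x:=P(x)+Q(x)$, and using the symmetries $r_{-x}=1-r_x$ and $s_{-x}=s_x$ to pair the indices $\pm x$, the expression $D(P\|R_{1/2})-A\log 2$ rearranges into $\sum_{x>0}s_x\bigl[-h_b(r_x)+4\log 2\cdot r_x(1-r_x)\bigr]$. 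The bracketed quantity is nonpositive by the elementary inequality $h_b(r)\geq 4\log 2\cdot r(1-r)$ on $[0,1]$ (strict except at $r\in\{0,1/2,1\}$), so $F'(1/2)<0$ whenever $P\neq Q$ with overlapping support. Hence $F>0$ in a left neighborhood of $1/2$; a parallel expansion at $\lambda=0^+$ gives $F'(0^+)=D(P\|Q)>0$, so also $F>0$ just above $0$.

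To propagate positivity throughout $(0,1/2)$, the plan is to establish strict concavity of $F$ on that interval: combined with $F(0^+)=F(1/2)=0$, strict concavity forces $F>0$ strictly in the interior. This in turn would be shown by computing $F''(\lambda)$ using $\tfrac{d^2}{d\mu^2}D(P\|R_\mu)=\sum_xP(x)(P(x)-Q(x))^2/R_\mu(x)^2$, then again pairing $\pm x$ indices via the BISO symmetry to reduce the sign of $F''$ to an elementary pointwise inequality analogous to the one used above. The explicit verification that $F''<0$ on all of $(0,1/2)$ is the main technical obstacle; a viable alternative is to show that $F'$ has exactly one zero in $(0,1/2)$, transitioning from $F'(0^+)=D(P\|Q)>0$ to $F'(1/2)<0$, which yields unimodality of $F$ and hence the desired strict positivity on the open interval.
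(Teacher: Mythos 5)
Your reformulation is correct and the boundary computations check out: with $F(\lambda):=\log\lambda\cdot D(P\|R_{1-\lambda})-\log(1-\lambda)\cdot D(P\|R_\lambda)$ one indeed has $F(0^+)=F(1/2)=0$, $F'(0^+)=D(P\|Q)>0$, and, using the pairing $r_{-x}=1-r_x$, $s_{-x}=s_x$ together with the classical bound $h_b(r)\ge 4\log 2\cdot r(1-r)$, also $F'(1/2)\le 0$ (strictly, under the paper's standing nondegeneracy assumption $P(x)>0$ for all $x$). This gives $F>0$ on small intervals to the right of $0$ and to the left of $1/2$, but not that it stays positive in between --- and the interior is the entire content of the lemma. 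You defer exactly this decisive step: you only assert that strict concavity of $F$ on $(0,1/2)$ (or, alternatively, a single sign change of $F'$) would finish, and you explicitly flag the verification of $F''<0$ as ``the main technical obstacle.'' Until that claim is established the argument is an incomplete sketch; local positivity near two endpoints does not preclude $F$ from dipping through zero in the interior.

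For comparison, the paper sidesteps this difficulty by proving a stronger, per-pair statement. Using the decomposition $D(P\|R_\lambda)=-\log\lambda\sum_{x>0}P(x)\Upsilon(\lambda,\zeta_x)$ with $\zeta_x=P(-x)/P(x)$, your $F(\lambda)$ equals $\sum_{x>0}P(x)\,h(\zeta_x)$, where $h(\zeta)=\log\lambda\log(1-\lambda)\bigl[\Upsilon(\lambda,\zeta)-\Upsilon(1-\lambda,\zeta)\bigr]$ is precisely the paper's $h$, and the paper proves the pointwise claim $h(\zeta)>0$ for every fixed $\lambda\in(0,1/2)$ and every $\zeta>1$ by tracking the sign pattern of $h''(\zeta)$ (convex then concave, combined with $h(1)=h'(1)=0$ and $h(\infty)=\infty$). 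In other words, the paper differentiates in $\zeta$ at fixed $\lambda$, term by term; your plan differentiates in $\lambda$ on the aggregated sum, and concavity in $\lambda$ is a genuinely different, unverified analytic claim. Given how delicate the paper's own second-derivative analysis already is, that concavity should not be taken on faith. One minor inaccuracy: the antisymmetry $F(1-\lambda)=-F(\lambda)$ is immediate from the definition of $F$ and uses nothing about $P$ and $Q$; it is the BISO symmetry $P(x)=Q(-x)$ that powers the $\pm x$ pairings, not the antisymmetry.
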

Note that it is easy to see that the map $\lambda\mapsto D(P||R_{\lambda})$ is convex and strictly decreasing and hence $D(P||R_{\lambda})>D(P||R_{1-\lambda})$ when $\lambda\in(0, \frac{1}{2})$ and $D(P||R_{\lambda})<D(P||R_{1-\lambda})$ when $\lambda\in(\frac{1}{2}, 1)$. Inequality \eqref{Lemma_BISO_Appendix_Equation1} and \eqref{Lemma_BISO_Appendix_Equation2} strengthen these monotonic behavior and show that $D(P||R_{\lambda})>\frac{\log(\lambda)}{\log(1-\lambda)}D(P||R_{1-\lambda})$ and $D(P||R_{\lambda})<\frac{\log(\lambda)}{\log(1-\lambda)}D(P||R_{1-\lambda})$ for $\lambda\in(0, \frac{1}{2})$ and $\lambda\in(\frac{1}{2}, 1)$, respectively.
\begin{proof}
Without loss of generality, we can assume that $P(x)>0$ for all $x\in \X$. Let $\X_+:=\{x\in \X| P(X)> P(-x)\}$, $\X_-:=\{x\in \X| P(X)< P(-x)\}$ and $\X_0:=\{x\in \X| P(X)= P(-x)\}$.
We notice that when $x\in\X_+$, then $-x\in\X_-$, and hence $|\X_+|=|\X_-|=m$ for a $0<m\leq k$. After relabelling if needed, we can therefore assume that $\X_+=\{1, 2,\dots, m\}$ and $\X_-=\{-m,\dots, -2, -1\}$.
We can write
\begin{eqnarray*}
  D(P||R_{\lambda})&=&\sum_{x=-k}^k \log\left(\frac{P(x)}{\lambda P(x)+(1-\lambda)Q(x)}\right)=\sum_{x=-k}^k \log\left(\frac{P(x)}{\lambda P(x)+(1-\lambda)P(-x)}\right)\\
   &\stackrel{(a)}{=}& \sum_{x=1}^m \left[P(x)\log\left(\frac{P(x)}{\lambda P(x)+(1-\lambda)P(-x)}\right)
+P(-x)\log\left(\frac{P(-x)}{\lambda P(-x)+(1-\lambda)P(x)}\right)\right]\\
&\stackrel{(b)}{=}& \sum_{x=1}^m \left[P(x)\log\left(\frac{1}{\lambda+(1-\lambda)\zeta_x}\right)
+P(x)\zeta_x\log\left(\frac{1}{\lambda+\frac{(1-\lambda)}{\zeta_x}}\right)\right]\\
&\stackrel{(c)}{=}&\sum_{x=1}^m P(x)\Upsilon(\lambda,\zeta_x)\,\log\left(\frac{1}{\lambda}\right),
\end{eqnarray*}
where $(a)$ follows from the fact that for  $x\in \X_0$, $\log\left(\frac{P(x)}{R_{\lambda}(x)}\right)=0$ for any $\lambda\in (0, 1)$, and in $(b)$ and $(c)$ we introduced $\zeta_x:=\frac{P(-x)}{P(x)}$ and
$$\Upsilon(\lambda,\zeta):=\frac1{\log\left(\frac{1}{\lambda}\right)}\left(\log\left(\frac1{\lambda +(1-\lambda)\zeta}\right)+\zeta\log\left(\frac1{\lambda+\frac{(1-\lambda)}{\zeta}}\right)\right).$$
Similarly, we can write
\begin{eqnarray*}
  D(P||R_{1-\lambda})&=&\sum_{x=-k}^k \log\left(\frac{P(x)}{(1-\lambda) P(x)+\lambda Q(x)}\right)=\sum_{x=-k}^k \log\left(\frac{P(x)}{(1-\lambda) P(x)+\lambda P(-x)}\right)\\
   &=& \sum_{x=1}^m \left[P(x)\log\left(\frac{P(x)}{ (1-\lambda)P(x)+\lambda P(-x)}\right)
+P(-x)\log\left(\frac{P(-x)}{(1-\lambda) P(-x)+\lambda P(x)}\right)\right]\\
&=& \sum_{x=1}^m \left[P(x)\log\left(\frac{1}{1-\lambda+\lambda\zeta_x}\right)
+P(x)\zeta_x\log\left(\frac{1}{1-\lambda+\frac{\lambda}{\zeta_x}}\right)\right]\\
&=&\sum_{x=1}^m P(x)\Upsilon(1-\lambda,\zeta_x)\,\log\left(\frac{1}{1-\lambda}\right),
\end{eqnarray*}
which implies that
$$\frac{D(P||R_{\lambda})}{-\log(\lambda)}-\frac{D(P||R_{1-\lambda})}{-\log(1-\lambda)}=
\sum_{x=1}^m P(x)\left[\Upsilon(\lambda, \zeta_x)-\Upsilon(1-\lambda, \zeta_x)\right].$$
Hence, in order to show \eqref{Lemma_BISO_Appendix_Equation1}, it suffices to verify that
\begin{equation}\label{Phi_Definition}
  \Phi(\lambda,\zeta):=\Upsilon(\lambda, \zeta)-\Upsilon(1-\lambda, \zeta)>0,
\end{equation}
for any $\lambda\in(0, \frac{1}{2})$ and $\zeta\in (1, \infty)$.
Since $\log(\lambda)\log(1-\lambda)$ is always positive for $\lambda\in (0, \frac{1}{2})$, it suffices to show that
\begin{equation}\label{h_definition}
  h(\zeta):=\Phi(\lambda,\zeta)\log(1-\lambda)\log(\lambda)>0,
\end{equation}
for $\lambda\in(0, \frac{1}{2})$ and $\zeta\in (1, \infty)$.
We have
\begin{equation}\label{Second_derivatie}
  h''(\zeta)=A(\lambda, \zeta)B(\lambda, \zeta),
\end{equation}
where
$$A(\lambda, \zeta):=\frac{1 + \zeta}{(1-\lambda+ \lambda \zeta)^2 (\lambda + (1- \lambda) \zeta)^2 \zeta} ,$$
and
$$B(\lambda, \zeta):=\lambda^2 (1 + \lambda(\lambda-2)(\zeta-1)^2 +\zeta(\zeta-1)) \log(\lambda)-(1- \lambda)^2 (\lambda^2 (\zeta-1)^2 + \zeta) \log(1 - \lambda).$$
We have
$$\frac{\partial^2}{\partial\zeta^2}B(\lambda, \zeta)=2\lambda^2(1-\lambda)^2\log\left(\frac{\lambda}{1-\lambda}\right)<0,$$
because $\lambda\in (0, \frac{1}{2})$ and hence $\lambda<1-\lambda$. This implies that the map $\zeta\mapsto B(\lambda, \zeta)$ is concave for any $\lambda\in (0, \frac{1}{2})$ and $\zeta\in (1, \infty)$. Moreover, since $\zeta\mapsto B(\lambda, \zeta)$ is a quadratic polynomial with negative leading coefficient, it is clear that $\lim_{\zeta\to \infty}B(\lambda, \zeta)=-\infty$.
Consider now $g(\lambda):=B(\lambda, 1)=\lambda^2\log(\lambda)-(1-\lambda)^2\log(1-\lambda)$. We have $\lim_{\lambda\to 0}g(\lambda)=g(\frac{1}{2})=0$ and $g''(\lambda)=2\log\left(\frac{\lambda}{1-\lambda}\right)<0$ for $\lambda\in (0, \frac{1}{2})$. It implies that $\lambda\mapsto g(\lambda)$ is concave over $(0, \frac{1}{2})$ and hence $g(\lambda)>0$ over $(0, \frac{1}{2})$ which implies that $B(\lambda, 1)>0$. This together with the fact that $\zeta\mapsto B(\lambda, \zeta)$ is concave and it approaches to $-\infty$ as $\zeta\to \infty$ imply that there exists a real number $c=c(\lambda)>1$ such that $B(\lambda, \zeta)>0$ for all $\zeta\in (1, c)$ and $B(\lambda, \zeta)<0$ for all $\zeta\in (c, \infty)$. Since $A(\lambda, \zeta)>0$, it follows from \eqref{Second_derivatie} that $\zeta\mapsto h(\zeta)$ is convex over $(1, c)$ and concave over $(c, \infty)$. Since $h(1)=h'(1)=0$ and $\lim_{\zeta\to \infty}h(\zeta)=\infty$, we can conclude that $h(\zeta)>0$ over $(1, \infty)$. That is, $\Phi(\lambda, \zeta)>0$ and thus
$\Upsilon(\lambda, \zeta)-\Upsilon(1-\lambda, \zeta)>0$, for $\lambda\in(0, \frac{1}{2})$ and $\zeta\in (1, \infty)$.

The inequality \eqref{Lemma_BISO_Appendix_Equation2} can be proved by \eqref{Lemma_BISO_Appendix_Equation1} and  switching $\lambda$ to $1-\lambda$.
\end{proof}
Letting $P(\cdot)=P_{X|Y}(\cdot|1)$ and $Q(\cdot)=P_{X|Y}(\cdot|0)$ and $\lambda=\Pr(Y=1)=p$, we have $R_{p}(x)=P_X(x)=pP(x)+(1-p)Q(x)$ and $R_{1-p}=P_X(-x)=(1-p)P(x)+pQ(x)$. Since $D(P_{X|Y}(\cdot|0)||P_X(\cdot))=D(P||R_{1-p})$, we can conclude from Lemma~\ref{Lemma_Unique_sol_BISO} that
$$\frac{D(P_{X|Y}(\cdot|0)||P_X(\cdot))}{-\log(1-p)}<\frac{D(P_{X|Y}(\cdot|1)||P_X(\cdot))}{-\log(p)},$$
over $p\in (0, \frac{1}{2})$ and
$$\frac{D(P_{X|Y}(\cdot|0)||P_X(\cdot))}{-\log(1-p)}>\frac{D(P_{X|Y}(\cdot|1)||P_X(\cdot))}{-\log(p)},$$
over $p\in (\frac{1}{2}, 1)$, and hence equation \eqref{ratio3} has only solution $p=\frac{1}{2}$.

\section{Proof of Theorems \ref{Thm:gMtog} and \ref{Thm:gContinuous}}
\label{Appendix:ProofsContinuous}

The proof of Theorem \ref{Thm:gContinuous} does not depend on the proof of Theorem \ref{Thm:gMtog}, so, there is no harm in proving the former theorem first. The following version of the data-processing inequality will be required.


\begin{lemma}
\label{Thm:DataProcessingInq}
Let $X$ and $Y$ be absolutely continuous random variables such that $X$, $Y$ and $(X,Y)$ have finite differential entropies. If $V$ is an absolutely continuous random variable independent of $X$ and $Y$, then
\begin{equation*}
I(X;Y+V) \leq I(X;Y)
\end{equation*}
with equality if and only if $X$ and $Y$ are independent.
\end{lemma}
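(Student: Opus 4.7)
The inequality is essentially the data processing inequality applied to the Markov chain $X \markov Y \markov Y+V$, which holds since $V$ is independent of $(X,Y)$. Concretely, my plan is to expand $I(X; Y, Y+V)$ in two ways using the chain rule for mutual information:
\begin{equation*}
I(X; Y, Y+V) \;=\; I(X;Y) + I(X; Y+V \mid Y) \;=\; I(X; Y+V) + I(X; Y \mid Y+V).
\end{equation*}
The independence of $V$ from $(X,Y)$ makes the middle term $I(X; Y+V \mid Y) = 0$, because conditional on $Y$ the random variable $Y+V$ is a deterministic shift of $V$, which is independent of $X$. This collapses the identity to $I(X;Y) = I(X;Y+V) + I(X; Y \mid Y+V)$, and the non-negativity of the last term yields the asserted inequality, with equality if and only if $I(X; Y \mid Y+V) = 0$.

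The trivial direction of the equality characterization is that $X \indep Y$ forces $I(X;Y)=0$ and hence both sides are zero. The main obstacle, and the interesting part, is the converse: concluding that $I(X;Y\mid Y+V)=0$ actually implies $X \indep Y$ (and not merely conditional independence given $Y+V$). This is where the absolute continuity of $V$ must be used crucially, since without it (e.g.\ if $V$ were a constant) the claim would fail.

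To handle this step, I would combine the two Markov-type relations. From $X \markov Y \markov Y+V$ we get, for every bounded measurable $h$, $\E[h(X)\mid Y, Y+V] = \E[h(X)\mid Y]$. From the conditional independence $X \indep Y \mid Y+V$, we get $\E[h(X)\mid Y, Y+V] = \E[h(X)\mid Y+V]$. Hence $\E[h(X)\mid Y] = \E[h(X)\mid Y+V]$ almost surely. Writing $\psi(Y):=\E[h(X)\mid Y]$ and $\varphi(Y+V):=\E[h(X)\mid Y+V]$, the identity $\psi(Y)=\varphi(Y+V)$ a.s., together with the independence of $V$ from $Y$, gives for $P_Y$-a.e.\ $y$ that $\varphi(y+V)=\psi(y)$ $P_V$-almost surely. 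Because $V$ is absolutely continuous, this forces $\varphi$ to be constant on a set of positive Lebesgue measure of translates sweeping out the support of $Y+V$, and by overlapping these translates one concludes that $\psi$ is $P_Y$-a.s.\ constant. Since $h$ was an arbitrary bounded measurable function (it suffices to check this on a separating countable family, such as indicators of rational intervals), this shows $\E[h(X)\mid Y]=\E[h(X)]$ for all such $h$, i.e.\ $X \indep Y$.

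The two routine checks to fill in are (i) the measure-theoretic justification of the chain-rule expansion in the continuous setting, which is standard given the finite differential entropy hypotheses ensuring all mutual informations are well defined, and (ii) the covering/overlap argument for translates of $\mathrm{supp}(V)$. I expect (ii) to be the only non-mechanical step; in the paper's intended application $V$ is Gaussian so $\mathrm{supp}(V)=\R$ and the argument is immediate, but writing it for a general absolutely continuous $V$ needs a brief Fubini-type argument.
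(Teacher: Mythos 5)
Your plan is correct and follows essentially the same route as the paper: both reduce the inequality to the data-processing inequality for $X \markov Y \markov (Y+V)$, characterize equality by the reverse Markov condition $X \indep Y \mid Y+V$, and then combine this with the forward condition $X\indep (Y+V)\mid Y$ to deduce $\E[h(X)\mid Y]=\E[h(X)\mid Y+V]$ almost surely and argue that both sides must be constant.

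Your closing caution deserves more weight than you give it. What is needed there is not merely a ``brief Fubini-type argument'': if $\supp(V)$ is bounded and the support of $Y$ has gaps wider than $\mathrm{diam}(\supp(V))$, then the translates $y+\supp(V)$ need not overlap, the chaining breaks down, and the equality characterization genuinely fails. For instance, take $Y$ uniform on $[0,1]\cup[10,11]$, $V$ uniform on $[0,1]$, and let $X$ be uniform on $[0,1]$ when $Y\le 1$ and uniform on $[2,3]$ otherwise; all differential entropies involved are finite, one computes $I(X;Y)=I(X;Y+V)=\log 2$, and yet $X$ and $Y$ are dependent. So the lemma, as stated for arbitrary absolutely continuous $V$, is not quite correct. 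The paper's own proof is equally silent on this point (its final ``which implies that $X$ and $Y$ must be independent'' implicitly uses the same overlapping-translates step), and the lemma is only ever invoked with Gaussian $V$, where $\supp(V)=\R$ makes the overlap automatic, exactly as you observe.
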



\begin{proof}
Since $X\markov Y\markov (Y+V)$, the data processing inequality implies that $I(X;Y+V)\leq I(X;Y)$. It therefore suffices to show that this inequality is tight if and only $X$ and $Y$ are independent. It is known that data processing inequality is tight if and only if $X\markov (Y+V)\markov Y$. This is equivalent to saying that for any measurable set $A\subset \R$ and for $P_{Y+V}$ almost all $z$, $\Pr(X\in A|Y+V=z, Y=y)=\Pr(X\in A|Y+V=z)$. On the other hand, due to the independence of $V$ and $(X,Y)$, we  have $\Pr(X\in A|Y+V=z, Y=y)=\Pr(X\in A|Y=z-v)$. Hence, the equality holds if and only if $\Pr(X\in A|Y+V=z)=\Pr(X\in A|Y=z-v)$ which implies that $X$ and $Y$ must be independent.
\end{proof}



\begin{lemma}
\label{Lemma:MonotonicContinuityI}
In the notation of Section~\ref{Subsection:gContinuous}, the function $\gamma\mapsto I(Y;Z_\gamma)$ is strictly-decreasing and continuous. Additionally, it satisfies
\begin{equation*}
I(Y;Z_\gamma) \leq \frac{1}{2} \log\left(1+\frac{\var(Y)}{\gamma^2}\right).
\end{equation*}
with equality if and only if $Y$ is Gaussian. In particular, $I(Y;Z_\gamma)\to 0$ as $\gamma\to\infty$.
\end{lemma}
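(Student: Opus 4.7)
The plan is to handle the four claims in turn, with the upper bound and limit treated first since they follow from standard maximum-entropy arguments, monotonicity handled by a Gaussian-convolution semigroup argument combined with Lemma~\ref{Thm:DataProcessingInq}, and continuity reduced to an I-MMSE calculation.

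First I would derive the upper bound and the limit together. Using independence of $N$ from $Y$,
\[
I(Y;Z_\gamma) \;=\; h(Z_\gamma) - h(Z_\gamma \mid Y) \;=\; h(Y+\gamma N) - \tfrac{1}{2}\log(2\pi e \gamma^2).
\]
Since $\var(Z_\gamma) = \var(Y)+\gamma^2 < \infty$ by assumption~(b), the maximum-entropy inequality for distributions of fixed variance yields $h(Y+\gamma N)\le \tfrac{1}{2}\log\bigl(2\pi e(\var(Y)+\gamma^2)\bigr)$, which after subtraction gives the claimed bound $I(Y;Z_\gamma)\le \tfrac{1}{2}\log(1+\var(Y)/\gamma^2)$. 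Equality holds iff $Z_\gamma$ is Gaussian, and by Cram\'er's decomposition theorem (a sum of two independent random variables is Gaussian iff each summand is Gaussian), this is equivalent to $Y$ being Gaussian. The limit $I(Y;Z_\gamma)\to 0$ as $\gamma\to\infty$ is then immediate from this bound.

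For strict monotonicity, I would exploit the Gaussian-convolution semigroup structure. For $0<\gamma_1<\gamma_2$, the joint law of $(Y, Z_{\gamma_2})$ coincides with that of $(Y, Z_{\gamma_1}+V)$, where $V\sim N(0,\gamma_2^2-\gamma_1^2)$ is independent of $(Y,N)$. Applying Lemma~\ref{Thm:DataProcessingInq} with $X$ replaced by $Y$ and $Y$ replaced by $Z_{\gamma_1}$ then gives $I(Y;Z_{\gamma_2})\le I(Y;Z_{\gamma_1})$, with equality iff $Y$ and $Z_{\gamma_1}$ are independent. Since $\mathrm{Cov}(Y,Z_{\gamma_1}) = \var(Y) > 0$ (absolute continuity of $Y$ rules out a degenerate distribution), the inequality is strict. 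The hypotheses of the lemma hold because $h(Y)$ is finite by (b)--(c), while $h(Z_{\gamma_1})$ and $h(Y,Z_{\gamma_1})$ are finite since $Z_{\gamma_1}$ has finite second moment and a bounded smooth density given by the Gaussian convolution $f_Y*\phi_{\gamma_1}$.

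The main obstacle is continuity of $\gamma\mapsto I(Y;Z_\gamma)$ on $(0,\infty)$, which I would address via the I-MMSE identity. A scaling change shows that
\[
I(Y;Z_\gamma) \;=\; I(Y;\,Y+\gamma N) \;=\; I\bigl(Y;\,\sqrt{s}\,Y+N\bigr), \qquad s := 1/\gamma^2,
\]
so by the Guo--Shamai--Verd\'u identity $\tfrac{d}{ds}I(Y;\sqrt{s}Y+N) = \tfrac{1}{2}\mmse(Y\mid \sqrt{s}Y+N)$ (which requires only the finite second moment guaranteed by (b)), the right-hand side is a continuously differentiable function of $s\in(0,\infty)$, hence continuous in $\gamma\in(0,\infty)$. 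If one prefers to avoid citing I-MMSE, a direct proof reduces continuity to that of $\gamma\mapsto h(Z_\gamma) = -\int f_{Z_\gamma}\log f_{Z_\gamma}$, where $f_{Z_\gamma} = f_Y*\phi_\gamma$ is jointly smooth in $(\gamma,z)$ on $(0,\infty)\times\R$; here the technical nuisance is constructing a local integrable majorant for $f_{Z_\gamma}|\log f_{Z_\gamma}|$ uniformly in $\gamma$ near a fixed $\gamma_0>0$, which assumption~(a)'s tail bound $f_Y(y)\le C_1|y|^{-p}$ combined with the Gaussian decay of $\phi_\gamma$ supplies. This bookkeeping is the most delicate part of the proof.
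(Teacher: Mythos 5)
Your proposal is correct, and the overall shape matches the paper's proof: the upper bound from the maximum-entropy characterization of the Gaussian, strict monotonicity from the data-processing lemma applied along the Gaussian convolution semigroup, and continuity reduced to differentiability of the entropy of a Gaussian-smoothed density. Two points of genuine difference are worth flagging. First, for continuity you route through the Guo--Shamai--Verd\'u I-MMSE identity $\frac{d}{ds}I(Y;\sqrt{s}Y+N)=\frac{1}{2}\mmse(Y\mid\sqrt{s}Y+N)$, whereas the paper writes $I(Y;Z_{\gamma_n})=h(Y+\gamma_n N)-\tfrac{1}{2}\log(2\pi e\gamma_n^2)$ and invokes de Bruijn's identity to pass to the limit in $h(Y+\gamma_n N)$. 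These are dual tools (de Bruijn differentiates entropy via Fisher information, I-MMSE differentiates mutual information via MMSE); both need only the finite second moment from assumption (b), so either works, and your reduction $I(Y;Z_\gamma)=I(Y;\sqrt{s}Y+N)$ with $s=1/\gamma^2$ is valid by invariance of mutual information under scaling. Second, for the equality case you explicitly invoke Cram\'er's decomposition theorem to pass from ``$Y+\gamma N$ Gaussian'' to ``$Y$ Gaussian''; the paper appeals directly to the equality condition in the AWGN capacity bound, which is slightly less explicit, so your version is if anything more careful. The one small omission: the lemma's domain is $\gamma\geq 0$, and the paper separately records the behavior at the left endpoint, namely the convention $I(Y;Z_0)=I(Y;Y)=\infty$ together with an EPI argument showing $I(Y;Z_\gamma)\to\infty$ as $\gamma\to 0^{+}$. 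You confine both the monotonicity and the continuity arguments to $(0,\infty)$; adding a line about the $\gamma\to 0^{+}$ limit (needed later in the proof of Theorem~\ref{Thm:gContinuous} for $g_{I(X;Y)}(X;Y)=\infty$) would close the gap.
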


\begin{proof}
Recall that, by assumption b), $\var(Y)$ is finite. The finiteness of the entropy of $Y$ follows from assumption, the corresponding statement for $Y+\gamma N$ follows from a routine application of the entropy power inequality \cite[Theorem 17.7.3]{Cover_Book} and the fact that $\var(Y+\gamma N)=\var(Y)+\gamma^2<\infty$, and for $(Y,Y+\gamma N)$ the same conclusion follows by the chain rule for differential entropy. The data processing inequality, as stated in Lemma \ref{Thm:DataProcessingInq}, implies
\begin{equation*}
I(Y;Z_{\gamma+\delta}) \leq I(Y;Y+\gamma N) = I(Y;Z_\gamma).
\end{equation*}
Clearly $Y$ and $Y+\gamma N$ are not independent, therefore the inequality is strict and thus $\gamma\mapsto I(Y,Z_\gamma)$ is strictly-decreasing.

Continuity will be studied for $\gamma=0$ and $\gamma>0$ separately. Recall that $h(\gamma N) = \frac{1}{2} \log (2\pi e \gamma^2)$. In particular, $\displaystyle \lim_{\gamma\to0} h(\gamma N) = -\infty$. The entropy power inequality shows then that $\displaystyle \lim_{\gamma\to0} I(Y;Y+\gamma N) = \infty$. This coincides with the convention $I(Y;Z_0)=I(Y;Y)=\infty$. For $\gamma>0$, let $(\gamma_n)_{n\geq1}$ be a sequence of positive numbers such that $\gamma_n\to\gamma$. Observe that
\begin{align*}
I(Y;Z_{\gamma_n}) &= h(Y+\gamma_n N) - h(\gamma_n N) = h(Y+\gamma_n N) - \frac{1}{2}\log(2\pi e\gamma_n^2).
\end{align*}
Since $\displaystyle \lim_{n\to\infty} \frac{1}{2} \log(2\pi e\gamma_n^2) = \frac{1}{2} \log(2\pi e\gamma^2)$, we only have to show that $h(Y+\gamma_n N)\to h(Y+\gamma N)$ as $n\to\infty$ to establish the continuity at $\gamma$. This, in fact, follows from de Bruijn's identity (cf., \cite[Theorem 17.7.2]{Cover_Book}).


Since the channel from $Y$ to $Z_\gamma$ is an additive Gaussian noise channel, we have $\displaystyle I(Y;Z_\gamma)\leq\frac{1}{2} \log\left(1+\frac{\var(Y)}{\gamma^2}\right)$ with equality if and only if $Y$ is Gaussian. The claimed limit as $\gamma\to 0$ is clear.
\end{proof}

\begin{lemma}\label{Lemma:DecreasingIXZ}
The function $\gamma\mapsto I(X;Z_\gamma)$ is strictly-decreasing and continuous. Moreover, $I(X;Z_\gamma)\to0$ when $\gamma\to\infty$.
\end{lemma}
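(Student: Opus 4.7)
The plan is to establish the three claims---strict monotonicity, continuity, and the limit at infinity---by mirroring the strategy used for Lemma~\ref{Lemma:MonotonicContinuityI}, with the only new ingredient being the Markov structure $X\markov Y\markov Z_\gamma$.

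For strict monotonicity, given $0\le\gamma_1<\gamma_2$ I would observe that $(X, Z_{\gamma_2})$ has the same joint law as $(X, Z_{\gamma_1}+V)$, where $V\sim N(0,\gamma_2^2-\gamma_1^2)$ is independent of $(X,Y,N)$. Lemma~\ref{Thm:DataProcessingInq} then yields $I(X;Z_{\gamma_2})\le I(X;Z_{\gamma_1})$, with equality if and only if $X\indep Z_{\gamma_1}$. Independence is ruled out by a characteristic-function argument: for $\gamma_1>0$, $\varphi_{Z_{\gamma_1}|X=x}(t)=\varphi_{Y|X=x}(t)\,e^{-\gamma_1^2 t^2/2}$, so if $\varphi_{Z_{\gamma_1}|X=x}$ were independent of $x$, the non-vanishing Gaussian factor could be cancelled to force $\varphi_{Y|X=x}$ to be independent of $x$, contradicting our standing assumption that $X$ and $Y$ are not independent; the case $\gamma_1=0$ reduces to this hypothesis directly. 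For the limit, the Markov chain together with the standard data-processing inequality gives $I(X;Z_\gamma)\le I(Y;Z_\gamma)$, and the bound $I(Y;Z_\gamma)\le \tfrac{1}{2}\log(1+\var(Y)/\gamma^2)$ from Lemma~\ref{Lemma:MonotonicContinuityI} drives the right-hand side to zero as $\gamma\to\infty$.

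Continuity is the more delicate step. On $(0,\infty)$ I would follow Lemma~\ref{Lemma:MonotonicContinuityI} and decompose $I(X;Z_\gamma)=h(Z_\gamma)-h(Z_\gamma\mid X)$: de Bruijn's identity already delivers continuity of $h(Z_\gamma)$, while for the conditional term I apply de Bruijn to each realization $X=x$ and then pass the limit through the outer integral against $f_X$ by dominated convergence, using the polynomial tail bound from assumption~(a) and the finite-variance hypothesis~(b) to produce the necessary integrable majorants. At $\gamma=0$ continuity follows from a sandwich argument: the data processing inequality gives $I(X;Z_\gamma)\le I(X;Y)$, while the lower semicontinuity of mutual information under the convergence $(X,Z_\gamma)\to(X,Y)$ as $\gamma\downarrow 0$ yields $\liminf_{\gamma\to 0}I(X;Z_\gamma)\ge I(X;Y)=I(X;Z_0)$. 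The main obstacle I anticipate is justifying the dominated convergence in the conditional entropy integral: it is precisely here that assumptions (a)--(d) need to be used carefully, to produce $\gamma$-uniform integrable bounds on $x\mapsto h\bigl((Y\mid X=x)+\gamma N\bigr)$.
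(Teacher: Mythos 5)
Your proposal is correct, and for most of the lemma it follows the paper's route: the strict decrease comes from Lemma~\ref{Thm:DataProcessingInq} applied to $Z_{\gamma_2}\stackrel{d}{=}Z_{\gamma_1}+V$ with $V\sim N(0,\gamma_2^2-\gamma_1^2)$, the limit $I(X;Z_\gamma)\to 0$ comes from $I(X;Z_\gamma)\le I(Y;Z_\gamma)\le \frac{1}{2}\log\left(1+\var(Y)/\gamma^2\right)$, and continuity on $(0,\infty)$ uses the same decomposition $I(X;Z_\gamma)=h(Z_\gamma)-h(Z_\gamma|X)$ with de Bruijn's identity pointwise in $x$ and dominated convergence for the outer integral (the paper handles the domination via $h(Z_{\gamma_n}|X=x)\le\frac{1}{2}\log\left(2\pi e\,\var(Z_{\gamma_n}|x)\right)$, which is essentially the majorant you anticipate needing). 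You deviate from the paper in two places, both defensibly. First, for strictness the paper simply says the argument is ``as in the previous lemma,'' where the key step was that $Y$ and $Y+\gamma N$ are obviously dependent; for $X$ versus $Z_{\gamma_1}$ this is not obvious, and your characteristic-function (deconvolution) argument---cancelling the non-vanishing Gaussian factor to conclude $X\indep Y$, a contradiction---is exactly the missing justification, so your write-up is actually more complete here. Second, for continuity at $\gamma=0$ the paper invokes the Linder--Zamir result that $h(Y+\gamma_n N|X=x)\to h(Y|X=x)$ and $h(Y+\gamma_n N)\to h(Y)$, again followed by dominated convergence, whereas you use a sandwich: $I(X;Z_\gamma)\le I(X;Y)$ by data processing, and weak lower semicontinuity of mutual information (relative entropy) under $(X,Z_\gamma)\Rightarrow(X,Y)$ gives $\liminf_{\gamma\downarrow 0}I(X;Z_\gamma)\ge I(X;Y)=I(X;Z_0)$, which is finite by assumptions (b)--(c). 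Your route avoids the external citation and the second dominated-convergence step and is arguably cleaner; the paper's route yields the stronger statement that the conditional and unconditional differential entropies themselves converge, which is reused elsewhere in the appendix. The one point to tighten if you write this out in full is the dominated-convergence step on $(0,\infty)$: you need a two-sided, $\gamma$-uniform integrable bound on $x\mapsto h(Y+\gamma N|X=x)$ (an upper bound alone, as in the paper, does not literally suffice), and this is where assumptions (a)--(b) must be deployed.
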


\begin{proof}
The proof of the strictly-decreasing behavior of $\gamma\mapsto I(X;Z_\gamma)$ is proved as in the previous lemma.

To prove continuity, let $\gamma\geq0$ be fixed. Let $(\gamma_n)_{n\geq1}$ be any sequence of positive numbers converging to $\gamma$. First suppose that $\gamma>0$. Observe that
\begin{equation*}
I(X;Z_{\gamma_n}) = h(Y+\gamma_n N) - h(Y+\gamma_n N|X)
\end{equation*}
for all $n\geq1$. As shown in Lemma~\ref{Lemma:MonotonicContinuityI}, $h(Y+\gamma_n N)\to h(Y+\gamma N)$ as $n\to\infty$. Therefore, it is enough to show that $h(Y+\gamma_n N|X) \to h(Y+\gamma N|X)$ as $n\to\infty$.  Note that by de Bruijn's identity, we have  $h(Y+\gamma_n N|X=x)\to h(Y+\gamma N|X=x)$ as $n\to\infty$ for all $x\in \R$. Note also that since
$$h(Z_{\gamma_n}|X=x)\leq \frac{1}{2}\log\left(2\pi e\var(Z_{\gamma_n}|x)\right),$$ we can write
$$h(Z_{\gamma_n}|X)\leq \E\left[\frac{1}{2}\log(2\pi e\var(Z_{\gamma_n}|X))\right]\leq \frac{1}{2}\log\left(2\pi e\E[\var(Z_{\gamma_n}|X)]\right),$$
and hence we can apply dominated convergence theorem to show that $h(Y+\gamma_n N|X)\to h(Y+\gamma N|X)$ as $n\to\infty$.

To prove the continuity at $\gamma=0$, we first note that Linder and Zamir \cite[Page~2028]{Linder} showed that $h(Y+\gamma_nN|X=x)\to h(Y|X=x)$ as $n\to \infty$, then, as before, by dominated convergence theorem we can show that $h(Y+\gamma_nN|X)\to h(Y|X)$. Similarly \cite{Linder} implies that  $h(Y+\gamma_nN)\to h(Y)$. This concludes the proof of the continuity of $\gamma\mapsto I(X; Z_{\gamma})$.

Furthermore, by the data processing inequality and previous lemma,
\begin{equation*}
0 \leq I(X;Z_\gamma) \leq I(Y;Z_\gamma) \leq \frac{1}{2} \log\left(1+\frac{\var(Y)}{\gamma^2}\right),
\end{equation*}
 and hence we conclude that $\displaystyle \lim_{\gamma\to\infty} I(X;Z_\gamma) = 0$.
\end{proof}

\begin{proof}[{\bf Proof of Theorem \ref{Thm:gContinuous}}]
The nonnegativity of $g_\eps(X;Y)$ follows directly from definition.

By Lemma~\ref{Lemma:DecreasingIXZ}, for every $0<\eps\leq I(X;Y)$ there exists a unique $\gamma_\eps\in[0,\infty)$ such that $I(X;Z_{\gamma_\eps})=\eps$, so $g_\eps(X;Y) = I(Y;Z_{\gamma_\eps})$. Moreover, $\eps\mapsto\gamma_\eps$ is strictly decreasing. Since $\gamma\mapsto I(Y;Z_\gamma)$ is strictly-decreasing, we conclude that $\eps\mapsto g_\eps(X;Y)$ is strictly increasing.

The fact that  $\eps\mapsto\gamma_\eps$ is strictly decreasing, also implies that $\gamma_\eps \to \infty$ as $\eps\to0$. In particular,
\begin{equation*}
\lim_{\eps\to0} g_\eps(X;Y) = \lim_{\eps\to0} I(Y;Z_{\gamma_\eps}) = \lim_{\gamma_\eps\to\infty} I(Y;Z_{\gamma_\eps}) = \lim_{\gamma\to\infty} I(Y;Z_\gamma) = 0.
\end{equation*}
By the data processing inequality we have that $I(X;Z_\gamma)\leq I(X;Y)$ for all $\gamma\geq0$, i.e., any filter satisfies the privacy constraint for $\eps=I(X;Y)$. Thus, $\displaystyle g_{I(X;Y)}(X;Y) \geq I(Y;Y)=\infty$.
\end{proof}


 In order to prove Theorem \ref{Thm:gMtog}, we first recall the following theorem by R\'enyi \cite{Ren59}.
\begin{theorem}[\cite{Ren59}]
If $U$ is an absolutely continuous random variable with density $f_U(x)$ and if $H(\lfloor U\rfloor)<\infty$, then
\begin{equation*}
\lim_{n\to\infty} H(n^{-1}\lfloor nU\rfloor) - \log(n) = - \int_\R f_U(x) \log f_U(x) \text{d} x,
\end{equation*}
provided that the integral on the right hand side exists.
\end{theorem}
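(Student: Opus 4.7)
The plan is to (i) recast $H(U_n)-\log n$ as the sum of $h(U):=-\int f_U\log f_U\,dx$ and a Kullback-Leibler divergence, (ii) obtain a.e.\ convergence of the underlying step-density approximation via the Lebesgue differentiation theorem, and (iii) pass to the limit by combining Jensen's inequality, Pratt's generalized dominated convergence theorem, and the hypothesis $H(\lfloor U\rfloor)<\infty$.

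Set $U_n:=n^{-1}\lfloor nU\rfloor$, $p_{n,k}:=\Pr(U_n=k/n)=\int_{k/n}^{(k+1)/n}f_U(x)\,dx$, and let $f_n(x):=n p_{n,k}$ on $[k/n,(k+1)/n)$, so that $f_n$ is the conditional expectation of $f_U$ with respect to the $\sigma$-algebra $\mathcal{F}_n$ generated by the partition $\{[k/n,(k+1)/n)\}_{k\in\mathbb{Z}}$. A direct computation together with the tower property $\int f_n\log f_n\,dx=\int f_U\log f_n\,dx$ yields the identity
\[
H(U_n)-\log n \;=\; -\sum_k p_{n,k}\log(np_{n,k}) \;=\; -\int_{\mathbb{R}} f_n\log f_n\,dx \;=\; h(U)+D(f_U\,\|\,f_n),
\]
where $D$ denotes Kullback-Leibler divergence. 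Nonnegativity of $D$ gives the easy lower bound $H(U_n)-\log n\geq h(U)$, and the theorem reduces to proving $D(f_U\,\|\,f_n)\to 0$.

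By the Lebesgue differentiation theorem, $f_n(x)\to f_U(x)$ for Lebesgue-a.e.\ $x\in\mathbb{R}$, hence $\phi(f_n)\to\phi(f_U)$ a.e., with $\phi(t):=t\log t$ (and $\phi(0):=0$). I would decompose $\phi=\phi^+-\phi^-$ via $\phi^\pm(t):=t\log^\pm t\geq 0$ and treat each half separately. Since $\phi^+$ is convex, Jensen's inequality gives the pointwise domination $\phi^+(f_n)\leq \E_n[\phi^+(f_U)]$; these dominators themselves converge a.e.\ to $\phi^+(f_U)$ (Lebesgue differentiation on $\phi^+(f_U)\in L^1$), and their integrals are constantly equal to $\int\phi^+(f_U)\,dx<\infty$. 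Pratt's generalized dominated convergence theorem then yields $\int\phi^+(f_n)\,dx\to\int\phi^+(f_U)\,dx$. For the negative part, Fatou's lemma gives only the one-sided bound $\liminf_n\int\phi^-(f_n)\,dx\geq\int\phi^-(f_U)\,dx$.

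The main obstacle is securing the matching \emph{upper} bound $\limsup_n\int\phi^-(f_n)\,dx\leq\int\phi^-(f_U)\,dx$: no Jensen-type domination exists for $\phi^-$ (which is concave on $[0,1]$), and the trivial pointwise bound $\phi^-\leq e^{-1}$ fails to be integrable on $\mathbb{R}$. This is where the hypothesis $H(\lfloor U\rfloor)<\infty$ becomes indispensable. Because $\lfloor U\rfloor$ is a deterministic function of $U_n$ for integer $n$ and conditionally takes at most $n$ values, one has the a priori bound $H(U_n)-\log n\leq H(\lfloor U\rfloor)$, i.e.\ the uniform $L^1$-estimate $\int\phi^-(f_n)\,dx\leq H(\lfloor U\rfloor)+\int\phi^+(f_n)\,dx$. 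Working first along the dyadic subsequence $n=2^m$, where $\{f_{2^m}\}$ is an $L^1(\mathrm{Leb})$-martingale and $\{\phi(f_{2^m})\}$ a submartingale converging a.e.\ to $\phi(f_U)$, the combination of (a)~the Jensen upper-bound on $\phi^+$, (b)~the absolute-continuity estimate $\phi^-(t)\leq 2\sqrt{t}$ on $[0,1]$ (yielding $\int_A\phi^-(f_n)\,dx\leq 2\sqrt{\mathrm{Leb}(A)}$ via Cauchy-Schwarz since $\int f_n\,dx=1$), and (c)~the $H(\lfloor U\rfloor)$-based uniform $L^1$-bound delivers uniform integrability of $\{\phi(f_{2^m})\}$; Vitali's theorem then upgrades a.e.\ convergence to $L^1$-convergence and gives $D(f_U\|f_{2^m})\to 0$. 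Finally, the monotonicity $H(U_{2n})-\log(2n)\leq H(U_n)-\log n$, a consequence of $H(U_{2n}|U_n)\leq\log 2$, together with a standard subsequence argument extends the conclusion from the dyadic case to arbitrary integer $n\to\infty$, completing the proof.
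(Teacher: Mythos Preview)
The paper does not actually prove this statement; it is quoted as a classical result of R\'enyi \cite{Ren59} and then invoked (via the subsequent lemma on $\Q_M$) without argument. So there is no ``paper's own proof'' to compare against, and I can only assess your proposal on its own terms.

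Your identity $H(U_n)-\log n=h(U)+D(f_U\Vert f_n)$ and the reduction to $D(f_U\Vert f_n)\to0$ are correct, as is the Pratt argument for $\int\phi^{+}(f_n)\to\int\phi^{+}(f_U)$. Two genuine gaps remain, however. First, the Vitali step for $\phi^{-}(f_n)$ on the infinite measure space $(\R,\mathrm{Leb})$ is incomplete: besides the equi-absolute-continuity you obtain from $\phi^{-}(t)\le 2\sqrt{t}$ and the uniform $L^1$-bound from $H(\lfloor U\rfloor)<\infty$, Vitali on a $\sigma$-finite space requires \emph{tightness}, namely $\sup_n\int_{\{|x|>R\}}\phi^{-}(f_n)\,dx\to0$, and your ingredients (b) and (c) do not deliver this (the estimate $2\sqrt{\mathrm{Leb}(A)}$ is vacuous for $A=\{|x|>R\}$). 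Tightness can in fact be recovered by exploiting that for integer $n$ the $n$-partition refines the integer partition, which yields $\int_{[k,k+1)}\phi^{-}(f_n)\le -q_k\log q_k+\int_{[k,k+1)}\phi^{+}(f_U)$ with $q_k=\Pr(\lfloor U\rfloor=k)$, and the right side is summable in $k$ by hypothesis; but this computation is absent from your sketch.

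Second, the closing ``standard subsequence argument'' does not work as stated: the inequality $H(U_{2n})-\log(2n)\le H(U_n)-\log n$ only relates $n$ to $2n$ and gives no control over, say, a sequence of primes, so convergence along $n=2^m$ does not by itself force convergence along all $n\to\infty$. Fortunately the detour through dyadics is unnecessary---none of (a)--(c), nor the a.e.\ convergence from Lebesgue differentiation, actually uses the nested filtration---so once the tightness issue above is repaired the argument runs for all integers $n$ directly.
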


We will need the following consequence of the previous theorem.

\begin{lemma}
\label{Lemma:QuantizationDecreasing}
If $U$ is an absolutely continuous random variable with density $f_U(x)$ and if $H(\lfloor U\rfloor)<\infty$, then $H(\Q_M(U)) - M \geq H(\Q_{M+1}(U)) - (M+1)$ for all $M\geq 1$ and
\begin{equation*}
\lim_{n\to\infty} H(\Q_M(U)) - M = - \int_\R f_U(x) \log f_U(x) \text{d} x,
\end{equation*}
provided that the integral on the right hand side exists.
\end{lemma}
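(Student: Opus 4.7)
The plan is to reduce both assertions to facts about the integer-valued random variables $Y_M := \lfloor 2^M U\rfloor$, since multiplication by the constant $2^{-M}$ is a bijection and hence $H(\Q_M(U)) = H(2^{-M} \lfloor 2^M U\rfloor) = H(Y_M)$. Under this identification, the monotonicity claim reads $H(Y_{M+1}) - H(Y_M) \leq 1$, and the limit claim reads $H(Y_M) - M \to h(U)$.

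For the monotonicity step, I would first observe that $Y_M$ is a deterministic function of $Y_{M+1}$, specifically $Y_M = \lfloor Y_{M+1}/2\rfloor$, because $\lfloor 2^M u\rfloor = \lfloor \lfloor 2^{M+1}u\rfloor/2\rfloor$ for every real $u$. Consequently,
\begin{equation*}
H(Y_{M+1}) = H(Y_M, Y_{M+1}) = H(Y_M) + H(Y_{M+1}\mid Y_M).
\end{equation*}
Conditioned on $Y_M = k$, the variable $Y_{M+1}$ takes values only in the two-element set $\{2k, 2k+1\}$, so $H(Y_{M+1}\mid Y_M = k) \leq \log 2 = 1$. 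Taking expectation over $Y_M$ gives $H(Y_{M+1}\mid Y_M)\leq 1$, and rearranging yields exactly $H(\Q_{M+1}(U)) - (M+1) \leq H(\Q_M(U)) - M$. This argument also shows inductively that $H(Y_M) \leq H(Y_0) + M = H(\lfloor U\rfloor) + M < \infty$ for every $M$, so all entropies involved are finite.

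For the limit step, I would invoke the R\'enyi theorem stated immediately before the lemma. That theorem guarantees, under the hypothesis $H(\lfloor U\rfloor)<\infty$ and the assumed existence of the integral, that $H(n^{-1}\lfloor nU\rfloor) - \log n \to -\int_\R f_U(x)\log f_U(x)\,\mathrm{d}x$ as $n\to\infty$. Restricting to the subsequence $n = 2^M$ (with logarithms base $2$, so $\log n = M$) and noting that $n^{-1}\lfloor nU\rfloor = \Q_M(U)$, I obtain the desired limit $H(\Q_M(U)) - M \to -\int f_U\log f_U$.

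Neither step involves a serious obstacle: the monotonicity is a clean conditioning identity combined with the binary bound on $H(Y_{M+1}\mid Y_M)$, and the limit is an immediate specialization of R\'enyi's theorem to dyadic scales. The only mild care needed is to note at the outset that $H(\Q_M(U)) = H(\lfloor 2^M U\rfloor)$ so that both R\'enyi's theorem and the chain-rule identity apply verbatim.
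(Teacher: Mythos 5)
Your proof is correct and follows the same refinement-of-partition idea the paper gestures at in its one-line justification, simply making explicit the chain-rule computation $H(Y_{M+1}) = H(Y_M) + H(Y_{M+1}\mid Y_M)$ with $H(Y_{M+1}\mid Y_M)\leq 1$ bit, and specializing R\'enyi's theorem to the dyadic subsequence $n=2^M$. No gaps; you have supplied the details the paper leaves implicit.
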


The previous lemma follows from the fact that $\Q_{M+1}(U)$ is constructed by refining the quantization partition for $\Q_M(U)$.

\begin{lemma}
\label{Lemma:ContinuityMutualInformations}
For any $\gamma\geq0$,
\begin{equation*}
\lim_{M\to\infty} I(X;Z^M_\gamma) = I(X;Z_\gamma)\quad\quad\quad\textnormal{and}\quad\quad\quad\lim_{M\to\infty} I(Y;Z^M_\gamma) = I(Y;Z_\gamma).
\end{equation*}
\end{lemma}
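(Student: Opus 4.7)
The plan is to split the mutual information as
$$I(A;Z_\gamma^M) = H(Z_\gamma^M) - H(Z_\gamma^M\mid A), \qquad A\in\{X,Y\},$$
and apply Lemma~\ref{Lemma:QuantizationDecreasing} (R\'enyi's quantization theorem) to each term after subtracting $M$, so that the logarithmic divergences cancel in the subtraction. The identity $I(A;Z_\gamma)=h(Z_\gamma)-h(Z_\gamma\mid A)$ is the target of convergence on the right-hand side.

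I would first verify the regularity hypotheses of R\'enyi's theorem for both $Z_\gamma$ and, for $P_A$-almost every $a$, the conditional law of $Z_\gamma\mid A=a$. For $\gamma>0$ the random variable $Z_\gamma=Y+\gamma N$ is absolutely continuous because it is a convolution with a Gaussian, has finite variance by assumption (b), hence $H(\lfloor Z_\gamma\rfloor)<\infty$ (integer random variables with finite second moment have finite entropy), and has finite differential entropy (upper bound from finite variance; lower bound via the entropy power inequality combined with $h(Y)>-\infty$, which follows from assumptions (b)--(c)). The same properties transfer to $Z_\gamma\mid A=a$ for $P_A$-a.e.\ $a$ by the same arguments applied conditionally, using the chain rule for differential entropy and $h(X,Y)>-\infty$. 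For $\gamma=0$ these facts hold for $Z_0=Y$ directly by assumptions (a)--(d).

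Lemma~\ref{Lemma:QuantizationDecreasing} then yields
$$H(Z_\gamma^M)-M\;\longrightarrow\; h(Z_\gamma), \qquad H(Z_\gamma^M\mid A=a)-M\;\longrightarrow\; h(Z_\gamma\mid A=a)$$
as $M\to\infty$, the latter pointwise in $a$. To pass the pointwise conditional limit through the outer expectation over $A$, I would exploit the monotonicity statement of the same lemma: for each $a$ the sequence $M\mapsto H(Z_\gamma^M\mid A=a)-M$ is non-increasing. The limit $h(Z_\gamma\mid A=a)$ is $P_A$-integrable because $h(Z_\gamma\mid A)=h(Z_\gamma)-I(A;Z_\gamma)$ is finite (using $I(X;Y)<\infty$ from finite variances and the data processing inequality $I(A;Z_\gamma)\le I(X;Y)$), and the top term $H(Z_\gamma^{\,1}\mid A=a)-1$ integrates against $P_A$ to at most $H(\Q_1(Z_\gamma))-1<\infty$. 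The monotone convergence theorem applied to the non-negative non-increasing sequence $[H(Z_\gamma^M\mid A=a)-M]-h(Z_\gamma\mid A=a)$ therefore gives
$$H(Z_\gamma^M\mid A)-M\;\longrightarrow\; h(Z_\gamma\mid A),$$
and subtracting the two convergences produces the claim
$$I(A;Z_\gamma^M)=[H(Z_\gamma^M)-M]-[H(Z_\gamma^M\mid A)-M]\;\longrightarrow\; h(Z_\gamma)-h(Z_\gamma\mid A)=I(A;Z_\gamma).$$

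The main obstacle is the conditional step: the pointwise convergence of $H(Z_\gamma^M\mid A=a)-M$ is immediate from R\'enyi, but commuting it with the $P_A$-expectation requires the monotonicity built into Lemma~\ref{Lemma:QuantizationDecreasing} together with integrability of the pointwise limit. Once the regularity hypotheses are verified uniformly enough to invoke R\'enyi conditionally, the rest of the argument is a routine monotone convergence computation; the $\gamma=0$ case is handled identically with $Z_0=Y$, noting that $I(Y;Z_0^M)=H(\Q_M(Y))\to\infty=I(Y;Y)$ is the expected behavior consistent with $h(Y\mid Y)=-\infty$.
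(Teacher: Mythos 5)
Your proposal takes essentially the same approach as the paper: decompose $I(A;Z^M_\gamma)=[H(Z^M_\gamma)-M]-[H(Z^M_\gamma\mid A)-M]$, apply Lemma~\ref{Lemma:QuantizationDecreasing} to each piece, and use the monotonicity of $M\mapsto H(\Q_M(\cdot))-M$ to pass the limit inside the $P_A$-integral for the conditional term. Your version is more careful than the paper's one-line justification (explicitly verifying R\'enyi's hypotheses conditionally and supplying the integrability bound that makes the monotone-convergence step legitimate), but the underlying argument is identical.
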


\begin{proof}
Observe that
\begin{align*}
I(X;Z^M_\gamma) &= I(X;\Q_M(Y+\gamma N))\\
&= H(\Q_M(Y+\gamma N)) - H(\Q_M(Y+\gamma N) | X)\\
&= [H(\Q_M(Y+\gamma N))-M]-\int_\R f_X(x) [H(\Q_M(Y+\gamma N)|X=x)-M] \text{d} x.
\end{align*}
By the previous lemma, the integrand is decreasing in $M$, and thus we can take the limit with respect to $M$ inside the integral. Thus,
\begin{equation*}
\lim_{M\to\infty} I(X;Z^M_\gamma) = h(Y+\gamma N) - h(Y+\gamma N|X) = I(X;Z_\gamma).
\end{equation*}
The proof for $I(Y; Z^M_{\gamma})$ is analogous.
\end{proof}
\begin{lemma}
Fix $M\in\N$. Assume that $f_Y(y) \leq C |y|^{-p}$ for some positive constant $C$ and $p>1$. For integer $k$ and $\gamma\geq0$, let
\begin{equation*}
p_{k,\gamma} := \Pr\left(\Q_M(Y+\gamma N)=\frac{k}{2^M}\right).
\end{equation*}
Then
\begin{equation*}
p_{k,\gamma} \leq \frac{C2^{(p-1)M+p}}{k^p} + 1_{\{\gamma>0\}} \frac{\gamma 2^{M+1}}{k\sqrt{2\pi}} e^{-k^2/2^{2M+3}\gamma^2}.
\end{equation*}
\end{lemma}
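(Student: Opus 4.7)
The plan is to rewrite $p_{k,\gamma}$ as the probability that $Y+\gamma N$ lies in the interval $I_k := [k/2^M, (k+1)/2^M)$ of length $2^{-M}$, and split this event according to whether $Y$ by itself already sits close to $I_k$ or whether the Gaussian perturbation has to do the work. Assuming $k>0$ (as the form of the bound demands), write $p_{k,\gamma}\leq \Pr(E_1)+\Pr(E_2)$ where
\[
E_1 := \{Y+\gamma N\in I_k,\ Y\geq k/2^{M+1}\},\qquad E_2 := \{Y+\gamma N\in I_k,\ Y < k/2^{M+1}\}.
\]

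For $E_2$, observe that on this event $\gamma N = (Y+\gamma N)-Y > k/2^M - k/2^{M+1} = k/2^{M+1}$, so $\Pr(E_2)\leq \Pr(\gamma N > k/2^{M+1})$. If $\gamma=0$ this set is empty, giving $0$ consistent with the indicator. For $\gamma>0$, I would invoke Mills' inequality $\Pr(N>t)\leq (t\sqrt{2\pi})^{-1}e^{-t^2/2}$ at $t = k/(\gamma 2^{M+1})$; since $t^2/2 = k^2/(\gamma^2 2^{2M+3})$ and $1/(t\sqrt{2\pi}) = \gamma 2^{M+1}/(k\sqrt{2\pi})$, this produces exactly the second (Gaussian) term of the claimed bound.

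For $E_1$, I would condition on the Gaussian component. Letting $\phi_\gamma$ denote the density of $\gamma N$ (interpreted as a point mass at $0$ when $\gamma=0$), write
\[
\Pr(E_1) = \int_\R \phi_\gamma(n)\, \Pr\bigl(Y \in (I_k - n) \cap [k/2^{M+1},\infty)\bigr)\,dn.
\]
On the domain $y\geq k/2^{M+1}>0$, the hypothesis $f_Y(y)\leq C|y|^{-p}$ yields the pointwise bound $f_Y(y)\leq C(k/2^{M+1})^{-p} = C\cdot 2^{(M+1)p}/k^p$. Since $I_k - n$ has length $2^{-M}$, the inner probability is at most $C\cdot 2^{(M+1)p - M}/k^p$, uniformly in $n$. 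Integrating $\phi_\gamma$ to $1$ gives $\Pr(E_1)\leq C\cdot 2^{M(p-1)+p}/k^p$, matching the first term.

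The only real subtlety is the choice of splitting threshold: placing it at $k/2^{M+1}$ (the midpoint of $[0,k/2^M]$) simultaneously forces $\gamma N$ past $k/2^{M+1}$ on $E_2$ (producing the correct constant $2^{2M+3}$ in the Gaussian exponent) and preserves the pointwise bound $|y|^{-p}\leq (k/2^{M+1})^{-p}$ on $E_1$ (giving the correct factor $2^{(p-1)M+p}$). Beyond this, the proof is a mechanical splitting-plus-tail-bound computation; no further ingredients are required.
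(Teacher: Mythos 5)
Your proof is correct and, modulo a cosmetic difference, it is the same argument the paper gives. The paper also splits the event $\{Y + \gamma N \in [k/2^M, (k+1)/2^M)\}$ at the midpoint $k/2^{M+1}$, but it carries out the split on the noise variable $n$ (the integration over the Gaussian density), writing the first piece as $\int_{-\infty}^{r_k/2}\phi_\gamma(n)\Pr(Y\in I_k - n)\,dn$ and noting that $n\leq r_k/2$ forces the shifted interval into $[r_k/2,\infty)$ where the density bound $f_Y(y)\leq C(r_k/2)^{-p}$ applies, and the second piece as $\int_{r_k/2}^\infty\phi_\gamma(n)\,dn$ bounded by Mills' inequality. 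You instead split on the value of $Y$, noting that $Y<k/2^{M+1}$ together with $Y+\gamma N\geq k/2^M$ forces $\gamma N>k/2^{M+1}$. Since, on the target event, $\{\gamma N\leq r_k/2\}\subset\{Y\geq r_k/2\}$ and $\{Y<r_k/2\}\subset\{\gamma N>r_k/2\}$, the two decompositions are just dual descriptions of the same partition; the density estimate, the uniform-in-$n$ bound, the Mills tail bound, and the arithmetic yielding $C2^{(p-1)M+p}/k^p$ and $\gamma 2^{M+1}e^{-k^2/2^{2M+3}\gamma^2}/(k\sqrt{2\pi})$ are identical in both. Your version is arguably slightly cleaner to state (it avoids writing out the integral before splitting), but it contains no new idea beyond the paper's.
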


\begin{proof}
The case $\gamma=0$ is trivial, so we assume that $\gamma>0$. For notational simplicity, let $r_a=\frac{a}{2^M}$ for all $a\in\mathbb{Z}$. Assume that $k\geq0$. Observe that
\begin{align*}
p_{k,\gamma} &= \int_{-\infty}^\infty \int_{-\infty}^\infty f_{\gamma N}(n) f_Y(y) 1_{\left[r_k,r_{k+1}\right)}(y+n) \text{d} y\text{d} n\\
&= \int_{-\infty}^\infty \frac{e^{-n^2/2\gamma^2}}{\sqrt{2\pi\gamma^2}} \Pr\left(Y\in\left[r_k,r_{k+1}\right)-n\right) \text{d} n.
\end{align*}
We will estimate the above integral by breaking it up into two pieces.

First, we consider
\begin{align*}
\int\limits_{-\infty}^{\frac{r_k}{2}} \frac{e^{-n^2/2\gamma^2}}{\sqrt{2\pi\gamma^2}} \Pr\left(Y\in\left[r_k,r_{k+1}\right)-n\right) \text{d} n.
\end{align*}
When $n\leq \frac{r_k}{2}$, then $r_k-n\geq r_k/2$. By the assumption on the density of $Y$,
\begin{align*}
\Pr\left(Y\in\left[r_k,r_{k+1}\right)-n\right) &\leq \frac{C}{2^M} \left(\frac{r_k}{2}\right)^{-p}.
\end{align*}
(The previous estimate is the only contribution when $\gamma=0$.) Therefore,
\begin{align*}
\int\limits_{-\infty}^{\frac{r_k}{2}} \frac{e^{-n^2/2\gamma^2}}{\sqrt{2\pi\gamma^2}} \Pr\left(Y\in\left[r_k,r_{k+1}\right)-n\right) \text{d} n &\leq \frac{C}{2^M} \left(\frac{r_k}{2}\right)^{-p} \int\limits_{-\infty}^{\frac{r_k}{2}} \frac{e^{-n^2/2\gamma^2}}{\sqrt{2\pi\gamma^2}}  \text{d} n\\
&\leq \frac{C2^{(p-1)M+p}}{k^p}.
\end{align*}

Using the trivial bound $\Pr\left(Y\in\left[r_k,r_{k+1}\right)-n\right)\leq1$ and well known estimates for the error function, we obtain that
\begin{align*}
\int\limits_{\frac{r_k}{2}}^\infty \frac{e^{-n^2/2\gamma^2}}{\sqrt{2\pi\gamma^2}} \Pr\left(Y\in\left[r_k,r_{k+1}\right)-n\right) \text{d} n 
&< \frac{1}{\sqrt{2\pi}} \frac{2\gamma}{r_k} e^{-r_k^2/8\gamma^2}\\
&= \frac{\gamma 2^{M+1}}{k\sqrt{2\pi}} e^{-k^2/2^{2M+3}\gamma^2}.
\end{align*}

Therefore,
\begin{equation*}
p_{k,\gamma} \leq \frac{C2^{(p-1)M+p}}{k^p} + \frac{\gamma 2^{M+1}}{k\sqrt{2\pi}} e^{-k^2/2^{2M+3}\gamma^2}.
\end{equation*}
The proof for $k<0$ is completely analogous.
\end{proof}

\begin{lemma}
Fix $M\in\N$. Assume that $f_Y(y) \leq C |y|^{-p}$ for some positive constant $C$ and $p>1$. The mapping $\gamma\mapsto H(\Q_M(Y+\gamma N))$ is continuous.
\end{lemma}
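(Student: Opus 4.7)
Write $H(\mathcal{Q}_M(Y+\gamma N)) = \sum_{k\in\mathbb{Z}} \phi(p_{k,\gamma})$, where $\phi(x):=-x\log x$ (with $\phi(0):=0$) and $p_{k,\gamma}$ is as in the previous lemma. The plan is to prove two facts and then interchange limit and sum: (i) for each fixed integer $k$, the map $\gamma \mapsto p_{k,\gamma}$ is continuous on $[0,\infty)$; (ii) the series $\sum_k \phi(p_{k,\gamma})$ converges uniformly on every compact subset of $[0,\infty)$.

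For (i), fix $k$ and write, for $\gamma>0$,
\[
p_{k,\gamma} \;=\; \int_{\R} f_Y(y)\bigl[\Phi\bigl(\tfrac{r_{k+1}-y}{\gamma}\bigr)-\Phi\bigl(\tfrac{r_{k}-y}{\gamma}\bigr)\bigr]\,\mathrm{d}y,
\]
where $\Phi$ denotes the standard normal c.d.f. and $r_a=a/2^M$. The integrand is bounded by $f_Y(y)$ and is continuous in $\gamma>0$ for each $y$, so dominated convergence yields continuity on $(0,\infty)$. For the endpoint $\gamma=0$, use that $\gamma N\to 0$ almost surely, so $Y+\gamma N\to Y$ almost surely and hence in distribution; since $Y$ has a density, the $P_Y$-measure of $\{r_k,r_{k+1}\}$ is zero, so the Portmanteau theorem gives $p_{k,\gamma}\to p_{k,0}$ as $\gamma\downarrow 0$.

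For (ii), fix a compact interval $[0,\Gamma]\subset[0,\infty)$. The previous lemma provides the uniform estimate
\[
p_{k,\gamma} \;\leq\; u_k \;:=\; \frac{C\,2^{(p-1)M+p}}{|k|^p}\;+\;\frac{\Gamma\,2^{M+1}}{|k|\sqrt{2\pi}}\,e^{-k^2/2^{2M+3}\Gamma^2},\qquad \gamma\in[0,\Gamma],\ k\neq 0,
\]
and $u_k \to 0$ as $|k|\to\infty$. Since $\phi$ is continuous on $[0,1]$ and monotone increasing on $[0,1/e]$, for all $|k|$ large enough (uniformly in $\gamma\in[0,\Gamma]$) we have $\phi(p_{k,\gamma})\leq \phi(u_k)$. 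The polynomial term dominates $u_k$, so $\phi(u_k)=O\bigl(\log|k|/|k|^p\bigr)$, which is summable because $p>1$. Together with the finitely many remaining terms being uniformly bounded, the Weierstrass $M$-test yields uniform convergence of $\sum_k \phi(p_{k,\gamma})$ on $[0,\Gamma]$.

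Combining (i) and (ii): each partial sum $\sum_{|k|\leq K}\phi(p_{k,\gamma})$ is continuous in $\gamma$, and the sequence of partial sums converges uniformly on compact sets to $H(\mathcal{Q}_M(Y+\gamma N))$. The uniform limit of continuous functions is continuous, so $\gamma\mapsto H(\mathcal{Q}_M(Y+\gamma N))$ is continuous on $[0,\infty)$. The main technical point is the uniform tail bound in (ii), for which the two-piece estimate from the preceding lemma is essential; continuity at $\gamma=0$ is the only subtle pointwise question and is settled by the Portmanteau argument above.
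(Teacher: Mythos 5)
Your proof is correct and follows essentially the same strategy as the paper's: termwise continuity of $\gamma\mapsto p_{k,\gamma}$ combined with a $\gamma$-uniform, summable tail bound on $\sum_k \phi(p_{k,\gamma})$ coming from the preceding lemma and the monotonicity of $\phi(x)=-x\log x$ near zero. Your variations are cosmetic (the CDF integral representation, framing the tail control as a Weierstrass $M$-test rather than a $3\eps$ argument, noting that $\gamma\mapsto \gamma e^{-c/\gamma^2}$ is increasing so $\gamma=\Gamma$ gives the worst case), with one genuine improvement: at $\gamma_0=0$ the paper's dominating function $e^{-z^2/2(\gamma^\ast)^2}/\sqrt{2\pi\gamma_\ast^2}$ degenerates as $\gamma_\ast\to 0$ and the paper merely asserts the limit carries over, whereas your Portmanteau argument (a.s.\ convergence of $Y+\gamma N\to Y$ together with $P_Y(\{r_k,r_{k+1}\})=0$) settles that endpoint cleanly.
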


\begin{proof}
Let $(\gamma_n)_{n\geq1}$ be a sequence of non-negative real numbers converging to $\gamma_0$. First, we will prove continuity at $\gamma_0>0$. Without loss of generality, assume that $\gamma_n>0$ for all $n\in\N$. Define $\gamma_\ast=\inf\{\gamma_n | n\geq1\}$ and $\gamma^\ast=\sup\{\gamma_n | n\geq1\}$. Clearly $0<\gamma_\ast\leq\gamma^\ast<\infty$. Recall that
\begin{equation*}
p_{k,\gamma} = \int_\R \frac{e^{-z^2/2\gamma^2}}{\sqrt{2\pi\gamma^2}} \Pr\left(Y\in\left[\frac{k}{2^M},\frac{k+1}{2^M}\right)-z\right) \text{d} z.
\end{equation*}
Since, for all $n\in\N$ and $z\in\R$,
\begin{align*}
\frac{e^{-z^2/2\gamma_n^2}}{\sqrt{2\pi\gamma_n^2}} \Pr\left(Y\in\left[\frac{k}{2^M},\frac{k+1}{2^M}\right)-z\right) &\leq \frac{e^{-z^2/2(\gamma^\ast)^2}}{\sqrt{2\pi\gamma_\ast^2}},
\end{align*}
the dominated convergence theorem implies that
\begin{equation}
\label{eq:limprobs}
\lim_{n\to\infty} p_{k,\gamma_n} = p_{k,\gamma_0}.
\end{equation}

The previous lemma implies that for all $n\geq0$ and $|k|>0$,
\begin{equation*}
p_{k,\gamma_n} \leq \frac{C2^{(p-1)M+p}}{k^p} + \frac{\gamma_n 2^{M+1}}{k\sqrt{2\pi}} e^{-k^2/2^{2M+3}\gamma_n^2}.
\end{equation*}
Thus, for $k$ large enough, $\displaystyle p_{k,\gamma_n} \leq \frac{A}{k^p}$ for a suitable positive constant $A$ that does not depend on $n$. Since the function $x\mapsto -x\log(x)$ is increasing in $[0,1/2]$, there exists $K'>0$ such that for $|k|>K'$
\begin{equation*}
- p_{k,\gamma_n}\log(p_{k,\gamma_n}) \leq \frac{A}{k^p}\log(A^{-1}k^p).
\end{equation*}
Since $\displaystyle \sum_{|k|>K'} \frac{A}{k^p}\log(A^{-1}k^p) < \infty$, for any $\epsilon>0$ there exists $K_\epsilon$ such that
\begin{equation*}
\sum_{|k|>K_\epsilon} \frac{A}{k^p}\log(A^{-1}k^p) < \epsilon.
\end{equation*}
In particular, for all $n\geq0$,
\begin{align*}
H(\Q(Y+\gamma_n N)) - \sum_{|k|\leq K_\epsilon} - p_{k,\gamma_n}\log(p_{k,\gamma_n}) &= \sum_{|k|>K_\epsilon} - p_{k,\gamma_n}\log(p_{k,\gamma_n}) < \epsilon.
\end{align*}
Therefore, for all $n\geq1$,
\begin{align*}
&\left| H(\Q(Y+\gamma_n N)) - H(\Q(Y+\gamma_0 N)) \right|\\
&\leq \sum_{|k|>K_\epsilon} - p_{k,\gamma_n}\log(p_{k,\gamma_n}) + \left| \sum_{|k|\leq K_\epsilon} p_{k,\gamma_0}\log(p_{k,\gamma_0}) - p_{k,\gamma_n}\log(p_{k,\gamma_n}) \right| + \sum_{|k|>K_\epsilon} - p_{k,\gamma_0}\log(p_{k,\gamma_0})\\
&\leq \epsilon + \left| \sum_{|k|\leq K_\epsilon} p_{k,\gamma_0}\log(p_{k,\gamma_0}) - p_{k,\gamma_n}\log(p_{k,\gamma_n}) \right| + \epsilon.
\end{align*}
By continuity of the function $x\mapsto -x\log(x)$ on $[0,1]$ and equation (\ref{eq:limprobs}), we conclude that
\begin{equation*}
\limsup_{n\to\infty} \left| H(\Q(Y+\gamma_n N)) - H(\Q(Y+\gamma_0 N)) \right| \leq 3\epsilon.
\end{equation*}
Since $\epsilon$ is arbitrary,
\begin{equation*}
\lim_{n\to\infty} H(\Q(Y+\gamma_n N)) = H(\Q(Y+\gamma_0 N)),
\end{equation*}
as we wanted to prove.

To prove continuity at $\gamma_0=0$, observe that equation (\ref{eq:limprobs}) holds in this case as well. The rest is analogous to the case $\gamma_0>0$.
\end{proof}

\begin{lemma}
The functions $\gamma\mapsto I(X;Z_\gamma^M)$ and $\gamma\mapsto I(Y;Z_\gamma^M)$ are continuous for each $M\in \N$.
\end{lemma}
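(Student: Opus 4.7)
The plan is to decompose each mutual information as the difference of an unconditional entropy and a conditional one, invoke the preceding lemma on the unconditional term, and handle the conditional term by a pointwise application of the same lemma followed by dominated convergence.

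First, write $I(Y;Z_\gamma^M) = H(Z_\gamma^M) - H(Z_\gamma^M\mid Y)$ and $I(X;Z_\gamma^M) = H(Z_\gamma^M) - H(Z_\gamma^M\mid X)$. The term $H(Z_\gamma^M)=H(\mathcal{Q}_M(Y+\gamma N))$ is continuous in $\gamma$ by the previous lemma, since assumption (a) gives $f_Y(y)\leq C_1|y|^{-p}$. For the conditional entropies, note that $H(Z_\gamma^M\mid X)=\int f_X(x)\,H(\mathcal{Q}_M(Y+\gamma N)\mid X=x)\,dx$ and $H(Z_\gamma^M\mid Y)=\int f_Y(y)\,H(\mathcal{Q}_M(y+\gamma N))\,dy$. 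Assumption (a) supplies $f_{Y\mid X}(y\mid x)\leq C_2(x)|y|^{-p}$, so for each fixed $x$ the previous lemma applies verbatim to the conditional distribution of $Y$ given $X=x$ (with $C_1$ replaced by $C_2(x)$), giving pointwise continuity of $\gamma\mapsto H(\mathcal{Q}_M(Y+\gamma N)\mid X=x)$. Similarly, for fixed $y$ the conditional law of $Y+\gamma N$ given $Y=y$ is $N(y,\gamma^2)$, whose density has even stronger (Gaussian) tail decay than required, so the same argument yields continuity of $\gamma\mapsto H(\mathcal{Q}_M(y+\gamma N))$ for each $y$.

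To exchange limits and integrals, fix $\gamma_0\geq 0$ and a sequence $\gamma_n\to\gamma_0$, and set $\gamma_\ast=\inf_n\gamma_n$, $\gamma^\ast=\sup_n\gamma_n$ (treating $\gamma_0=0$ separately when needed, using assumption (d)). The tail estimates developed in the proof of the previous lemma already provide, for each $\epsilon>0$, a cutoff $K_\epsilon$ such that $\sum_{|k|>K_\epsilon}-p_{k,\gamma}\log p_{k,\gamma}<\epsilon$ uniformly for $\gamma\in[\gamma_\ast,\gamma^\ast]$, with the constant governing this tail depending only on $C_1$ (for the unconditional case), on $C_2(x)$ (for the $X$-conditional case), or on the Gaussian tails (for the $Y$-conditional case). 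Combined with the trivial bound $\sum_{|k|\leq K_\epsilon}-p_{k,\gamma}\log p_{k,\gamma}\leq \log(2K_\epsilon+1)$, this produces constant (and hence integrable) dominating functions $x\mapsto \sup_{\gamma\in[\gamma_\ast,\gamma^\ast]}H(\mathcal{Q}_M(Y+\gamma N)\mid X=x)$ and $y\mapsto \sup_{\gamma\in[\gamma_\ast,\gamma^\ast]}H(\mathcal{Q}_M(y+\gamma N))$, using crucially that $C_2$ is bounded by assumption (a). The dominated convergence theorem then lets us pass the limit $\gamma_n\to\gamma_0$ through both integrals, establishing continuity of $\gamma\mapsto H(Z_\gamma^M\mid X)$ and $\gamma\mapsto H(Z_\gamma^M\mid Y)$, and thereby of $\gamma\mapsto I(X;Z_\gamma^M)$ and $\gamma\mapsto I(Y;Z_\gamma^M)$.

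The main obstacle is verifying the uniform-in-$(x,\gamma)$ control needed for dominated convergence in Step~3; this is exactly where the boundedness of $C_2$ in assumption (a) enters, as it is what turns the pointwise application of the preceding lemma into a genuinely uniform tail estimate. The $Y$-conditional case is easier because the conditional distribution is Gaussian and tail bounds are independent of $y$.
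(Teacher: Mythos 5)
Your approach is essentially the paper's: decompose each mutual information into an unconditional entropy (handled by the preceding lemma) minus a conditional one, and pass the $\gamma$-limit through the $x$- or $y$-integral via dominated convergence. The paper's own proof is a single sentence asserting that $H(\Q_M(Y+\gamma N)\mid X=x)$ and $H(\Q_M(Y+\gamma N)\mid Y=y)$ are "bounded by $M$" and then invoking DCT; as stated that bound is false, since $\Q_M$ has countably infinite range and $H(\Q_M(y+\gamma N))$ grows without bound as $\gamma\to\infty$. Your version fixes this correctly: you restrict $\gamma$ to a compact window $[\gamma_\ast,\gamma^\ast]$ around $\gamma_0$ and extract from the preceding lemma's tail estimate a finite, $x$-uniform dominating constant, with the boundedness of $C_2$ in assumption (a) doing exactly the work you say it does. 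One small inaccuracy: for the $Y$-conditional term you claim the Gaussian tail bounds are "independent of $y$," but if you try to plug the conditional density of $Y+\gamma N$ given $Y=y$, namely $N(y,\gamma^2)$, into the hypothesis $f(z)\leq C|z|^{-p}$ of the preceding lemma, the required constant $C$ grows with $|y|$ because that density peaks at $z=y$ rather than at the origin. The needed $y$-uniform bound on $\sup_{\gamma\in[\gamma_\ast,\gamma^\ast]}H(\Q_M(y+\gamma N))$ is better justified by noting that $H(\Q_M(y+\gamma N))$ depends on $y$ only through $y\bmod 2^{-M}$, a compact parameter, and is jointly continuous in $(y\bmod 2^{-M},\gamma)$; the conclusion you draw is correct, just not for the stated reason.
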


\begin{proof}
Since $H(\Q_M(Y+\gamma N)|Y=y)$ and $H(\Q_M(Y+\gamma N)|X=x)$ for $x, y\in\R$ are bounded by $M$, and $f_{Y|X}(y|x)$ satisfies assumption (b), the conclusion follows from the dominated convergence theorem.
\end{proof}

\begin{proof}[{\bf Proof of Theorem \ref{Thm:gMtog}}]
For every $M\in\N$, let $\Gamma_\epsilon^M:=\{\gamma\geq0 | I(X;Z_\gamma^M)\leq\epsilon\}$. The Markov chain $X\to Y\to Z_\gamma \to Z_\gamma^{M+1}\to Z_\gamma^M$ and the data processing inequality imply that
\begin{equation*}
I(X;Z_\gamma) \geq I(X;Z_{\gamma}^{M+1}) \geq I(X;Z_{\gamma}^M),
\end{equation*}
and, in particular,
\begin{equation*}
\epsilon = I(X;Z_{\gamma_\epsilon}) \geq I(X;Z_{\gamma_\epsilon}^{M+1}) \geq I(X;Z_{\gamma_\epsilon}^M),
\end{equation*}
where $\gamma_\epsilon$ is as defined in the proof of Theorem \ref{Thm:gContinuous}. This implies then that
\begin{equation}
\label{eq:gammaepsGamma}
\gamma_\epsilon \in \Gamma_\epsilon^{M+1} \subset \Gamma_\epsilon^M,
\end{equation}
and thus
\begin{equation*}
I(Y;Z_{\gamma_{\epsilon}}^M) \leq g_{\epsilon,M}(X;Y).
\end{equation*}
Taking limits in both sides, Lemma \ref{Lemma:ContinuityMutualInformations} implies
\begin{equation}
\label{eq:gMliminf}
g_{\epsilon}(X;Y) = I(Y;Z_{\gamma_{\epsilon}}) \leq \liminf_{M\to\infty} g_{\epsilon,M}(X;Y).
\end{equation}

Observe that
\begin{align}
\nonumber g_{\epsilon,M}(X;Y) &= \sup_{\gamma\in\Gamma_\epsilon^M} I(Y;Z_\gamma^M)\\
\nonumber &\leq \sup_{\gamma\in\Gamma_\epsilon^M} I(Y;Z_\gamma)\\
\label{eq:Inq} &= I(Y;Z_{\gamma_{\epsilon,min}^M}),
\end{align}
where inequality follows from Markovity and $\gamma_{\epsilon,\min}^M := \inf_{\Gamma_\epsilon^M}\gamma$. By equation (\ref{eq:gammaepsGamma}), $\gamma_\epsilon\in\Gamma_\epsilon^{M+1} \subset \Gamma_\epsilon^M$ and in particular $\gamma_{\epsilon,\min}^M \leq \gamma_{\epsilon,\min}^{M+1}\leq\gamma_\epsilon$. Thus, $\{\gamma_{\eps,\min}^M\}$ is an increasing sequence in $M$ and bounded from above and, hence, has a limit. Let $\displaystyle \gamma_{\epsilon,\min}=\lim_{M\to\infty} \gamma_{\epsilon,\min}^M$.  Clearly
\begin{equation}
\label{eq:MainThmleq}
\gamma_{\epsilon,\min} \leq \gamma_\epsilon.
\end{equation}

By the previous lemma we know that $I(X;Z_\gamma^M)$ is continuous, so $\Gamma_\epsilon^M$ is closed for all $M\in\N$. Thus, we have that $\gamma_{\epsilon,\min}^M = \min_{\Gamma_\epsilon^M}\gamma$ and in particular $\gamma_{\epsilon,\min}^M\in\Gamma_\epsilon^M$. By the inclusion $\Gamma_\epsilon^{M+1} \subset \Gamma_\epsilon^M$, we have then that $\gamma_{\epsilon,\min}^{M+n}\in\Gamma_\epsilon^M$ for all $n\in\N$. By closedness of $\Gamma_\epsilon^M$ we have then that $\gamma_{\epsilon,\min}\in\Gamma_\epsilon^M$ for all $M\in\N$. In particular,
\begin{equation*}
I(X;Z_{\gamma_{\epsilon,\min}}^M) \leq \epsilon,
\end{equation*}
for all $M\in\N$. By Lemma \ref{Lemma:ContinuityMutualInformations},
\begin{equation*}
I(X;Z_{\gamma_{\epsilon,\min}}) \leq \epsilon = I(X;Z_{\gamma_\epsilon}),
\end{equation*}
and by the monotonicity of $\gamma\mapsto I(X;Z_\gamma)$, we obtain that $\gamma_\epsilon \leq \gamma_{\epsilon,\min}$. Combining the previous inequality with (\ref{eq:MainThmleq}) we conclude that $\gamma_{\epsilon,\min}=\gamma_{\epsilon}$. Taking limits in the inequality (\ref{eq:Inq})
\begin{equation*}
\limsup_{M\to\infty} g_{\epsilon,M}(X;Y) \leq \limsup_{M\to\infty} I(Y;Z_{\gamma_{\epsilon,\min}^M}) = I(Y;Z_{\gamma_{\epsilon,\min}}).
\end{equation*}
Plugging $\gamma_{\epsilon,\min}=\gamma_{\epsilon}$ in above we conclude that
\begin{equation*}
\limsup_{M\to\infty} g_{\epsilon,M}(X;Y) \leq I(Y;Z_{\gamma_\epsilon}) = g_\epsilon(X;Y)
\end{equation*}
and therefore $\displaystyle \lim_{M\to\infty} g_{\epsilon,M}(X;Y) = g_\epsilon(X;Y)$.
\end{proof}

\end{document}